\documentclass[11pt]{article}

\usepackage{iftex}
\ifPDFTeX
  \usepackage[utf8]{inputenc}
  \usepackage[T1]{fontenc}
\fi

\usepackage[margin=1in]{geometry}
\usepackage{microtype}

\usepackage{amsmath,amssymb,amsfonts,amsthm}
\usepackage{mathtools}
\usepackage{braket}
\usepackage{stmaryrd}
\usepackage{siunitx}
\usepackage{booktabs}
\usepackage{xcolor} % Allows for more color options
\usepackage{aliascnt}
\usepackage[
    colorlinks=true, 
    citecolor=blue, 
    linkcolor=violet
]{hyperref}

\usepackage{cleveref}

\usepackage{etoolbox} % for \ifblank
\usepackage{xparse}   % for \NewDocumentCommand
\usepackage{graphicx}
\usepackage{authblk}

\newcommand{\todo}[1]{}

\newtheorem{theorem}{Theorem}[section]

\newaliascnt{lemma}{theorem}
\newtheorem{lemma}[lemma]{Lemma}
\aliascntresetthe{lemma}

\newaliascnt{corollary}{theorem}
\newtheorem{corollary}[corollary]{Corollary}
\aliascntresetthe{corollary}

\newaliascnt{claim}{theorem}

\aliascntresetthe{claim}

\newaliascnt{construction}{theorem}
\newtheorem{construction}[construction]{Construction}
\aliascntresetthe{construction}

\theoremstyle{definition}
\newaliascnt{definition}{theorem}
\newtheorem{definition}[definition]{Definition}
\aliascntresetthe{definition}

\crefname{theorem}{theorem}{theorems}
\Crefname{theorem}{Theorem}{Theorems}

\crefname{lemma}{lemma}{lemmas}
\Crefname{lemma}{Lemma}{Lemmas}

\crefname{corollary}{corollary}{corollaries}
\Crefname{corollary}{Corollary}{Corollaries}

\crefname{claim}{claim}{claims}
\Crefname{claim}{Claim}{Claims}

\crefname{construction}{construction}{constructions}
\Crefname{construction}{Construction}{Constructions}

\crefname{definition}{definition}{definitions}
\Crefname{definition}{Definition}{Definitions}

\crefname{appendix}{appendix}{appendices}
\Crefname{appendix}{Appendix}{Appendices}

\providecommand{\doi}[1]{\href{https://doi.org/#1}{\nolinkurl{https://doi.org/#1}}}

\newcommand{\Z}{\mathbb{Z}}
\newcommand{\C}{\mathbb{C}}

\newcommand{\bit}{\{0,1\}}
\newcommand{\Mod}[1]{\ (\mathrm{mod}\ #1)}
\DeclareMathOperator{\poly}{poly}
\DeclareMathOperator{\negl}{negl}

\renewcommand{\vec}[1]{\mathbf{#1}}
\renewcommand{\Mod}[1]{\ (\mathrm{mod}\ #1)}

\newcommand{\br}[1]{\left( #1 \right)}

\renewcommand{\exp}[1]{\text{exp}\br{#1}} % Exponentiation

\newcommand{\bounds}[2]{\bigg\rvert_{#1}^{#2}}
\newcommand{\boudns}[2]{\bounds} % To correct my silly spelling errors

\renewcommand{\a}{\alpha}

\renewcommand{\d}{\delta}

\newcommand{\e}{\epsilon}

\newcommand{\m}{\mu}

\newcommand{\Th}{\Theta}
\newcommand{\w}{\omega}
\newcommand{\W}{\Omega}

\newcommand{\CA}{\mathcal{A}}
\newcommand{\CB}{\mathcal{B}}
\newcommand{\CC}{\mathcal{C}}
\newcommand{\CD}{\mathcal{D}}

\newcommand{\CF}{\mathcal{F}}

\newcommand{\CI}{\mathcal{I}}

\newcommand{\CO}{\mathcal{O}}
\newcommand{\CP}{\mathcal{P}}

\newcommand{\CR}{\mathcal{R}}
\newcommand{\CS}{\mathcal{S}}

\newcommand{\CW}{\mathcal{W}}
\newcommand{\Ber}{\mathsf{Ber}}
\newcommand{\Bin}{\mathsf{Bin}}

\NewDocumentCommand{\lsn}{ O{} O{} O{} }{%
  \ensuremath{%
    \mathsf{LSN}%
    \ifblank{#2}{}{^{#2}}% 
    \ifblank{#3}{}{_{#3}}  
    \ifblank{#1}{}{\!\left(#1\right)}%   
  }%
}

\NewDocumentCommand{\slsn}{ O{} O{} O{} }{%
  \ensuremath{%
    \mathsf{stateLSN}%
    \ifblank{#2}{}{^{#2}}% 
    \ifblank{#3}{}{_{#3}}  
    \ifblank{#1}{}{\!\left(#1\right)}%   
  }%
}

\NewDocumentCommand{\lpn}{ O{} }{%
  \ensuremath{%
    \mathsf{LPN}%
    \ifblank{#1}{}{\!\left(#1\right)}%   
  }%
}

\NewDocumentCommand{\symplpn}{ O{} }{%
  \ensuremath{%
    \mathsf{sympLPN}%
    \ifblank{#1}{}{\!\left(#1\right)}%   
  }%
}

\NewDocumentCommand{\owff}{ O{} }{%
  \ensuremath{%
    \mathsf{OWFF}%
    \ifblank{#1}{}{\!\left(#1\right)}%   
  }%
}

\newcommand{\sk}{\mathsf{sk}}
\newcommand{\pk}{\mathsf{pk}}
\newcommand{\gen}{\mathsf{Gen}}
\newcommand{\sample}{\mathsf{Sample}}
\newcommand{\eval}{\mathsf{Eval}}
\newcommand{\enc}{\mathsf{Enc}}
\newcommand{\dec}{\mathsf{Dec}}

\DeclareMathOperator{\Symp}{Symp}

\makeatletter
\renewcommand{\paragraph}{%
  \@startsection{paragraph}{4}%
  {\z@}{2.25ex \@plus 1ex \@minus .2ex}{-1em}%
  {\normalfont\normalsize\bfseries}%
}
\makeatother
\title{Post-Quantum Cryptography from Quantum Stabilizer Decoding}

\author[1]{Jonathan Z. Lu\footnote{\href{mailto:lujz@mit.edu}{\texttt{lujz@mit.edu}}}}
\author[2]{Alexander Poremba\footnote{\href{mailto:poremba@bu.edu}{\texttt{poremba@bu.edu}}}}
\author[3]{Yihui Quek\footnote{\href{mailto:yihui.quek@epfl.ch}{\texttt{yihui.quek@epfl.ch}}}}
\author[4]{Akshar Ramkumar\footnote{\href{mailto:aramkuma@caltech.edu}{\texttt{aramkuma@caltech.edu}}}}

\affil[1]{Massachusetts Institute of Technology, Cambridge, MA}
\affil[2]{Boston University, Boston, MA} 
\affil[3]{École Polytechnique Fédérale de Lausanne, Lausanne, Switzerland}
\affil[4]{California Institute of Technology, Pasadena, CA}

\date{\today}

\begin{document}

\maketitle

\begin{abstract}
Post-quantum cryptography currently rests on a small number of hardness assumptions, posing significant risks should any one of them be compromised.
This vulnerability motivates the search for new and cryptographically versatile assumptions that make a convincing case for quantum hardness.

\qquad In this work, we argue that decoding random quantum stabilizer codes---a quantum analog of the well-studied $\lpn$ problem---is an excellent candidate.
This task occupies a unique middle ground: it is inherently native to \emph{quantum} computation, yet admits an equivalent formulation with purely \emph{classical} input and output, as recently shown by Khesin \textit{et al.} (STOC~'26).
We prove that the average-case hardness of quantum stabilizer decoding implies the core primitives of classical \textsf{Cryptomania}, including public-key encryption (\textsf{PKE}) and oblivious transfer (\textsf{OT}), as well as one-way functions.
Our constructions are moreover practical: our \textsf{PKE} scheme achieves essentially the same efficiency as state-of-the-art $\lpn$-based \textsf{PKE}, and our \textsf{OT} is round-optimal.
We also provide substantial evidence that stabilizer decoding does not reduce to $\lpn$, suggesting that the former problem constitutes a genuinely new post-quantum assumption.

\qquad Our primary technical contributions are twofold.
First, we give a reduction from random quantum stabilizer decoding to an average-case problem closely resembling $\lpn$, but which is equipped with additional symplectic algebraic structure.
While this structure is essential to the quantum nature of the problem, it raises significant barriers to cryptographic security reductions.
Second, we develop a new suit of scrambling techniques for such structured linear spaces, and use them to produce rigorous security proofs for all of our constructions.
\end{abstract}

\section{Introduction}

The remarkable success of modern cryptography rests on a surprisingly small number of computational hardness assumptions. 
Over the past four decades, these assumptions—ranging from the difficulty of factoring~\cite{10.1145/359340.359342} to the hardness of discrete logarithms~\cite{1055638,10.1145/359460.359473} and various lattice problems~\cite{Hoffstein1998NTRU,repec:spr:sprchp:978-3-540-88702-7_5}—have underpinned the construction of essentially all known cryptographic primitives.
The advent of quantum computing, however, has forced a fundamental re-evaluation of this foundation. 
Shor's algorithm~\cite{365700} and its variants~\cite{regev2024efficientquantumfactoringalgorithm} show that many of the assumptions that underlie classical public-key cryptography are vulnerable to efficient quantum attacks, rendering large swathes of existing cryptographic infrastructure insecure in a quantum world.

In response, the cryptographic community has turned to an even smaller suit of candidate \emph{post-quantum} assumptions~\cite{bernstein2017postquantum,214441,1237791}, primarily in the world of lattices~\cite{repec:spr:sprchp:978-3-540-88702-7_5,10.1145/1568318.1568324}, codes~\cite{pietrzak2012lpn,BCL19}, multivariate polynomial systems~\cite{921316} and isogenies of elliptic curves~\cite{cryptoeprint:2018/383}. 
Among these assumptions, the \emph{Learning with Errors} ($\mathsf{LWE}$) problem~\cite{10.1145/1568318.1568324} and its binary variant, \emph{Learning Parity with Noise} ($\lpn$)~\cite{BFKL93,pietrzak2012lpn}, have emerged as a central basis of hardness. 
These problems enjoy worst-case to average-case reductions~\cite{10.1145/1568318.1568324,cryptoeprint:2018/279}, a rich algebraic structure, and a remarkable versatility in supporting a wide array of cryptographic constructions, including public-key encryption~\cite{10.1145/1568318.1568324,Alekhnovich03}, digital signatures~\cite{pietrzak2012lpn}, oblivious transfer~\cite{cryptoeprint:2019/448}, general secure multi-party computation~\cite{cryptoeprint:2019/448}, collision-resistant hashing~\cite{applebaum_et_al:LIPIcs.ITCS.2017.7} and, in some cases, even homomorphic encryption~\cite{10.5555/1834954,doi:10.1137/120868669,cryptoeprint:2024/1760}.

Yet, despite their prominence, our increasing reliance on $\mathsf{LWE}$ and $\mathsf{LPN}$ as a foundation for post-quantum security is a reason for concern. The history of cryptography has taught us that even long-standing hardness assumptions are susceptible to unexpected algorithmic breakthroughs, as in the case of factoring and discrete logarithms~\cite{365700}. 
In fact, even ``quantum-safe'' assumptions have recently experienced devastating 
classical attacks, as in the case of isogenies~\cite{cryptoeprint:2022/1038} and multivariate quadratics~\cite{10.1007/978-3-031-15979-4_16} that were initially believed to be secure.
While $\mathsf{LWE}$ and $\mathsf{LPN}$ have so far resisted quantum attacks, there is reason to be cautious about
their long-term viability: lattice problems are intimately connected to the dihedral hidden subgroup problem~\cite{regev2003quantumcomputationlatticeproblems}, which is known to admit subexponential-time quantum algorithms~\cite{kuperberg2004subexponentialtimequantumalgorithmdihedral}; moreover, recent years have also seen renewed efforts targeting lattices~\cite{eldar2016efficientquantumalgorithmvariant,cryptoeprint:2024/555,eldar2022efficientquantumalgorithmlattice,cryptography4010010,PhysRevA.99.032314} and codes~\cite{eldar2023efficientquantumdecoderprimepower} via quantum attacks that, while not yet fully successful, suggest that the landscape is far from settled. 
This concern is further amplified by the fact that most $\mathsf{LWE}$/$\mathsf{LPN}$-based constructions operate under special parameter regimes that are much less understood~\cite{damgaard2012practical,applebaum_et_al:LIPIcs.ITCS.2017.7}, and where worst-to-average-case reductions often
do not apply~\cite{cryptoeprint:2018/279,cryptoeprint:2020/870}. 
These vulnerabilities have recently motivated the search for alternative
$\mathsf{LWE}$/$\mathsf{LPN}$-like noisy linear-algebraic assumptions which are less susceptible to attacks than existing ones, and yet still suffice for public-key encryption~\cite{10.1007/978-3-031-91124-8_3}.

More broadly, all existing post-quantum assumptions used in cryptography today are rooted in manifestly \emph{classical} problems in mathematics---lattices, codes, or algebraic structures---whose relation to quantum computation is far from direct. 
Because of this disconnect, it is unlikely that even a major breakthrough undermining all of today’s leading post-quantum assumptions would have a substantial impact on fundamental tasks in quantum information science. 
Indeed, as recent work~\cite{bostanci2025unitarycomplexityuhlmanntransformation,https://doi.org/10.4230/lipics.tqc.2021.2} suggests, central quantum tasks such as compressing quantum information, decoding noisy quantum channels and other local entanglement transformations lie outside of classical cryptography altogether, and may be hard even if $\mathsf{P}=\mathsf{NP}$. 
This disparity has led to the design of  ``fully quantum'' cryptography, collectively known as \emph{MicroCrypt}~\cite{microcryptzoo,ananth2022cryptographypseudorandomquantumstates,morimae_et_al:LIPIcs.TQC.2024.4,metger2024simpleconstructionslineardepthtdesigns,bostanci_et_al:LIPIcs.TQC.2025.9},
which is inherently non-classical---typically relying on quantum communication between multiple quantum parties---and which may exist even if one-way functions do not. 
Even if we are willing to believe that $\mathsf{P} \neq \mathsf{NP}$, this begs the question of whether the foundations of post-quantum cryptography, likewise, may benefit from a fundamental re-evaluation:
\begin{quote}
\emph{Can we also base \underline{classical cryptography} on hardness assumptions which are native to quantum information processing?}
\end{quote}

By the usual win-win premise of provable cryptography, any algorithmic advances on such an assumption would likely also have  far-reaching implications for the foundations of quantum information science.

\subsection{Our approach}

In this work, we propose an affirmative answer to the aforementioned question by exploring the cryptographic potential of a natural quantum computational assumption: the hardness of decoding \emph{random} quantum stabilizer codes.

Quantum stabilizer codes are among the most central objects in all of quantum information science.
Not only do they form the backbone of all of quantum error correction and quantum fault-tolerance~\cite{gottesman2024qecc,ErrorCorrectionZoo}, but they are also fundamental in the theory of quantum communication~\cite{Graeme_thesis,Wilde_2013}, entanglement distillation~\cite{Bennett_1996,Devetak_2005,Wilde_2010}, quantum authentication~\cite{1181969,dulek2018quantumciphertextauthenticationkey,10.1007/978-3-030-45727-3_25}, the interactive verification of quantum computations~\cite{aharonov2017interactiveproofsquantumcomputations,Broadbent_2018}, and even in quantum gravity and black hole physics~\cite{Hayden_Preskill_2007,yoshida2017efficientdecodinghaydenpreskillprotocol,Harlow_2013,poremba_et_al:LIPIcs.ITCS.2026.109}.

From a complexity-theoretic perspective, stabilizer decoding is a natural quantum analog of classical decoding problems, such as the \emph{nearest codeword problem}~\cite{cryptoeprint:2018/279} or the \emph{syndrome decoding problem}~\cite{BMT78} for linear codes. In the \emph{worst} case, the quantum decoding problem appears strictly harder than its classical counterpart: optimal quantum stabilizer decoding is $\#\mathsf{P}$-complete~\cite{iyer2015hardness}, whereas the corresponding classical decoding problem is merely $\mathsf{NP}$-complete~\cite{BMT78}.
This disparity is due to the quantum-mechanical structure of the problem; indeed, the general quantum decoding problem is much more subtle as the input comes in the form of a \emph{quantum} state---it consists of a noisy quantum codeword---and the task is to recover the encoded logical information. 

At first glance, therefore, it may seem counterintuitive to base classical cryptography on a quantum decoding problem. 
After all, stabilizer decoding appears to involve inherently quantum objects—quantum codewords and quantum noise. 
However, recent work~\cite{khesin2025average} has revealed a surprising equivalence: average-case quantum stabilizer decoding is equivalent (under polynomial-time \emph{quantum} reductions) to a purely \emph{classical} average-case problem, i.e. one that involves only classical inputs and classical outputs, yet retains the essential quantum-mechanical structure and difficulty of the original problem.

This equivalence suddenly opens the door to a tantalizing possibility: \emph{Can we base post-quantum cryptography on the hardness of decoding quantum codes?}

\paragraph{Cryptomania Meets Quantum Error Correction.}

Driven by this possibility, we show that the \emph{average-case} hardness of quantum stabilizer decoding indeed suffices to construct core primitives of \emph{classical} \textsf{Cryptomania}~\cite{514853}, including:
\begin{itemize}
    \item public-key encryption ($\mathsf{PKE}$),
    \item oblivious transfer ($\mathsf{OT}$), as well as
    \item one-way functions ($\mathsf{OWF}$).
\end{itemize}
Since these primitives are known to imply symmetric encryption, asymmetric encryption, and general secure multi-party computation, this establishes stabilizer decoding as a versatile foundation for classical cryptography.

Importantly, our constructions are \emph{efficient and near-optimal}, achieving essentially the same performance as state-of-the-art $\mathsf{LPN}$-based schemes, up to small constant factors.

\begin{figure}[t]
\centering

\renewcommand{\arraystretch}
{1.2} % moderate row spacing
\setlength{\tabcolsep}{6pt}       % base padding

{\begin{tabular}{@{} l @{\hspace{8pt}} l @{\hspace{20pt}} l @{}}
\toprule
 & Learning Stabilizers with Noise & Learning Parity with Noise \\
\midrule
\textbf{Input:} &
\(\big(\vec C \sim \mathcal{C}_n,\; \vec E\,\vec C \ket{0^{n-k},\psi}\big)\) &
\(\big(\vec A \sim \mathbb{Z}_2^{n\times k},\; \vec A\vec x + \vec e \Mod{2}\big)\) \\
\textbf{Noise:} &
Depolarizing noise \(\vec E \sim \mathcal{D}_p^{\otimes n}\) &
Bernoulli noise \(\vec e \sim \Ber_p^{\otimes n}\) \\
\textbf{Task:} &
Recover $k$-qubit Haar state $\ket{\psi}$ &
Recover the string \(\vec x \sim \mathbb{Z}_2^k\) \\
\bottomrule
\end{tabular}}
\caption{{Comparison between (the \emph{state} variant of) $\lsn$~\cite{khesin2025average,poremba2024learning} and $\lpn$~\cite{BFKL93}. In both cases, the input features a classical description of a \emph{random} code; in the case of $\lsn$, it is a random \emph{Clifford} encoding $\vec C \sim \mathcal{C}_n$ of an $[[n,k]]$ quantum stabilizer code, whereas in the case of $\lpn$ it is given by a random \emph{generator matrix} $\vec A \sim \mathbb{Z}_2^{n\times k}$ of a classical $[n,k]$ linear code. The input further consists of a noisy codeword; in the case of $\lsn$, it is in the form of a random quantum codeword, whereas for $\lpn$ it consists of a random classical codeword. 
The parameters are characterized by the logical (qu)bits $k$, physical (qu)bits $n$, and the error probability per physical (qu)bit $p$.}}
\label{tab:lpn-lsn-comparison}
\end{figure}

\paragraph{Learning Stabilizers with Noise.}

The starting point of this work is a recent
characterization of average-case stabilizer decoding by the name of \emph{Learning Stabilizers with Noise} $(\lsn)$~\cite{poremba2024learning,khesin2025average}---the natural quantum analog of $\mathsf{LPN}$~\cite{BFKL93}, which is illustrated in \Cref{tab:lpn-lsn-comparison}.
The $\lsn$ problem was first introduced by Poremba, Quek and Shor~\cite{poremba2024learning} who gave an initial assessment of the problem in terms of algorithms and complexity. 
In subsequent work, Khesin, Lu, Poremba, Ramkumar and Vaikuntanathan~\cite{khesin2025average} showed that $\lsn$ (in most parameter regimes) is \emph{at least as hard} as $\lpn$, providing a much more solid foundation for the average-case hardness of quantum stabilizer decoding.

Here we make the case that $\lsn$ is a compelling post-quantum assumption.
First, $\lsn$ is \emph{genuinely quantum in origin}. 
Unlike leading post-quantum assumptions, such as $\mathsf{LWE}$ or $\mathsf{LPN}$, which are classical in nature but believed to resist quantum attacks, $\lsn$ arises naturally within fundamental quantum information processing itself.
Second, $\lsn$ appears to be \emph{incomparable} to $\mathsf{LPN}$ in cryptographically relevant parameter regimes (as we explain in~\Cref{sec:technical_overview}), thus making it likely to constitute a genuinely new and distinct hardness assumption.

Basing cryptography off of $\lsn$ thus creates a ``win-win-win'' scenario:
\begin{itemize}
\item[$\bullet$] If $\lsn$ is secure, we obtain a new basis for post-quantum cryptography, grounded in a central problem in quantum information processing.
\item[$\bullet$] If $\lsn$ is broken, this would represent a major breakthrough in our current understanding of quantum stabilizer codes, with profound downstream implications for many areas of quantum information science.
\item[$\bullet$] If $\lsn$ turns out to be equivalent to $\mathsf{LPN}$, this would reveal deep and unexpected connections between classical and quantum error correction.
\end{itemize}

\subsection{Our results}

We now briefly summarize our main contributions. 
At a high level, we show that the average-case hardness of decoding random quantum stabilizer codes---captured by (the state variant of) $\lsn$---is sufficient to instantiate all of the core primitives of classical \textsf{Cryptomania}~\cite{514853}. 

Below, we use $n \in \mathbb{N}$ to denote the security parameter and block length of the stabilizer code, and we let $p \in (0,1)$ denote the noise rate, as in \Cref{tab:lpn-lsn-comparison}.
For an in-depth discussion of our assumption and its parameters, we refer to the technical overview in \Cref{sec:technical_overview}.

\paragraph{Public-key encryption.}

In \Cref{sec:public-key}, we construct a $\mathsf{PKE}$ scheme from the hardness of $\lsn$ in the low-noise regime with $p=O(1/\sqrt{n})$. 
The scheme matches the parameter regime and efficiency of state-of-the-art Alekhnovich-style $\lpn$-based encryption~\cite{Alekhnovich03,damgaard2012practical}, with $O(n^2)$ encryption time and $O(n)$ decryption time.

\begin{quote}
\textbf{Theorem (informal).}
Assuming (the state variant of) $\lsn$ is hard in the low-noise regime, there exists an efficient post-quantum $\mathsf{PKE}$ scheme.
\end{quote}
Our $\mathsf{PKE}$ scheme is formally described in \Cref{const-pke}, and we prove its security in \Cref{thm:pke_security}. We note that our security reduction is highly non-trivial and requires many new technical insights into $\lsn$ and its related variants; these form
the main technical contributions of the paper.

\paragraph{Oblivious transfer.}

In \Cref{sec:SU-PKE}, we strengthen our scheme to obtain a strongly uniform public-key encryption scheme, which once again rests on the low-noise regime with $p=O(1/\sqrt{n})$.
Using known black-box transformations, this implies round-optimal malicious-secure oblivious transfer:

\begin{quote}
\textbf{Theorem (informal).}
Assuming (the state variant of) $\lsn$ is hard in the low-noise regime, there exists an efficient post-quantum $\mathsf{OT}$ protocol with optimal round complexity.
\end{quote}

Since oblivious transfer implies general secure multi-party computation, this establishes that $\lsn$ suffices for the full power of Cryptomania.

\paragraph{One-way functions.}

In \Cref{app:owff_tight_eqiuvalence}, we construct a one-way function ($\mathsf{OWF}$) from $\lsn$ in the constant-noise regime $p=\W(1)$. 
This gives symmetric cryptography directly from the hardest possible parameter setting of the problem.

\begin{quote}
\textbf{Theorem (informal).}
Assuming (the state variant of) $\lsn$ is hard in the constant noise regime, there exists a post-quantum $\mathsf{OWF}$.
\end{quote}
Importantly, this variant of $\lsn$ is known to be at least as hard as $\lpn$ in most regimes~\cite{khesin2025average}.

Taken together, these results show that the hardness of decoding random quantum stabilizer codes suffices to realize symmetric encryption, public-key encryption, and round-optimal malicious-secure multi-party computation. Our constructions match the efficiency of the best known $\lpn$-based schemes~\cite{pietrzak2012lpn} while resting on an assumption that is quantum-native and, as we argue in \Cref{app:reduction_barrier}, plausibly incomparable to existing post-quantum assumptions.

\subsection{Related work}

The hardness of decoding random classical linear codes and the closely related \(\mathsf{LPN}\) (Learning Parity with Noise) problem has been studied extensively in coding theory and cryptography; see e.g. \cite{10.1007/978-3-642-03356-8_35,gilbert2008encrypt,10.1007/11535218_18,10.1109/FOCS.2006.51,arora2011new,Alekhnovich03,pietrzak2012lpn,10.5555/647097.717000,10.1007/11538462_32,yu2019collision,BFKL93}. 
A common formulation of \(\mathsf{LPN}\) presents the adversary with a sequence of $n$ noisy linear samples $\{(\vec a_i,\; \langle \vec a_i , \vec x \rangle + e_i)\}_{i=1}^n$ (where $\vec x \in \Z_2^k$), and hence a significant body of work has been focused on the fine-grained regimes where the number of samples \(n\) (equivalently, the block length of the code) is significantly larger than the message length. 
From a coding-theoretic perspective, this corresponds to decoding a random linear code with vanishing rate \(R = k/n\).
For example, BKW-style techniques yield \(2^{O(k / \log k)}\)-time algorithms when \(n\) is superpolynomial in \(k\) \cite{BKW03}, and refinements give \(2^{O(k / \log \log k)}\)-time algorithms when \(n\) is slightly superlinear in \(k\) \cite{10.1007/11538462_32}. 
Despite decades of work, however, no subexponential-time algorithm is known in the constant-rate regime \(n = \Theta(k)\), which is the regime underlying many important \(\mathsf{LPN}\)-based cryptographic constructions. In the constant rate regime, the best known algorithms for $\lpn$ tend to involve variants of \emph{information set decoding} (ISD) attacks~\cite{Prange1962,stern1988} which run in exponential time~\cite{EC:CDMT24}.

The study of quantum stabilizer decoding has historically followed a different direction. 
Most prior work has focused on \emph{worst-case} formulations of the problem, often in the form of maximum-likelihood decoding, and established classical hardness results and complexity-theoretic barriers for these variants \cite{HsiehLeGall11,iyer2015hardness,kuo2020hardnesses}. 
Only recently the attention shifted to \emph{average-case} formulations that are more suitable for cryptographic applications. 
In particular, Poremba, Quek, and Shor introduced the \emph{Learning Stabilizers with Noise} (\(\mathsf{LSN}\)) problem and provided an initial analysis of its algorithmic and complexity-theoretic properties \cite{poremba2024learning}. They also constructed a quantum bit commitment scheme from the hardness of the $\lsn$ problem.
Subsequent work of Khesin, Lu, Poremba,
Ramkumar and Vaikuntanathan further clarified the landscape by relating average-case stabilizer decoding to classical noisy linear problems like $\lpn$ in many parameter regimes \cite{khesin2025average}. 
This work shows that the problem admits a purely classical description while retaining the symplectic algebraic structure intrinsic to stabilizer codes, thereby providing initial evidence that decoding random stabilizer codes may serve as a meaningful cryptographic hardness assumption.

Outside of quantum error correction, there have been a number of recent attempts at building \emph{quantum} rather than \emph{classical} cryptography from other natural average-case hard computational tasks~\cite{bostanci_et_al:LIPIcs.TQC.2025.9,10.1145/3717823.3718145,fefferman2025hardnesslearningquantumcircuits,hiroka2025hardnessquantumdistributionlearning,Hiroka_2025,cavalar2024metacomplexitycharacterizationquantumcryptography,morimae2024quantumgroupactions}. However, unlike in our work, these constructions are inherently quantum (typically requiring quantum communication between multiple quantum parties) and do not give rise to classical cryptographic primitives as in our work.

\section{Technical overview}
\label{sec:technical_overview}

The three primitives on which this paper focuses are one-way functions (\textsf{OWF}), public-key encryption (\textsf{PKE}), and oblivious transfer (\textsf{OT}).
All have been constructed---with a high degree of practical efficiency---from the $\lpn$ assumption~\cite{Alekhnovich03,damgaard2012practical,dottling2020two,david2014universally,friolo2019black}.

Our main goal is to construct each of these primitives in such a way that \begin{enumerate}
    \item[(a) ] breaking each primitive is at least as hard as decoding a random quantum stabilizer code, and 
    \item[(b) ] each primitive's efficiency is as good as the state-of-the-art corresponding $\lpn$-based construction, up to small constant factors.
\end{enumerate}
This overview sketches the key technical ingredients for our constructions.
Our general recipe proceeds as follows.
First, we reduce the task of decoding a random quantum stabilizer code to the task of decoding a random \emph{classical} linear code drawn from an ensemble of codes satisfying a certain algebraic structure. This problem is known as \emph{symplectic} $\lpn$ ($\symplpn$), and was introduced recently as a technical tool to reduce $\lpn$ to average-case quantum stabilizer decoding in the high-noise regime~\cite{khesin2025average}.
Because $\symplpn$ and $\lpn$ appear similar conceptually, one could hope that $\lpn$-based cryptographic constructions would be readily amenable to adaptation into a $\symplpn$-based protocol, thereby achieving constructions secured by the hardness of quantum error correction.
In reality, this intuition only holds reasonably well for \textsf{OWFs}, wherein inverting the \textsf{OWF} is essentially always equivalent to solving the underlying computational problem anyway.
Asymmetric cryptographic protocols based on $\lpn$, on the other hand, rely critically on the fact that the encoding matrix of the $\lpn$ instance features uniformly random, independent entries as part of the security proofs, and thus break down when this assumption is relaxed.
\emph{This breakdown poses a serious technical barrier when proving the security of $\symplpn$-based schemes.}

The second component of our recipe involves overcoming these barriers by introducing an entirely new suite of techniques for scrambling and unscrambling information in linear subspaces, with the algebraic conditions specified by $\symplpn$.
This approach enables us to rigorously prove the security of our constructions under the hardness of $\slsn$, at the cost of significantly increasing the technicality of the proofs relative to those of $\lpn$-based schemes.

\subsection{Reduction to an $\lpn$-like Classical Problem}

Informally, $\slsn[k, n, p]$ is the task of decoding a random quantum stabilizer code with $k$ logical qubits, $n$ physical qubits, and noise rate $p \in (0, 1)$.
Specifically, we draw a Haar-random (i.e. uniformly random) logical state $\ket{\psi}$ as well as a random $n$-qubit Clifford operator $\vec C$.
We also draw a random error operator $\vec E$ from the \emph{depolarizing} distribution $\CD_p^{\otimes n}$---the natural quantum analog of Bernoulli error (i.e. binary symmetric channel) in classical error correction.

The task of $\slsn$ is to approximately recover the logical state $\ket{\psi}$, given a classical description of the code $\vec C$ and the noisy code state $\vec E \vec C (\ket{0^{n-k}} \otimes \ket{\psi})$.
Being a problem with manifestly quantum inputs and outputs, $\slsn$ as stated cannot be used directly for classical cryptography.
However, if we can produce a \emph{quantum reduction} from $\slsn$ to a problem with manifestly classical inputs and outputs, then we can at once use this classical problem directly to construct cryptography and be sure that such protocols are secured directly by the hardness of $\slsn$.
As a first step, consider a variant of $\slsn$ wherein the logical state $\ket{\psi}$ is a uniformly random \emph{bitstring} $\ket{\vec x}$ for $\vec x \sim \Z_2^k$, rather than a Haar-random state; thus the output of $\lsn$ is classical though the input remains quantum.
This variant is known as $\lsn$~\cite{poremba2024learning}, and when $k = O(\log n)$ there is a known quantum reduction from $\slsn$ to $\lsn$~\cite{khesin2025average}.
Importantly, while the hardness of $\lpn$ is characterized by the logical dimension $k$---there is a brute-force guess-and-check $\poly(2^k,n)$-time algorithm to solve it, the hardness of $\lsn$ does not appear to depend on $k$. At first glance, this may seem counterintuitive. The reason is that, unlike for $\lpn$, there is simply no efficient method of verifying proposed $\lsn$ solutions.
Indeed, for most regimes of $p$, $\slsn$ is known to be $\lpn$-hard \emph{even for a single logical qubit}~\cite{khesin2025average}.
Hence, the restriction to $k = O(\log n)$ does not necessarily come at the cost of hardness.

Moreover, $\lsn$ is equivalent (via efficient quantum reductions) to a completely classical problem, which we henceforth also refer to as $\lsn$.
Defining this classical problem requires the notion of symplectic subspaces, which are only well-defined for even-dimensional spaces.
Given bitstrings $(\vec a, \vec b), (\vec a', \vec b') \in \Z_2^{2n}$, the \emph{symplectic inner product} is given by \begin{align}
    (\vec a, \vec b) \odot (\vec a', \vec b') := \vec a \cdot \vec b' + \vec a' \cdot \vec b \pmod{2} .
\end{align}
Then, the classical equivalent of $\lsn[k, n, p]$ is the task of recovering a bitstring $\vec y \sim \Z_2^{k}$, given $([\vec A \,|\, \vec B], \vec{Ax} + \vec{By} + \vec e)$, where $\vec x \sim \Z_2^{n}, \vec A \in \Z_2^{2n \times n}, \vec B \in \Z_2^{2n \times k}, \vec e \in \Z_2^{2n}$; $\vec A$ and $\vec B$ are random subject to having symplectically orthogonal columns and being jointly full rank, while $\vec e$ is drawn from a symplectic representation of the depolarizing distribution $\CD_p^{\otimes n}$.
Concretely, each pair $(e_j, e_{n+j})$ is i.i.d., being $(0,0)$ with probability $1-p$ and $(0,1), (1,0), (1,1)$ each with probability $p/3$.

While completely classical, $\lsn[k, n, p]$ still differs structurally from $\lpn[k, n, p]$ in that only a small part of the effective logical state $[\vec x \,|\, \vec y]$ needs to be recovered.\footnote{This is due to a uniquely quantum phenomenon called \emph{stabilizer degeneracy}~\cite{iyer2013hardnessdecodingquantumstabilizer,gottesman2024qecc}; roughly speaking, two distinct errors can have the same effect on a codeword.}
In $\lpn$, the entire logical state must be recovered.

Surprisingly, however, we are able to reduce $\lsn$ to a much more $\lpn$-like problem, known as $\symplpn[k, n, p]$.
This is the same task as $\lpn[k, 2n, p]$, except the encoding matrix $\vec A \in \Z_2^{2n \times k}$ is uniformly random \emph{subject to} having symplectically orthogonal columns, and the error is drawn from the depolarizing distribution's symplectic representation.
Thus, $\symplpn[k, n, p]$ and $\lpn[k, 2n, p]$ describe essentially the same objects, but draw them from different distributions.
Note that formally, the $\symplpn$ and $\lpn$ variants we consider are the \emph{decision} variants, wherein one is given $(\vec A, \vec z)$ and must decide if $\vec z = \vec{Ax} + \vec e$ (``\texttt{structured}'') or $\vec z$ is uniformly random (``\texttt{unstructured}'').
The \emph{decision} variant of $\lpn$ is the standard cryptographic variant, and is essentially equivalent to the \emph{search} variant~\cite{katz2010parallel} (up to small factors).
Importantly, we only consider the \emph{search} variant of $\lsn$, as their \emph{search} and \emph{decision} variants appear inequivalent in general~\cite{khesin2025average} and the \emph{search} variant is natural for quantum information processing.
\emph{Search} and \emph{decision} variants of $\symplpn$ are not known to be equivalent.

To achieve the first component in our recipe, we show in \Cref{thm:lsn_to_symplpn} that, perhaps surprisingly, there is a simple classical reduction from $\lsn[k, n, p]$ to $\symplpn[n, n, p]$ for any $k = O(\log n)$.
Intuitively, $\symplpn$ appears to be ``embedded'' in $\lsn$ as the $\vec A$-part of the $\lsn$ matrix: $\symplpn$ is of the form $(\vec A, \vec{Ax} + \vec e)$ while $\lsn$ takes the form $([\vec A \,|\, \vec B], \vec{Ax} + \vec{By} + \vec e)$.
Our reduction relies on the simple observation that the noisy codeword part of $\lsn$ is \emph{identical} to that of $\symplpn$ when $\vec y = \vec 0$.
If that is the case, then dropping the $\vec B$ part of the $\lsn$ matrix yields $(\vec A, \vec{Ax} + \vec e)$, precisely the form of Decision $\symplpn$ with a \texttt{structured} input.
If $\vec y \neq \vec 0$, then again dropping the $\vec B$ part of the $\lsn$ matrix yields $(\vec A, \vec{Ax} + \vec{By} + \vec e)$.
In fact, we show that $\vec{By}$ is marginally negligibly close (in total variation distance) to a uniformly random vector, and thus this instance is negligibly close to an \texttt{unstructured} $\symplpn$ task.
Therefore, a $\symplpn$ solver can decide whether or not $\vec y = \vec 0$ in the $\lsn$ problem.
If $\vec y = \vec 0$ we are done; if $\vec y \neq \vec 0$, we guess a random $\hat{\vec y} \in \Z_2^k \setminus\set{\vec 0}$.
The $\symplpn$ solver hence gives us a minuscule advantage in solving $\lsn$ by providing a larger signal only when $\vec y = \vec 0$.
However, since $k = O(\log n)$, even such a small signal gives a solver with a non-negligible overall advantage over completely random guessing for $\lsn$, which turns out to be sufficient to complete the reduction.
Overall, our chain of reductions proceeds as, for $k = O(\log n)$, \begin{align}
    \slsn[k, n, p] \leq_Q \lsn[k, n, p] \leq_C \symplpn[n, n, p] ,
\end{align}
where $\leq_Q, \leq_C$ respectively denote quantum and classical reductions.
Crucially, this reduction works \emph{only} for $\symplpn$ with exactly $n$ logical bits (i.e. rate $1/2$).
With even \emph{one} less logical bit, it is no longer clear as to how much easier the $\symplpn$ instance becomes.

\subsection{Cryptographic Constructions with $\symplpn$}

Having established that (\emph{decision}) $\symplpn$ is at least as hard as $\slsn$, we next outline our cryptographic constructions and the technical barriers to a security proof relative to their $\lpn$-based counterparts.
Our \textsf{OWF} construction is the most straightforward, and unlike the public-key protocols is based directly on $\lsn[k, n, p]$, and in the hardest regime wherein $p = \Omega(1)$ and $k = \Omega(n)$.
The function is indexed by $[\vec{A} \,|\, \vec{B}]$, and maps on input $(\vec x, \vec y, \vec e)$ to $\vec{Ax} + \vec{By} + \vec{e}$.
Inverting this function is equivalent to finding $\vec{x}, \vec{y}, \vec{e}$ from $([\vec A \,|\, \vec{B}], \vec{Ax} + \vec{By} + \vec{e})$.
This is at least as hard as solving $\lsn[k, n, p]$, which only requires finding $\vec y$.
However, even if we had a $\lsn$ solver which could find $\vec y$, it is not clear that we could then also recover $\vec{x}$ and $\vec{e}$.
In fact, if we subtract $\vec{By}$ out from $\vec{Ax} + \vec{By} + \vec{e}$, the remaining piece $\vec{Ax} + \vec{e}$ is precisely a $\symplpn[n, n, p]$ task.
We prove, in fact, that breaking this \textsf{OWF} is equivalent to solving \emph{both} $\lsn$ and \emph{search} $\symplpn$.
The \emph{search} variant of $\symplpn$ has no clear relation to $\lsn$, and thus our \textsf{OWF} may be more secure than $\lsn$ alone.
Moreover, in the regime of $p = \W(1)$, $\lpn[pn, 2n, p]$ is known to reduce to $\slsn[k = O(\log n), n, \Theta(p)]$~\cite{khesin2025average}, and thus our \textsf{OWF} is provably at least as secure as a conventional $\lpn$-based \textsf{OWF}.
We defer the full construction and security proof to \Cref{app:owff_tight_eqiuvalence}.

On the other hand, the \textsf{PKE} and \textsf{OT} constructions in this work are built upon $\symplpn$.
Our starting point is the $\lpn$-based \textsf{PKE} scheme in~\cite{damgaard2012practical} which is rooted in the seminal scheme of Alekhnovich~\cite{Alekhnovich03}, and which remains the state-of-the-art $\lpn$-based construction in terms of security and efficiency.
This particular scheme requires $p = \Th(1/\sqrt{n})$, because the correctness of the scheme relies heavily on the inner product of two independently sampled noise vectors being $0$ with high probability; this occurs precisely when $p = O(1/\sqrt{n})$.
If in the $\lpn$ scheme we set $k = \Omega(n)$, then brute-force algorithms require time $2^{O(pn)}$ by enumerating all possible errors of about the right weight.
Thus, the security of the scheme scales as $2^{O(\sqrt{n})}$.\footnote{While schemes have since been proposed which use variants of high-noise $\lpn$~\cite{yu2016cryptography}, their security scales quasipolynomially in $n$ and are therefore in practice orders of magnitude less secure than Alekhnovich-type schemes.}
Much of this construction is readily adaptable to $\symplpn$ in place of $\lpn$, with one very significant barrier.
More precisely, in the process of proving the security of our $\symplpn$-based scheme (formally described in \Cref{const-pke}), we find in \Cref{sec:public-key} that the protocol's security reduces to the hardness of both $\symplpn[n, n, p]$ \emph{and} $\symplpn[n-1, n, p]$.
If this were $\lpn$, this subtlety would be essentially irrelevant---the hardness of $\lpn$ provably does not depend on small changes in $k$.
In particular, there is a simple self-reduction from $\lpn[k, n, p]$ to $\lpn[k-1, n, p]$.
For $\symplpn$, however, this subtlety becomes a significant technical barrier.
In our case, the security of our adapted \textsf{PKE} scheme would rely on both $\symplpn[n, n, p]$ and $\symplpn[n-1, n, p]$.
But, as discussed above, a reduction from $\slsn$ is only known for $\symplpn[k, n, p]$ with $k = n$ logical bits, exactly half the number of physical bits $2n$.
In addition, the na\"ive self-reduction in which we drop one logical bit, which succeeds on $\lpn$, fails completely on $\symplpn$ because the entries are jointly nearly maximally far from uniformly random.
Despite this failure, we find an alternative and substantially more involved approach which successfully reduces $\symplpn[n, n, p]$ to $\symplpn[n-1, n, p]$ by introducing new techniques for scrambling symplectic subspaces in a carefully controlled manner.
We outline the key idea for these techniques below, but with this reduction complete we prove the (\textsf{IND-CPA}) security of our \textsf{PKE}.
Notably, our scheme has the same runtime up to small constant factors as state-of-the-art Alekhnovich-type $\lpn$-based \textsf{PKE}, which are quite efficient with $O(n^2)$ time encryption, $O(n)$ time decryption.

Finally, we construct malicious-secure (i.e. even if one party deviates arbitrarily from protocol, the other party's security is still guaranteed) \textsf{OT} from the hardness of $\symplpn[n, p]$.
To show the \emph{existence} of polynomial-time malicious-secure \textsf{OT} from $\symplpn$ is straightforward using our \textsf{PKE} construction.
This is because we can readily use our \textsf{PKE} scheme to directly obtain two-round \textsf{OT} secure against \emph{semi-honest} parties (who do not deviate from protocol but otherwise may try to break security by analyzing the interaction transcript).
% If the server has bits $x_0, x_1$ and the client wants, say, $x_0$, the client will use the \textsf{PKE} scheme to create a ``real'' key $\pk_0$ and a ``fake'' key $\pk_1$, which is computationally indistinguishable from a real key but has no corresponding secret key for decryption.
% The server encrypts each $x_i$ with each $\pk_i$, and the client decrypts only $x_0$.
% This protocol is readily seen to be secure against semi-honest parties, but is not malicious-secure because the client could choose to generate 2 real keys and thus decrypt both $x_0$ and $x_1$.
There are many works, e.g. \cite{goldreich2019play}, which transform a semi-honest \textsf{OT} scheme and a \textsf{OWF} to a malicious-secure \textsf{OT} scheme \emph{at the cost} of a large blowup in the round complexity.
Thus, while this simple procedure establishes the existence of malicious-secure \textsf{OT} based on $\slsn$, it is of little practical relevance.

It is known, however, that the minimum round complexity of malicious-secure \textsf{OT} is four.
Furthermore, there is a generic procedure to achieve round-optimal \textsf{OT} by way of a particularly structured \textsf{PKE} scheme known as a \emph{strongly uniform} \textsf{PKE} scheme (\textsf{SU-PKE})~\cite{friolo2019black}.
Informally, \textsf{SU-PKE} is the same as \textsf{PKE} with the added property that the public key in the scheme is computationally indistinguishable from a uniformly random bitstring.
Since Alekhnovich-type \textsf{PKE} based on $\lpn[k, n, p]$ has public key $(\vec A \sim \Z_2^{n \times k}, \vec{Ax} + \vec{e})$ which is by assumption computationally indistinguishable from the uniformly random bitstring $(\vec{A} \sim \Z_2^{n \times k}, \vec{u} \sim \Z_2^n)$, they immediately imply round-optimal \textsf{OT}.
However, our $\symplpn$-based \textsf{PKE} scheme has a public key which is easily distinguishable from uniformly random because the key takes the form $(\vec A, \vec{Ax} + \vec e)$, where $\vec A \in \Z_2^{2n \times k}$ is a uniformly random full-rank matrix subject to having symplectically orthogonal columns.
As a result, the key technical barrier to the construction of round-optimal malicious-secure \textsf{OT} from $\symplpn$ is the modification of the \textsf{PKE} scheme to have a public key computationally indistinguishable from random.
We achieve this modification in \Cref{sec:SU-PKE}.
Our modification begins with a well-known idea that a randomized algorithm sampling from some not-necessarily-uniform distribution can be replaced with a deterministic algorithm which accepts an additional input---a sufficiently long ``seed'' string $\vec s$ which is genuinely uniformly random.
In our case, the generation of $\vec A$ can be executed by a deterministic algorithm $\CA$ which iteratively builds a basis of the subspace symplectically orthogonal from the columns already sampled, and then picking a random linear combination of these basis vectors to be the next column of $\vec A$.
This linear combination can be chosen with access to $O(n)$ random bits.

Thus, we would hope to replace the public key $(\vec A, \vec{Ax} + \vec e)$ with $(\vec s, \vec{Ax} + \vec e)$, where $\vec A$ is computed from $\vec s$.
This replacement has no impact on correctness, as the encrypting party simply first computes $\vec A$ from $\vec s$ using $\CA$.
However, while this form appears very much hard to distinguish from uniformly random bits $(\vec s, \vec u \sim \Z_2^{2n})$, we formally know only that $(\vec A, \vec{Ax} + \vec e)$ is indistinguishable from $(\vec A, \vec u \sim \Z_2^{2n})$.
Thus, to complete the proof, we must show that given $\vec A$ we can distributionally ``invert'' the $\vec s \mapsto \vec A$ algorithm to produce a marginally uniformly random $\hat{\vec s}$ such that $\CA(\hat{\vec s}) = \vec A$.
We give such an inversion algorithm and prove its correctness in \Cref{sec:SU-PKE}.
Thus, $\vec{Ax} + \vec e$ is indistinguishable from random given either $\vec A$ or $\vec s$, so $(\vec s, \vec{Ax} + \vec e)$ is indeed computationally indistinguishable from uniformly random bits.
We therefore achieve round-optimal \textsf{OT} from the hardness of $\symplpn$, and as a consequence, round-optimal general secure multi-party computation.

\subsection{Removing one logical bit in $\symplpn$ via symplectic scrambling}
We here outline our solution to the primary technical barrier to the security proof of $\symplpn$-based \textsf{PKE}, namely reducing $\symplpn[n, n, p]$ to $\symplpn[n-1, n, p]$.
If we wished to reduce one logical qubit for $\lpn[k, n, p]$, it would be straightforward: given $(\vec A \sim \Z_2^{n \times k}, \vec{Ax} + \vec e)$, let $\vec A' \in \Z_2^{n \times (k-1)}$ be $\vec A$ with the last column removed, and run a $\lpn[k-1, n, p]$ decider on $(\vec A', \vec{Ax} + \vec e)$.
If $x_k = 0$, then $\vec{Ax} + \vec e = \vec A' \vec x' + \vec e$, where $\vec x = [\vec x' \,|\, x_k]$.
Otherwise, $\vec{Ax} + \vec e = \vec A' \vec x' + \vec e + \vec a$, where $\vec a \sim \Z_2^n$ is the last column of $\vec A$---so $\vec{Ax} + \vec e$ is uniformly random given $\vec{A}'$.
Hence, a $\lpn[k-1, n, p]$ decider will output \texttt{structured} only when the input is structured with $x_k = 0$, which occurs with large enough probability to obtain a solver for $\lpn[k, n, p]$.

In the case of $\symplpn[n, n, p]$, however, an analogous technique fails completely.
Given $(\vec A \in \Z_2^{2n \times n}, \vec{Ax} + \vec e)$, we might again try to remove the last column of $\vec A$ to obtain $\vec A'$.
If $x_n = 0$, then indeed $\vec{Ax} + \vec e = \vec{A}' \vec x' + \vec e$ is a $\symplpn[n-1, n, p]$ structured instance.
However, if $x_n = 1$, then $\vec{Ax} + \vec e = \vec{A}' \vec x' + \vec e + \vec a$, but this time the last column $\vec a \in \Z_2^{2n}$ of $\vec A$ has a complicated distribution which depends sensitively on the rest of $\vec A$.
In other words, $\vec a$ is nearly maximally far from uniformly random given $\vec A'$, so the instance is not at all close to unstructured.
Consequently, we have no guarantees on the accuracy of a $\symplpn[n-1, n, p]$ decider on this input.

To overcome this barrier, we introduce four techniques which approach the reduction in a completely different way and thereby bypass the above obstruction.
Our first technique arises from the observation that the maximal dimension of a subspace of $\Z_2^{2n}$ in which all vectors are symplectically orthogonal is $n$.
Thus, a useful geometric interpretation of a $(n-1)$-dimensional symplectically orthogonal subspace---the image of the matrix $\vec A' \in \Z_2^{2n \times (n-1)}$ in a $\symplpn[n-1, n, p]$ instance---is as a hyperplane of codimension $1$ within some maximal $n$-dimensional symplectically orthogonal subspace, symplectically orthogonal to some random vector $\vec v \in \Z_2^{2n}$.
(Technically, there are many possible maximal subspaces which contain this hyperplane, but we ignore this subtlety in this overview.)
Given a $2n \times n$ matrix $\vec A$ in a $\symplpn[n,n, p]$ instance, our first step is to reduce the dimensionality by 1 by forcing all vectors in the code to be symplectically orthogonal to some fixed $\vec v$---this step effectively removes a logical bit.
This is easy to do if the first entry of the error $\vec e$ is 0: we can then easily transform the code space to always have first entry $0$.
Equivalently, the code is symplectically orthogonal to the vector $(0^n, 1, 0^{n-1})$.
However, a $\symplpn[n-1, n, p]$ instance has a code orthogonal to a \emph{random} vector $\vec v$, not a fixed one.
Thus, our next step is to design a very sparse random operator $\vec C \in \Z_2^{2n \times 2n}$ which ``rotates'' this orthogonal vector to a random vector, taking its normal hyperplane along with it.
We multiply to produce $(\vec{CA}, \vec{CAx} + \vec{Ce})$.
Since $\vec{C}$ is very sparse, the new error $\vec{Ce}$ is not irrecoverably blown up, but does have a complicated distribution far from depolarizing.
Hence, our third technique is to apply a ``noise symmetrization'' operation which, using a combination of noise flooding and permutations, maps $\vec{Ce}$ to a depolarizing error with only slightly larger noise rate $p'$ than $p$, while not affecting the distribution of the other objects.
This technique turns out, however, to only work when the $(n+1)$th component of $\vec e$ is 0.

In summary, our approach only works when entries $1$ and $n+1$ of $\vec e$ are $0$, which does not occur with overwhelming probability.
So, to complete the reduction, our final technique is an ``interpolation trick''.
This trick arises from the observation that we are free to add extra noise prior to the start of the reduction if we wish, which will change the probabilities of the reduction's success.
A simple analysis of these probabilities reveals that the two reductions---one with and one without the preliminary noise flooding---cannot both fail, and thus there must exist a successful reduction from $\symplpn[n, n, p]$ to $\symplpn[n-1, n, p']$ for $p'$ only slightly larger than $p$.

This reduction completes the security proof of our $\symplpn$-based \textsf{PKE} scheme. 
Given the many technical obstructions which arise in performing even the simplest $\symplpn$ self-reduction, it is somewhat surprising that a rigorous security proof of $\symplpn$-based \textsf{PKE} scheme is possible at all.

\subsection{Comparative Hardness of $\lpn$ and $\symplpn$}

The fact that $\symplpn$ readily implies all of the core primitives of \textsf{Cryptomania} with $\lpn$-level efficiency gives a compelling reason to consider $\symplpn$ as an independent post-quantum assumption, provided that \emph{$\symplpn$ is not already equivalent or reducible to $\lpn$}.

Prior work has shown a reduction from $\lpn[pn, 2n, p]$ to $\symplpn[n, n, 6p]$~\cite{khesin2025average}.
With $p = O(1/\sqrt{n})$, however, this reduction becomes vacuous, as $\lpn[\sqrt{n}, 2n, 1/\sqrt{n}]$ can be solved in polynomial time by restricting to the first $\sqrt{n} \times \sqrt{n}$ minor of $\vec A$ (where an error rate of $1/\sqrt{n}$ implies that there typically only $O(1)$ errors on this block) and brute-force enumerating constant-weight errors.
It is therefore not known if $\lpn$ reduces to $\symplpn$ in this low-noise regime, or if in fact $\symplpn$ reduces to $\lpn$.
The former reduction would establish $\symplpn$ as being at least as good as $\lpn$ for cryptography in light of this work, while the latter reduction would prove that $\symplpn$ adds no value for post-quantum security.
While we leave the former open, we prove in \Cref{app:reduction_barrier} a strong barrier against the existence of a $\symplpn$-to-$\lpn$ reduction, suggesting that the assumptions are \emph{inequivalent}.

The most natural approach to a reduction would be to, given $(\vec A \in \Z_2^{2n \times n}, \vec{Ax} + \vec e)$, simply remove around $n/2$ rows of $\vec A$.
After all, $\vec A$ has $2n^2$ entries but only has entropy $\sim \frac{3}{2} n^2$ since there are $\binom{n}{2} \sim \frac{1}{2} n^2$ constraints.
% This strategy turns out to only work if we remove $n$ rows---precisely the number of rows removed for the problem to become statistically unsolvable.
More generally, we may wish to multiply $\vec A$ and $\vec{Ax} + \vec e$ by some $\vec B \in \Z_2^{m \times 2n}$, producing $(\vec{BA}, (\vec{BA})\vec x + \vec{Be})$, and hope that this instance is close to a $\lpn[m, 2n, p']$ instance for some $p'$.
It is not at all clear as to why such a strategy would not work, especially because intuition from the leftover hash lemma~\cite{impagliazzo1989pseudo} suggests that it in fact may succeed.
Moreover, such a ``linear reduction'' was used, for example, to successfully produce the first non-trivial random self-reduction for $\lpn$ from polylog-weight errors in the worst case to error rate $1/2 - 1/\poly(n)$ in the average case~\cite{brakerski2019worst}.
We prove, however, that linear reductions \emph{cannot} reduce $\symplpn$ to $\lpn$.
While the proof is quite technical, the central idea is simple.
We show that for any \emph{fixed} $\vec B \in \Z_2^{m \times 2n}$, the random matrix $\vec{BA} \in \Z_2^{m \times n}$ is severely deficient in entropy.
That is, while $\vec{BA}$ must have entropy about $mn$ to be uniformly random (and thus a $\lpn$ matrix), it actually only has entropy at most $(1-d)mn$ for some constant $d$.
To compensate this deficiency, we must choose $\vec B$ from a sufficiently high-entropy distribution.
Using this result, we then prove that any distribution over $\vec B$ which adds enough entropy so as to randomize $\vec{BA}$ also irrecoverably blows up the error.
More precisely, in the codeword $(\vec{BA})\vec x + \vec{Be}$ part of the input, the error $\vec{Be}$ has error weight larger than $\frac{(1-r-\delta)}{2}m$ for any $\delta>0$ with overwhelming probability, where $r$ is the rate $r = \frac{n}{m}$.
Shannon's noisy coding converse theorem turns out to imply that this error weight is undecodable even information-theoretically.
Hence, this reduction cannot map into \emph{any} statistically solvable $\lpn$ instance.

Our barrier does not imply that no reduction exists, but it shows that the most clearly motivated approach fails.
A reduction would have to proceed with a very different strategy, and it is possible that if such a reduction exists, it could also improve the random self-reductions achievable for $\lpn$.
(The current linear reduction---from worst case with polylog-weight to average case with noise rate $1/2 - 1/\poly(n)$---has very weak parameters, with the primary barrier to improvement being the same issue that we rigorously derive here.)

\subsection{Cryptanalysis} 
Given the close relationship between $\symplpn$ and $\lpn$, a natural question is how the security of our \symplpn-based schemes  compares to that of standard $\lpn$-based schemes in practice. 
We focus on our $\mathsf{PKE}$ scheme in \Cref{const-pke}, as our \textsf{OWF} construction may assume the hardest variant of \lsn, which is already known to be at least as hard as $\lpn$ in most relevant parameter regimes~\cite{khesin2025average}.

As pointed out in~\cite{EC:CDMT24},
the dominant class of attacks against $\lpn$ are information-set decoding (ISD) algorithms, originating in the work of Prange and Stern~\cite{Prange1962,stern1988}. 
These attacks are combinatorial: they exploit the sparsity of the error vector and are largely agnostic to the algebraic structure of the underlying code. 
Their running time scales exponentially in the block length, with the exponent governed by the particular noise rate. 
Because of this structural insensitivity, we expect ISD-type attacks to serve as the primary \emph{generic} benchmark for \symplpn-based \textsf{PKE}.

To enable a meaningful comparison, it is important to place the two schemes on equal footing; for example, standard practice~\cite{damgaard2012practical} suggests matching parameters by fixing the decryption success probability to be $0.75$. 
Since $\symplpn$ employs depolarizing noise whereas $\lpn$ uses Bernoulli noise, we can compare our decryption analysis with that of $\lpn$-based schemes~\cite{damgaard2012practical} and equate the corresponding decryption success probabilities: 
if $p$ and $q$ denote the noise rates in \symplpn[n,p] and $\lpn(n,2n,q)$, respectively, this matching yields
\begin{equation}\label{eq:matched}
    1-\frac{4}{3}p^2 = (1-2q^2)^2,
    \qquad\text{and hence}\qquad
    p = \sqrt{3(q^2-q^4)} = \sqrt{3}\,q - O(q^2).
\end{equation}
In particular, for the same decryption error probability, say $0.75$, the $\symplpn$ distribution produces errors of slightly larger expected weight (by a constant factor) than the corresponding $\lpn$ instance.

From the perspective of ISD, this has two competing effects. 
On the one hand, since ISD algorithms are driven primarily by the sparsity of the error vector, the increased error weight suggests that \emph{na\"{\i}ve} ISD attacks should perform somewhat worse against $\symplpn$ than against $\lpn$. 
On the other hand, the depolarizing noise in $\symplpn$ exhibits a mild pairwise correlation structure, which induces a small entropy loss relative to fully independent noise. 
Naturally, this structure can be exploited by a simple \emph{pair-aware} variant of ISD---e.g., modifications of Prange's original algorithm~\cite{Prange1962}---in which coordinates are processed in pairs $(j,n+j)$ rather than independently.
Taken together, these considerations suggest that while off-the-shelf ISD attacks may underestimate the vulnerability of \symplpn, tailored ISD variants should provide an accurate point of comparison, and are expected to achieve performance comparable to that observed for $\lpn$-based \textsf{PKE}. 
This heuristic is consistent with our preliminary numerical investigations.

Our results so far suggest a qualitative message: $\symplpn$-based \textsf{PKE} schemes are likely to be similarly susceptible to brute-force ISD-attacks as comparable $\lpn$-based schemes---even if the security of $\symplpn$ may ultimately hinge on the symplectic structure of the underlying code and its interaction with the noise model. 
This is because ISD's brute-force approach has little dependence on the actual code distribution itself, and therefore does not distinguish between the random matrices in $\lpn$ and the random symplectically orthogonal matrices in $\symplpn$.
Rather, as discussed above, ISD depends more sensitively on the noise distribution.
We thus believe that, for practical purposes, alternative noise distributions (possibly deviating from our quantum-inspired depolarizing model) could further amplify the security of the scheme. 
Exploring such directions, as well as refining the analysis of tailored attacks, remains an interesting avenue for future work.

\subsection{Outlook}

This paper shows that private-key encryption, public-key encryption, and secure multi-party computation can be constructed entirely from the average-case hardness of decoding quantum stabilizer codes, $\slsn[k, n, p]$.
In the private-key regime, we construct one-way functions from high-rate ($\Omega(1)$), high-noise ($\Omega(1)$) $\slsn$.
In the public-key regime, we construct \textsf{PKE} and round-optimal malicious-secure $\textsf{OT}$ (which then implies round-optimal malicious-secure multi-party computation) from low-rate ($O(\frac{\log n}{n})$), low-noise ($O(1/\sqrt{n})$) $\slsn$, whose security thus scales as $2^{\widetilde{\Theta}(\sqrt{n})}$.
The security of $\slsn$ does not appear to depend on the rate of the code at all~\cite{khesin2025average}, and the security relative to low noise rate approximately matches the best known security for $\lpn$-based \textsf{PKE} schemes, namely $2^{\widetilde{\Theta}(\sqrt{n})}$.

From the construction point of view, a pertinent question is whether public-key encryption can be built directly out of high-rate, low-noise $\slsn$.
Our approach has been to first quantumly reduce $\slsn[k, n, p]$ with $k = O(\log n)$ logical qubits to a classical problem, $\symplpn[n, n, p]$, and then construct public-key cryptography from $\symplpn[n, n, p]$.
Whether or not either an improved reduction with larger $k$, or a more direct construction from $\slsn$, remains open.
A further open constructive avenue is security in different models.
For example, \textsf{OT} secure in the \emph{universal composability} model can be constructed from the hardness of $\lpn$~\cite{david2014universally}; can an analogous construction be made from the hardness of $\symplpn$?

A second question concerns improving our understanding of the comparative security between $\symplpn[n, n, p]$ and $\lpn[k, n', p']$, when $p = O(1/\sqrt{n})$, and thereby understanding how $\symplpn$ stands as a post-quantum security assumption relative to $\lpn$.
Is there a reduction from $\lpn$ to $\symplpn$ in this regime, and is there \emph{no} reduction in the opposite direction?
In this work, we have shown significant barriers towards the possibility of a converse reduction---from $\symplpn$ to $\lpn$ (which would nullify the motivation to build cryptography from $\symplpn$)---but a more rigorous proof that no reduction exists would substantially strengthen our hope that $\symplpn$ may remain secure even in a world where $\lpn$ is broken.
More generally, understanding precisely how symplectic structure affects the complexity of decoding remains a key open question.

\section{Preliminaries}
A comprehensive introduction to quantum computation can be found in~\cite{nielsen2010quantum}. For a detailed introduction to quantum error correction, we refer to~\cite{gottesman2024qecc}.

\paragraph{Pauli matrices.} The four $2\times 2$ \emph{Pauli matrices} are denoted as
    \begin{align}
        \vec I = \begin{pmatrix} 1 & 0 \\0 & 1 \end{pmatrix}, \quad\quad\,\,\,\,\vec X = \begin{pmatrix} 0 & 1 \\ 1 & 0 \end{pmatrix} \quad\quad\,\,\,\,\vec Y = \begin{pmatrix} 0 & -i \\ i & 0 \end{pmatrix}, \quad\quad \vec Z = \begin{pmatrix} 1 & 0 \\ 0 & -1 \end{pmatrix}.
    \end{align}
The phase-free Pauli group $\CP_n$ is the group of $n$-qubit Pauli operators $\set{\vec I, \vec X, \vec Y, \vec Z}^{\otimes n}$ which multiply modulo phase.
Any such $n$-qubit Pauli can be represented in so-called \emph{symplectic form}, via a map \begin{align}
    \operatorname{Symp} \,:\, \CP_n \to \Z_2^{2n} .
\end{align}
Here, bits $j$ and $j+n$ in $\operatorname{symp}(\vec P)$ indicate which Pauli is on the $j$th qubit of $\vec P \in \CP_n$.
The two bits are $(0, 1)$ if $\vec X$, $(1, 0)$ if $\vec Z$, $(1, 1)$ if $\vec Y$, and $(0, 0)$ if $\vec I$.
Since $\vec{XY} = i\vec Z$, $\vec{YZ} = i\vec X$, and $\vec{ZX} = i \vec Y$, this representation shows that \begin{align}
    \operatorname{Symp}(\vec P_1 \vec P_2) = \operatorname{Symp}(\vec P_1) + \operatorname{Symp}(\vec P_2) 
\end{align}
and thus that $\Symp$ is an isomorphism (with arithmetic mod 2).
The symplectic representation is equipped with a natural inner product given by \begin{align}
    (\vec a_1, \vec b_1) \odot (\vec a_2, \vec b_2) := \vec a_1 \cdot \vec b_2 + \vec a_2 \cdot \vec b_1 \Mod{2},
\end{align}
where $\cdot$ denotes the standard dot product.
Two phase-free Paulis commute if and only if their symplectic representations have zero symplectic inner product.
We say that such vectors are symplectically orthogonal.
The standard symplectic basis is given by $\vec e_1, \dots, \vec e_n, \vec f_1, \dots, \vec f_n$, where $\vec e_i$ is $1$ on the $i$th entry and 0 elsewhere, and $\vec f_i$ is $1$ on the $(i+n)$th entry and zero elsewhere.
Note that \begin{align}
    \vec e_i \odot \vec e_j = \vec f_i \odot \vec f_j = 0 ,\quad \vec e_i \odot \vec f_j = \d_{ij} .
\end{align}

\paragraph{Quantum stabilizer codes.} Letting $\mathrm{U}(2^n)$ denote the unitary group on $n$ qubits, we define the Clifford group $\CC_n$ as the set of unitaries which leave the Pauli group invariant under conjugation: \begin{align}
    \CC_n := \set{\vec C \in \mathrm{U}(2^n) \,|\, \vec C \vec P \vec C^\dag \in \CP_n, \,\forall \, \vec P \in \CP_n} .
\end{align}
A $\llbracket n, k \rrbracket$ quantum stabilizer code is the simultaneous $+1$ eigenspace of $n-k$ independent commuting $n$-qubit Pauli operators in $\CP_n$.
Such a code can be succinctly specified by a Clifford operator $\vec C$ (e.g., in the form of a quantum circuit) as well as a choice of logical dimension $k$.
In this case, the code space is given by $\vec{C} (\ket{0^{n-k}} \otimes \ket{\psi})$, where $\ket{\psi} \in (\C^2)^{\otimes k}$ is any $k$-qubit logical state.
The stabilizers of the subspace of states $\ket{0^{n-k}} \otimes \ket{\psi}$ are $\vec{Z}_1, \dots, \vec{Z}_{n-k}$---hence, $\vec{CZ}_1\vec{C}^\dag, \dots \vec{CZ}_{n-k}\vec{C}^\dag$ are stabilizers for code-vectors $\vec{C} (\ket{0^{n-k}} \otimes \ket{\psi})$. A random quantum stabilizer code can be generated by using a uniformly random Clifford operator $\vec C \sim \mathcal{C}_n$ as the encoding map~\cite{poremba2024learning,berg2021simplemethodsamplingrandom}.

If we ignore phases, a Clifford is defined entirely on how it acts (via conjugation) on the $2n$ Paulis $\vec X_1, \dots, \vec X_n, \vec Z_1, \dots, \vec Z_n$.
Thus, a Clifford modulo phases on Paulis can be represented as a matrix in $\Z_2^{2n \times 2n}$.
Since conjugation by a Clifford does not affect the commutation relation of two Pauli operators, the symplectic representation of Cliffords preserves the symplectic inner product.
That is, $(\vec{C}\vec{v}) \odot (\vec{Cw}) = \vec{v} \odot \vec w$ for all $\vec v, \vec w \in \Z_2^{2n}$ if and only if $\vec C \in \Z_2^{2n \times 2n}$ represents a Clifford operation.
These matrices are referred to as symplectic matrices.
For a Clifford $\vec{C}$, we often abuse notation by referring to its matrix representation as $\vec{C}$ as well, instead of $\Symp(\vec{C})$. 

\paragraph{Symplectic linear algebra.}

Given a subspace $S \subseteq \Z_2^{2n}$, we denote its \emph{symplectic dual} or \emph{symplectic orthogonal complement} by \begin{align}
    S^\perp := \set{\vec v \in \Z_2^{2n} \,|\, \vec v \odot \vec w = 0,\,\,\,\forall \vec w \in S} .
\end{align}
As with the standard inner product, $\text{dim}(S) + \text{dim}(S^\perp) = 2n$. 
However, unlike the standard inner product, $S^\perp \cap S$ is not necessarily trivial, e.g., when $S$ is \emph{isotropic} as defined below.

\begin{definition}[Isotropic subspaces and matrices] \label{def:isotropic_subspace}
    We say that a subspace $S$ of $\Z_2^{2n}$ is \emph{isotropic} if every pair of points in $S$ is symplectically orthogonal, i.e., $S \subseteq S^\perp$.
    Note that the maximum dimension of an isotropic subspace is $n$.
    A matrix is isotropic if its image is an isotropic subspace. Equivalently, a matrix is isotropic if its columns are pairwise symplectically orthogonal. 
\end{definition}
We remark that an equivalent characterization of an isotropic matrix $\vec M = \begin{bmatrix} \vec M_1 \\ \vec M_2 \end{bmatrix}$ is that $\vec M_2^\intercal\vec M_1 + \vec M_1^\intercal\vec M_2 = 0$. 
In other words, $\vec M$ is isotropic if and only if $\vec M_1^\intercal\vec M_2$ is symmetric. 

\paragraph{Quantum and classical noise models.}
The $n$-qubit depolarizing noise distribution with error parameter $p \in (0,1)$ is denoted as $\CD_p^{\otimes n}$.
This distribution independently draws a Pauli error on each qubit according to a distribution $\CD_p$; namely, with probability $1-p$, no error occurs, otherwise, with probability $p$, a uniformly random Pauli $\set{\vec X, \vec Y, \vec Z}$ is drawn and applied.
Note that $\CD_{3/4}$ coincides with a uniformly random pair of bits. In slight abuse of notation, we frequently use the same notation in the symplectic representation: we use $\vec e \sim \CD_p^{\otimes n}$ as shorthand notation for first sampling a Pauli operator $\vec E \sim \mathcal{D}_p^{\otimes n}$ and letting $\vec{e} = \Symp(\vec E)$. The resulting error vector $\vec e$
has length $2n$, with each pair $(e_i, e_{i+n})$ drawn independently;
Each such pair is $(0, 0)$ with probability $1-p$, and the other 3 possibilities each with probability $p/3$.
By contrast, a vector $\vec e \sim \Ber(p)^{\otimes n}$ has length $n$, with each bit being $1$ independently with probability $p$, and $0$ otherwise.

\paragraph{Haar measure over states.}
To generate a random logical state $\vec x \in \Z_2^k$, say of a classical linear code, it suffices to choose $\vec x$ uniformly at random.
Quantumly, one way to analogously generate a random logical state is to choose a uniformly random $k$-qubit state.
The notion of a uniformly random state is given by the Haar measure (see~\cite{mele2024introduction}), formally defined for the unitary group.

\begin{definition}[Haar measure] \label{def:haar_measure}
    The Haar measure $\mu$ over the group of $d \times d$ unitary matrices $\text{U}(d)$ is the unique probability measure which is invariant under translation. That is, for all $\vec{V} \in \text{U}(d)$ and integrable functions $f$,
    \begin{align}
        \int f(\vec{V}\vec{U}) \, d\m(\vec{U}) = \int f(\vec{U}) \, d\m(\vec{U}) = \int f(\vec{U}\vec{V}) \, d\m(\vec{U}) .
    \end{align}
\end{definition}
We refer to a \emph{Haar-random state} $\ket{\psi} \sim \mu_k$ to be a state constructed by sampling $\vec{U} \sim \mu$ a unitary drawn from the Haar measure over $\text{U}(2^k)$ and then letting $\ket{\psi} = \vec{U} \ket{0^k}$.

\subsection{Learning Stabilizers with Noise}

The problem of decoding a random quantum error-correcting code, is known as \emph{Learning Stabilizers with Noise}~\cite{poremba2024learning}---the quantum analog of \emph{Learning Parity with Noise} ($\lpn$) which we stated in its \emph{decision} variant: decide whether its input is a true noisy codeword or simply a uniformly random bitstring, as opposed to decoding a given noisy codeword (for a formal definition, see \Cref{app:more_on_lpn}).

For \emph{Learning Stabilizers with Noise}, there are two variants which we will discuss, termed $\slsn$ and $\lsn$. 
Both computational problems characterize the hardness of decoding an encoded quantum state, where the stabilizer code is sampled uniformly at random, the starting logical state is sampled from a specified distribution, and a random depolarizing error is applied to the state. 
In $\slsn$, the logical state is a Haar-random pure state: the problem serves as a natural model for average-case quantum stabilizer decoding. 
Meanwhile, for $\lsn$, the logical state is chosen as a uniformly random computational basis state. 
Because $\lsn$ restricts itself to encoding simpler states, it is easier to characterize.
In particular, it is shown in \cite{khesin2025average} that $\lsn$ is (quantumly) equivalent to a classical computational task. 
Below, we define $\slsn$, and then $\lsn$ in its equivalent classical form. 
We note that in \cite{khesin2025average}, various forms of these two problems are discussed, including a Decision version and a variant with more than one sample. 
The definitions provided here coincide with Search $\slsn$ and Search $\lsn$ with just one sample.

\begin{definition}[Learning Stabilizers with Noise, state variant $\slsn$] \label{def:search-sLSN}
The $\mathsf{state}$ variant of the Learning Stabilizers with Noise problem, denoted by $\slsn(k,n,p)$, is characterized by integers $k,n \in \mathbb{N}$ and $p \in (0,1)$. Both $p$ and $k$ can vary with $n$. Given as input a sample of the form 
    \begin{align}
        (\vec C, \; \vec E \vec C\ket{0^{n-k},\psi}),
    \end{align}
where $\vec C \sim \CC_n$ is a random $n$-qubit Clifford operator (admitting a classical description via, e.g., a circuit), $\vec E \sim \mathcal{D}_p^{\otimes n}$ is an $n$-qubit Pauli sampled from a depolarizing distribution with parameter $p$, and $\ket{\psi} \sim \mathfrak{\mu}_k$ is a Haar random $k$-qubit state, the task is to output a quantum state $\rho_{\psi}$  within average fidelity at least $\frac{1}{2^k}+\frac{1}{\poly(n)}$ of $\ket{\psi}$ over the choice of $\ket{\psi} \sim \mu_k$; that is,
\begin{align}
    \underset{\ket{\psi} \sim \mu_k}{\mathbb{E}}[\braket{\psi|\rho_{\psi}|\psi}] \geq \frac{1}{2^k} + \frac{1}{\poly(n)}.
\end{align}
\end{definition}

Next, we formally define the classical formulation of the $\lsn$ problem. 
\begin{definition}[Learning Stabilizers with Noise, classical representation]\label{def:classical-search-LSN}
$\lsn(k, n, p)$, is characterized by integers $k,n \in \mathbb{N}$ and $p \in (0,1)$. 
Here, the input is
    \begin{align}
        \Big(\begin{bmatrix}
      \vec A \, | \, \vec B  
    \end{bmatrix}, \;
    \begin{bmatrix}
      \vec A \, | \, \vec B  
    \end{bmatrix} \cdot \begin{bmatrix}
        \vec r \\
        \vec y
    \end{bmatrix} + \vec e \Big)
    \end{align}
    where $\vec A \in \Z_2^{2n \times n}$ and $\vec B \in \Z_2^{2n \times k}$ are uniformly random matrices subject to the constraint that $\vec A$ and $\vec B$ are isotropic and that $\begin{bmatrix}
      \vec A \, | \, \vec B  
    \end{bmatrix}$ is full-rank; where $\vec r \sim \Z_2^n$ and $\vec y \sim \Z_2^k$ are random, and where $\mathbf{e}\in \Z_2^{2n}$ is a depolarizing error with parameter $p$, i.e. $\Symp^{-1}(\vec e) \sim \CD_p^{\otimes n}$ (for brevity, we sometimes write $\vec e \sim \CD_p^{\otimes n}$ instead).
    The task is to output $\vec y$ with probability at least $1/2^k + 1/\poly(n)$.
\end{definition}
$\slsn[k, n, p]$ is always at least as hard as $\lsn[k, n, p]$, and in fact the two problems are equivalent when $k = O(\log n )$.
These reductions are necessarily quantum, as the input and output of $\slsn$ include quantum states.
\begin{theorem}[$\slsn$ versus $\lsn$, \cite{khesin2025average}] \label{thm:slsn_lsn_equivalence}
Let $k, n, \in \mathbb{N}$ and $p \in (0, 1)$ such that $k = O(\log n)$. Then there exist efficient quantum reductions from $\lsn[k, n, p]$ to $\slsn[k, n, p]$ and from $\slsn[k, n, p]$ to $\lsn[k, n, p]$. 
\end{theorem}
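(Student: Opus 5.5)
Both reductions go through the standard quantum encoding of the classical $\lsn$ data. Given a classical $\lsn[k,n,p]$ instance $([\vec A\,|\,\vec B], \vec z)$, I first complete $[\vec A\,|\,\vec B]$ to a full symplectic matrix $\vec C \in \Z_2^{2n\times 2n}$ by symplectic Gram--Schmidt. The columns of $\vec A$ play the role of the images of the stabilizer generators $\vec Z_1,\dots,\vec Z_{n-k}$ and the gauge/logical-$\vec Z$ directions, and $\vec B$ plays the role of the logical $\vec X$ operators on the last $k$ qubits. I then pick the remaining columns uniformly subject to the symplectic constraints, so that the induced Clifford $\vec C$ (with uniformly random phases) is uniform over $\CC_n$.

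\textbf{Step 1 (converting between the classical and quantum inputs).} The key identity is
\[
\vec C\ket{0^{n-k},\psi}\bra{0^{n-k},\psi}\vec C^\dagger \text{ averaged with Pauli twirls} \longleftrightarrow \text{Pauli } \Symp^{-1}(\vec A\vec r+\vec B\vec y).
\]
Concretely, $\vec C\,\vec X^{\vec y}\ket{0^{n-k},0^k}$, up to a random stabilizer/gauge Pauli $\Symp^{-1}(\vec A\vec r)$ which acts trivially or as a random phase, equals a codeword encoding $\ket{\vec y}$ acted on by a Pauli. Hence from $\vec z=\vec A\vec r+\vec B\vec y+\vec e$ I can prepare $\Symp^{-1}(\vec z)\,\vec C\ket{0^n}$. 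Up to phase, this is $\vec E\,\vec C\ket{0^{n-k},\vec y}$, with $\vec E=\Symp^{-1}(\vec e)\sim\CD_p^{\otimes n}$. Conversely, given $(\vec C, \vec E\vec C\ket{0^{n-k},\vec y})$, I can apply a random logical Pauli twirl and read off the classical string. I will use the fact from \cite{khesin2025average} that these two representations carry identical information, and I will check that the symplectic constraints on $(\vec A,\vec B)$ exactly match the marginal of a uniformly random Clifford.

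\textbf{Step 2 ($\lsn \le \slsn$).} Given a classical instance, I produce the basis-state instance $(\vec C, \vec E\vec C\ket{0^{n-k},\vec y})$. To make the logical state look Haar random, I apply a logical random unitary $\vec U$ drawn from a unitary $2$-design (or Haar), transversally via $\vec C(\vec I\otimes \vec U)\vec C^\dagger$. This turns $\ket{\vec y}$ into $\vec U\ket{\vec y}$, which is Haar distributed, but the error becomes $\vec C(\vec I\otimes\vec U)\vec C^\dagger\,\vec E$ conjugated. Commuting the error through the logical unitary is the obstacle. I will handle it by instead rerandomizing the code: I replace $\vec C$ by $\vec C'=\vec C(\vec I\otimes \vec U^\dagger)$, which is not Clifford. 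So for $k=O(\log n)$ I will instead run the $\slsn$ solver, obtain $\rho$, apply $\vec U^\dagger$, and measure in the computational basis. The fidelity guarantee $2^{-k}+1/\poly$, averaged over Haar states, converts into a success probability of $2^{-k}+1/\poly$ for recovering $\vec y$. I will use the fact that the Clifford group restricted to the code can realize a Haar-like twirl on the logical space only through Clifford $2$-designs, and that the average fidelity is a degree-$(1,1)$ quantity, so a $2$-design suffices.

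\textbf{Step 3 ($\slsn \le \lsn$) and the main obstacle.} For the converse I use the $\lsn$ solver as a measurement that identifies the Pauli frame. With $k=O(\log n)$, the logical space has dimension $\poly(n)$. I will do logical state tomography-free estimation: first apply a random logical Clifford, then use the $\lsn$ solver to guess the basis label, and output the corresponding state rotated back. This is a measure-and-prepare channel whose average fidelity exceeds $2^{-k}$ by the solver's advantage times $2^{-k}\cdot\Omega(1)$, which is still $1/\poly$ because $2^k=\poly(n)$. The main obstacle is that the $\lsn$ solver needs a classical syndrome-style input, while we hold a quantum state. I expect to resolve this by measuring the state in a basis that yields $\vec A\vec r+\vec B\vec y+\vec e$ as a classical sample. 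Doing so collapses $\ket{\psi}$ to a random basis state distributed with weight $|\braket{\vec y|\psi}|^2$, which is exactly what makes the measure-and-prepare fidelity exceed the trivial $2^{-k}$ by an amount proportional to the solver's advantage.
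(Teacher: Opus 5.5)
The paper gives no proof of this statement; it imports it from \cite{khesin2025average}, so I am judging your argument on its own terms. The genuine gap is Step 2, which as written never produces a legal $\slsn$ input. You draw $\vec U$ from Haar or an arbitrary $2$-design and correctly observe that $\vec C(\vec I\otimes\vec U^\dagger)$ is then not a Clifford. You then ``run the $\slsn$ solver'' without saying on what. Running it on $(\vec C,\cdot)$ hands it a basis-state logical, where it has no guarantee, and undoing $\vec U$ afterwards is then meaningless.

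The missing idea is to take $\vec U$ uniformly from the $k$-qubit Clifford group and apply nothing physically. We have $\vec E\vec C\ket{0^{n-k},\vec y} = \vec E\vec C'(\ket{0^{n-k}}\otimes\vec U\ket{\vec y})$ with $\vec C' := \vec C(\vec I\otimes\vec U^\dagger)$. Since $\vec C$ is uniform and independent of $\vec U$, $\vec C'$ is a uniformly random Clifford independent of $(\vec U,\vec y,\vec E)$. So $(\vec C',\text{state})$ is an honest instance whose logical state $\vec U\ket{\vec y}$ is a uniformly random stabilizer state, and the error-commutation obstacle never arises.

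Your design count also needs correcting. For fixed $\vec C',\vec E$ the solver is a channel $\Phi$, so writing $\psi=\ket{\psi}\bra{\psi}$, the fidelity $\mathrm{Tr}[\psi\,\Phi(\psi)]$ is linear in $\psi^{\otimes 2}$. That is degree $(2,2)$ in $\vec U$, not $(1,1)$; a $1$-design such as random basis states would not suffice. Random stabilizer states form a state $3$-design, so the average fidelity equals its Haar value $\ge 2^{-k}+1/\poly(n)$. Measuring $\vec U^\dagger\rho\,\vec U$ in the computational basis then returns $\vec y$ with exactly that probability. Note that this direction needs no restriction on $k$.

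Steps 1 and 3 are essentially right, with two corrections. In Step 1, the converse is not a ``logical Pauli twirl''. You apply $\vec C^\dagger$, measure in the computational basis, map the outcome back to a Pauli representative, and add a fresh $\vec A\vec r$ to re-randomize within the stabilizer coset.

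In Step 3, drop the preliminary random logical Clifford. The Haar measure is already invariant, and applied as a physical gate $\vec C(\vec I\otimes\vec L)\vec C^\dagger$ it would conjugate $\vec E$ out of $\CD_p^{\otimes n}$. You should also make the fidelity computation explicit. The measured $\vec y$ has law $|\braket{\vec y|\psi}|^2$, independent of $(\vec C,\vec E)$. Using $\mathbb{E}_\psi|\braket{\vec a|\psi}|^2|\braket{\vec b|\psi}|^2=(1+\delta_{\vec a\vec b})/(2^k(2^k+1))$, the measure-and-prepare fidelity is exactly $(1+P_{\mathrm{succ}})/(2^k+1)=2^{-k}+\epsilon/(2^k+1)$. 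This is precisely where $k=O(\log n)$ is needed.
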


The problem of $\slsn[k, n, p]$ (or, equivalently, $\lsn[k, n, p]$) with $k = O(\log n)$ will be the basis for the hardness of the encryption and oblivious transfer schemes that we present. 
However, it is challenging to work directly with these problems because of their complicated setup.
As a result, we will instead define a simpler problem which bears a closer resemblance to $\lpn$.
In the next section, we will prove that in fact $\lsn$ with $k = O(\log n)$ logical qubits reduces to this simpler problem, so that in combination with \Cref{thm:slsn_lsn_equivalence}, our schemes are secured solely by the hardness of $\slsn[k = O(\log n), n, p]$.

\begin{definition}[Decision $\symplpn$] \label{def:sympLPN}
Let $n, k \in \mathbb{N}$ and $p \in (0,1)$ be parameters. Then, $\symplpn[k, n, p]$ is the task of distinguishing with advantage $1/\poly(n)$ between
    \begin{align}
        (\vec{A} , \;\vec{A}\mathbf{x} + \mathbf{e} ) \quad \text{ or } \quad (\vec{A} , \;\mathbf{u} \sim \Z_2^{2n})
    \end{align} 
    where $\vec A\in \Z_2^{2n \times k}$ is a random full-rank isotropic matrix, and where $\vec e$ is a depolarizing error with probability parameter $p$, i.e. $\Symp^{-1}(\vec e) \sim \CD_p^{\otimes n}$ (for brevity, we often write $\vec e \sim \CD_p^{\otimes n}$).
    The former case is called \emph{structured} and the latter called \emph{unstructured}.
    For convenience, we denote $\symplpn[n, n, p]$, a parameter regime of particular relevance, as $\symplpn[n, p]$.
\end{definition}

If $\vec A$ was instead chosen to be a uniformly random matrix, then the above problem would resemble Decision $\lpn$.
A random $2n \times k$ matrix is full-rank with at least constant probability, so the isotropy condition is the fundamental difference between $\lpn$ and $\symplpn$. 
$\symplpn$ was first defined in \cite{khesin2025average} as a tool for reducing $\lpn$ to $\lsn$---\cite{khesin2025average} demonstrates that for any regime of $k$, $\symplpn[n, p]$ reduces to $\lsn[k, n, p]$, and moreover that there is a reduction from $\lpn$ to $\symplpn$ for certain choices of $p$, namely $p = \omega(n^{-1/2})$.

\section{Hardness of $\symplpn$}
\label{sec:hardness_of_symplpn}

We will construct our public-key cryptographic protocols directly from $\symplpn(n, p)$ for certain $p$.
As a consequence, our first result is a reduction from $\slsn$ with very few logical qubits to $\symplpn$.
Therefore, cryptography built on the hardness of $\symplpn$ source their security from the hardness of decoding random quantum stabilizer codes with Haar-random logical states with logarithmically many qubits.
Classically, there is a polynomial time algorithm which solves $\lpn(k, n, p)$ when $k = O(\log n)$ by brute force.
However, quantumly, there is substantial evidence that $\lsn(k, n, p)$ is exponentially hard (in $n$) \emph{for any} $k \geq 1$~\cite{khesin2025average}, intuitively because there is no known efficient way to verify proposed solutions for any $k$ and thus one must brute-force over possible errors instead of logical states.
Therefore, the fact that $k = O(\log n)$ in our codes does not imply that security is lost. 

Our public-key encryption will, however, rely on the hardness of both $\symplpn[n, n, p] = \symplpn[n, p]$ and $\symplpn[n-1, n, p]$.
A reduction from $\lpn[k, n, p]$ to $\lpn[k-1, n, p]$ is trivial for $\lpn$, but is surprisingly challenging for $\symplpn$.
Nonetheless, we will then show how to give such a reduction by introducing some new analytical techniques for symplectic subspaces.
As a consequence, the public-key encryption scheme will only rely on the hardness of $\symplpn[n, p]$, as desired.

\subsection{Reducing $\lsn$ for very few logical qubits to $\symplpn$}

Since $\slsn$ and $\lsn$ are equivalent for $k = O(\log n)$ (see \Cref{thm:slsn_lsn_equivalence}), we here reduce $\lsn[k, n, p]$ to $\symplpn[n, p]$.

\begin{theorem}[$\lsn$ reduces to $\symplpn$] \label{thm:lsn_to_symplpn}
    Let $k, n \in \mathbb{N}$ and let $p \in (0, 1)$.
    Suppose $\CO$ is an oracle which solves $\symplpn[n, p]$. 
    Then there exists a polynomial time algorithm which solves $\lsn(k, n, p)$, using a single call to $\CO$.
\end{theorem}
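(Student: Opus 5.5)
We follow the strategy sketched in the technical overview. Given an $\lsn(k,n,p)$ instance $([\vec A \,|\, \vec B], \vec z)$ with $\vec z = \vec A\vec r + \vec B\vec y + \vec e$, the reduction discards $\vec B$ and feeds $(\vec A, \vec z)$ to $\CO$. If $\CO$ answers \texttt{structured}, it outputs $\hat{\vec y} = \vec 0$; otherwise it outputs a uniformly random $\hat{\vec y} \in \Z_2^k \setminus \set{\vec 0}$. This uses a single oracle call and polynomial time. The analysis has two distributional steps followed by an advantage calculation.

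\emph{Step 1 (marginal of $\vec A$).} First we check that the marginal of $\vec A$ in an $\lsn$ instance is exactly that of a random full-rank isotropic $2n\times n$ matrix. Since $\vec A$ has $n$ columns and is isotropic, full rank of $[\vec A\,|\,\vec B]$ forces $\vec A$ to be full rank, i.e.\ a Lagrangian basis. The joint distribution is invariant under the symplectic group acting on the left, and $\mathrm{Sp}(2n)$ acts transitively on bases of Lagrangian subspaces. Hence the marginal of $\vec A$ is uniform over full-rank isotropic matrices, matching \Cref{def:sympLPN}. Consequently, conditioned on $\vec y=\vec 0$, the pair $(\vec A, \vec A\vec r+\vec e)$ is distributed exactly as a structured $\symplpn[n,p]$ sample.

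\emph{Step 2 (the case $\vec y\neq\vec 0$).} Next we show that for any fixed $\vec y\neq \vec 0$, the pair $(\vec A, \vec A\vec r + \vec B\vec y + \vec e)$ is negligibly close to $(\vec A, \vec u)$. Write $\vec b = \vec B\vec y$, which is a nonzero vector in the image of $\vec B$ lying outside $\mathrm{im}(\vec A)$, by the full-rank condition. Conditioned on $\vec A$, the symmetry of the stabilizer of the Lagrangian $L=\mathrm{im}(\vec A)$ in $\mathrm{Sp}(2n)$ should make $\vec b$ roughly uniform over $\Z_2^{2n}\setminus L$. The remaining isotropy constraints on $\vec B$ (the columns of $\vec B$ must be mutually orthogonal, but not orthogonal to $\vec A$) are where the care is needed. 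We intend to exploit that the stabilizer of $L$ acts transitively on vectors outside $L$: any $\vec v\notin L$ can be completed to a symplectic basis adapted to $L$. Then $\vec A\vec r + \vec b$, with $\vec r$ uniform, is uniform on a uniformly random nontrivial coset of $L$. This gives the uniform distribution on $\Z_2^{2n}\setminus L$, which is $2^{-n}$-close to uniform on $\Z_2^{2n}$; adding the independent $\vec e$ preserves this closeness.

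\emph{Step 3 (advantage).} Let the oracle output \texttt{structured} with probability $q_s$ on structured inputs and $q_u$ on uniform inputs, where $q_s-q_u\ge \epsilon = 1/\poly(n)$; we may flip the output if needed, assuming a fixed sign of advantage. The success probability of the reduction is
\[
2^{-k}q_s + (1-2^{-k})(1-q_u\pm\negl)\cdot\tfrac{1}{2^k-1} = 2^{-k}\bigl(1+q_s-q_u\bigr)\pm\negl \ge 2^{-k}+2^{-k}\epsilon-\negl.
\]
This exceeds $2^{-k}$ by $1/\poly(n)$ provided $k=O(\log n)$, which is the regime intended by the paper's chain of reductions. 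The main obstacle is Step 2: making the conditional distribution of $\vec B\vec y$ given $\vec A$ rigorous under the joint isotropy and full-rank conditioning. The plan is to resolve it by the transitivity argument for the stabilizer of $L$ in $\mathrm{Sp}(2n)$.
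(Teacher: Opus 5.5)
Your proposal is correct, and your reduction, two-case analysis and advantage computation coincide with the paper's. Your bookkeeping via $q_s-q_u$ is slightly more general than the paper's assumption that $\CO$ is correct with probability $\frac12+\epsilon$ on each input type. Like the paper, you rightly note that the final step needs $k=O(\log n)$, even though the statement omits it.

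Where you differ is in justifying the law of $\vec B\vec y$ given $\vec A$. The paper first reduces to $\vec y=(1,0,\dots,0)^\intercal$, since $\vec B\vec y$ is a uniformly random nonzero vector of $\mathrm{im}(\vec B)$. It then samples $\vec B$ column by column, so the first column is uniform on $\Z_2^{2n}\setminus V$. This implicitly uses that the number of admissible choices at step $j$, namely $2^{2n-j+1}-2^n$, does not depend on earlier choices. Your symmetry argument treats every nonzero $\vec y$ at once, and it also proves your Step 1, which the paper only asserts.

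The ``care'' you flag has a precise resolution. The group you use must be the \emph{setwise} stabilizer of $L$, whose elements move $\vec A$. This is legitimate because, conditioned on $\vec A$, $\vec B$ is uniform on the set of full-column-rank isotropic matrices with $\mathrm{im}(\vec B)\cap L=\{\vec 0\}$. That set depends only on $L$ and is preserved by every symplectic map fixing $L$ setwise.

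Symmetries that fix $\vec A$ itself would not suffice. When $L=\mathrm{span}(\vec e_1,\dots,\vec e_n)$, these are the shears $\vec e_i\mapsto\vec e_i$, $\vec f_j\mapsto\vec f_j+\sum_i s_{ij}\vec e_i$ with $(s_{ij})$ symmetric. They preserve each coset of $L$, so they are not transitive on $\Z_2^{2n}\setminus L$. With the observation about the setwise stabilizer made explicit, your Step 2 is fully rigorous.
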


\begin{proof}
    Let $(\vec A, \vec B, \vec z)$ be an instance of $\lsn$, where $\vec z = \vec {Ar} + \vec{B y} + \vec e$. 
    The algorithm queries the oracle $\CO$ with input $(\vec A, \vec z)$.
    If the oracle outputs \texttt{structured}, then return $\vec{y} = \vec 0$, while if the oracle outputs \texttt{unstructured}, return a uniformly random value of $\vec{y} \in \{0, 1\}^k \backslash \{\vec 0\}$.  
    
    We analyze this algorithm in two cases: either $\vec y = \vec 0$ or $\vec y \neq \vec 0$. 
    If $\vec y = \vec 0$, then $(\vec A, \vec z) = (\vec A, \vec A \vec r)$ is precisely a structured instance of $\symplpn$, since marginally $\vec A$ is a uniformly random full-rank isotropic matrix. 
    Now, say that $\vec y \neq 0$. We can assume without loss of generality that $\vec{y} = (1, 0, \dots, 0)^\intercal$. 
    Indeed, for any given $W = \operatorname{im}\vec{B}$, each nonzero vector $W$ is equally likely to be the first column of $\vec{B}$, $\vec{By}$. 
    The distribution of $\vec{By}$ for $\vec{y} = (1, 0, \dots, 0)^\intercal$ and $\vec{By}$ for uniformly random nonzero $\vec{y}$ are both just a uniformly random nonzero vector in $W$. 
    
    Now, observe that the first column of $\vec{B}$ is equally likely to be any vector in $\mathbb{Z}_2^{2n}\backslash V$, where $V = \operatorname{im}\vec{A}$.
    For example, $\vec{B}$ may be sampled column by column by sampling a new vector linearly independent from $\vec{A}$ and symplectically orthogonal to the previous columns of $\vec{B}$, giving the first column this simple form. 
    Thus in the $\vec y \neq \vec 0$ case, $\vec z \sim (\vec{Ax} + \vec e) + \vec b_1$, where $\vec b_1$ is the first column of $\vec B$, uniformly random over $\Z_2^{2n} \setminus V$, and $\frac{|V|}{2^{2n}} = 2^{-n}$.
    Consequently, $\vec b_1$ has total variation distance $2^{-n}$ from that of uniformly random, using the fact that the TV distance between uniform distributions over sets $S$ and $T \subseteq S$ is $1 - \frac{|T|}{|S|}$.
    Hence, the total variation distance between $(\vec{A}, \vec{z})$ and $(\vec{A}, \vec{u})$, where $\vec{u}$ is uniformly random, is also $O(2^{-n}) = \negl(n)$. 
    In summary, therefore, if $\vec y = \vec 0$, then the oracle $\CO$ is given a sample from the \texttt{structured} distribution of $\symplpn[n, p]$; if $\vec y \neq \vec 0$, then $\CO$ is given a sample drawn from a distribution within $\negl(n)$ TV distance from the \texttt{unstructured} distribution of $\symplpn[n, p]$.
    
    To complete the reduction, we calculate the success probability of our algorithm for $\lsn[k, n, p]$. 
    $\mathcal{O}$ outputs the correct instance type with probability $\frac{1}{2} + \epsilon$, for some $\epsilon=  \frac{1}{\poly(n)}$. If $\vec{y} = 0$, $\mathcal{O}$ outputs \texttt{structured} with probability at least $\frac{1}{2} + \epsilon$, and the algorithm is correct in this case.
    Meanwhile, if $\vec{y} \neq 0$, then $\mathcal{O}$ outputs \texttt{unstructured} with probability $\frac{1}{2} + \epsilon - \negl(n)$, implying that the algorithm guesses a random nonzero $\vec{y}$ and so is correct with probability $\frac{1}{2^k - 1}(\frac{1}{2} + \epsilon - \negl(n))$. 
    The probability of correctness over random $\vec{y}$ is
    \begin{align}
    \frac{1}{2^k}\left(\frac{1}{2} + \epsilon\right) + \frac{2^k-1}{2^k} \frac{1}{2^k - 1}\left(\frac{1}{2} + \epsilon - \negl(n)\right) \, = \, \frac{1}{2^k} + \frac{1}{\poly(n)},
    \end{align}
    %     \begin{align}
    % &\frac{1}{2^k}\left(\frac{1}{2} + \epsilon\right) + \frac{2^k-1}{2^k} \frac{1}{2^k - 1}\left(\frac{1}{2} + \epsilon - \negl(n)\right)\\&= \frac{1}{2^k}\left(1 + 2\epsilon - \negl(n)\right)\\
    % &= \frac{1}{2^k} +  \frac{1}{2^{k} \poly(n)} \, = \, \frac{1}{2^k} + \frac{1}{\poly(n)},
    % \end{align}
    where the last equality holds because $2^k = 2^{O(\text{log} n)} = \poly(n)$.

\end{proof}

\subsection{Reducing logical bits in $\symplpn$ without reducing hardness}
\label{sec:reducing_one_logical_sympLPN}

In this section, we tighten the hardness of $\symplpn$ by reducing $\symplpn(n, p) = \symplpn(n, n, p)$ to $\symplpn(n-1, n, p')$ for slightly larger $p'$.
In particular, this reduction barely changes the parameters, simply reducing the number of logical bits by 1. 
However, it is surprisingly technical and essential for our cryptographic constructions.
To give a baseline intuition for this reduction, we show in \Cref{app:more_on_lpn} a simple analogous reduction for $\lpn$, which proceeds by simply removing the last column of the code matrix.
Unfortunately, as discussed in \Cref{sec:technical_overview}, this proof technique does not extend to $\symplpn$ because the last column of the $\symplpn$ code matrix depends heavily on the rest of the matrix.
Thus, our reduction technique proceeds via a completely different approach.

Our proof will proceed in three steps.
We begin with a sample of $\symplpn[n, p]$, $(\vec{A}, \vec{u})$, where $\vec{u}$ is either structured (i.e. of the form $\vec{Ax}+\vec{e}$) or uniformly random. 
Letting $V = \operatorname{im}(\vec{A})$, we consider the subspace $W := \text{span}(\vec f_{1})^{\perp} \cap V$, i.e. the hyperplane of vectors symplectically orthogonal to $\vec f_{1}$. 
$W$ has dimension $n-1$ with overwhelming probability (so we will assume it does), and it contains precisely the vectors in $V$ whose first coordinate is zero. 
Suppose $\vec{u}$ is structured.
If the first physical bit has no error, i.e. $e_1 = 0$, then $\vec{Ax} \in W$ if and only if the first bit of $\vec{Ax} + \vec e$ is 0.
We wish to transform $\vec u$ to some $\vec u'$ such that $\vec u'$ is of the form $\vec{Ax}' + \vec e$ but $\vec{Ax}' \in W$.
If $\vec{Ax} \in W$, then no further action is required, and we set $\vec u' = \vec u$. 
If $\vec{Ax}$ is \textit{not} in $W$, by adding some $\vec{v} \in V \backslash W$ we can obtain $\vec{u'} := (\vec{Ax} +\vec{v}) + \vec{e} = \vec{A x}' + \vec e$ such that $\vec{Ax}' \in W$.
Note that if instead $\vec u$ were unstructured, this transformation would leave it uniformly random.

After this step, under the assumption that $e_1 = 0$, we have almost prepared a sample of $\symplpn$ with $n-1$ logical qubits. 
Indeed, letting $\vec{B}$ be a code with $W$ as its codespace, we can prepare $\vec{u'}$ so that:
\begin{enumerate}
\item If $\vec{u}$ is uniformly random, then $\vec{u'}$ is uniformly random. 
\item If $\vec{u}=\vec{Ax}+\vec{e}$, then $\vec{u'} = \vec{By} + \vec{e}$ for uniformly random $\vec{y}$. 
\end{enumerate}
However, $W$ is \textit{not} a uniformly random $(n-1)$-dimensional isotropic vector space, because it is always symplectically orthogonal to a known vector, namely $\vec{f}_{1}$.
Therefore, $\vec{B}$ is not a uniformly random matrix with symplectically orthogonal columns. 
In the second step of the reduction, we design a sparse Clifford operator $\vec{C}$ which randomly rotates the normal vector $\vec f_{1}$ (the hyperplane then follows along).
This operation fully scrambles the code, so that $\vec{CB} \in \Z_2^{2n \times (n-1)}$ is a uniformly random matrix with symplectically orthogonal columns. 

This second step converts a structured sample $\vec{u'}$ to $\vec{C}\vec{u'}=\vec{CBy} + \vec{Ce}$, however, which damages the distribution of the noise $\vec{e}$. 
However, since $\vec{C}$ is sparse, as long as $e_{n+1} = 0$, i.e. no error occurs on the $(n+1)$st physical bit, $\vec{Ce}$ is low-weight.
In this case, we are able to re-randomize to transform $\vec{Ce}$ back into depolarizing noise, at the cost of a slightly larger noise parameter $p'$.
Now, we may query the $\symplpn[n-1, n, p']$ oracle, which will output \texttt{structured} or \texttt{unstructured}, and the reduction can output the same for the sample $(\vec{A}, \vec{u})$. 

This completes the reduction, but only under the assumption that a good event occurs, namely that $e_1 = e_{n+1} = 0$. 
However, interestingly, this assumption can be completely circumvented. 
In the original sample, $(e_1, e_{n+1})$ follow a 1-qubit depolarizing distribution with parameter $p$.
We are, however, free to increase this noise parameter to any $q > p$ by adding additional error ourselves.
For example, we could create an alternate reduction by first adding completely random bits to increase the noise on $(e_1, e_{n+1})$ to that of $\CD_{3/4}$, before executing the reduction outlined above.
We show that at least one of these two reductions must succeed, or else the reduction could not possibly succeed when the good event \emph{does} occur.
This is a contradiction, and thus either the original reduction or the alternate one with extra noise added must reduce $\symplpn[n, p]$ to $\symplpn[n-1, n, p']$.

To establish the above argument rigorously, we begin by constructing the sparse, randomizing distribution of Cliffords.
Recall that $\vec e_1, \dots, \vec e_n, \vec f_1, \dots,  \vec f_{n}$ denotes the standard symplectic basis on $\Z_2^{2n}$.

\begin{definition}[Symplectic hyperplane rotation] \label{def:symp_hyperplane_rotation}
For $n \geq 2$, a \emph{random symplectic hyperplane rotation} is an efficiently sampleable ensemble of matrices $\mathcal{R}_n$ in $\Z_2^{2n \times 2n}$ constructed as follows.
Sample a uniformly random $\vec r \sim \Z_2^{2n}$. 
Let $k$ be the first (positive) index for which $r_{n+k} = 1$ (if there is no such index, output $\vec{C} := \vec I$). 
Let $\vec r'$ be the same as $\vec r$, except the $(n+1)$st and $(n+k)$th entries are swapped.
Let the map $\vec C_0$ execute the following map on standard basis vectors.
\begin{align}
    \vec e_1 & \longmapsto \vec e_1, \\
    \vec f_1 & \longmapsto \vec r', \\
    \vec e_j & \longmapsto \vec e_j + (\vec r' \odot \vec e_j)\vec e_1\quad \text{for $j$ from $2$ to $n$} , \\
    \vec f_j & \longmapsto \vec f_j + (\vec r' \odot \vec f_j)\vec e_1\quad \text{for $j$ from $2$ to $n$}
\end{align}
Define a matrix $\vec{\Pi}$ which swaps $(\vec e_1, \vec f_1)$ and $(\vec e_k, \vec f_k)$, and otherwise for each $j \neq 1, k$ acts as identity on $\vec e_j, \vec f_j$. 
Then, output $\vec{C} := \vec{\Pi}\vec{C_0}$. 
Note that $\vec{C}\vec{f}_1 = \vec{r}$.

\end{definition}
This rotation, by uniformly randomizing $\vec{f}_1$, also randomizes the hyperplane that is symplectically orthogonal to it.
The remaining relations serve only to ensure that the map is a sparse symplectic matrix (i.e., that it preserves the symplectic inner product).
We claim that $\vec C$ is a valid symplectic matrix which maps a random symplectic hyperplane orthogonal to $\vec f_1$ to a completely random symplectic hyperplane.

\begin{lemma}[Randomizing symplectic hyperplanes] \label{lemma:randomizing_symp_hyerplanes}
Denote by $\CR_n$ the ensemble of random symplectic hyperplane rotations from \Cref{def:symp_hyperplane_rotation}.
Sample $\vec C \sim \CR_n$.
Then the following properties hold.
\begin{enumerate}
    \item[(1) ] $\vec C$ is a symplectic matrix.

    \item[(2) ] For $\vec v_1, \dots, \vec v_{n-1} \in \Z_2^{2n}$ a uniformly random basis of a $(n-1)$-dimensional isotropic subspace $V$ (see \Cref{def:isotropic_subspace}) which is random subject to the condition that $V$ is symplectically orthogonal to $\vec f_{1}$, let $\vec w_i := \vec C \vec v_i$.
    Then the distribution of $(\vec w_1, \dots, \vec w_{n-1})$ is within total variation distance $\negl(n)$ to a uniformly random basis of a random $(n-1)$-dimensional isotropic subspace.
\end{enumerate}
\end{lemma}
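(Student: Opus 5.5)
Part (1) is a direct check on the standard symplectic basis. $\vec\Pi$ is symplectic, since it permutes the pairs $(\vec e_1,\vec f_1)\leftrightarrow(\vec e_k,\vec f_k)$. So it suffices to show $\vec C_0$ is symplectic, i.e.\ that it preserves $\odot$ on all pairs of basis vectors. Let $\vec e_j' , \vec f_j'$ denote the images for $j\ge 2$.
\begin{itemize}
\item The vector $\vec e_1$ is isotropic to every image except $\vec r'$, and $\vec e_1\odot\vec r' = r'_{n+1} = 1$. This holds because the swap in $\vec r'$ places the first nonzero $f$-coordinate of $\vec r$ at position $n+1$.
\item For the cross terms, $\vec r' \odot \vec e_j' = \vec r'\odot\vec e_j + (\vec r'\odot\vec e_j)(\vec r'\odot \vec e_1) = 0$, using $\vec r'\odot\vec e_1=1$. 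The same computation gives $\vec r'\odot\vec f_j'=0$.
\item For $i,j\ge2$, the added multiples of $\vec e_1$ are orthogonal to each other and to $\vec e_i,\vec f_j$ with $i,j\geq 2$. Hence $\vec e_i'\odot\vec f_j'=\delta_{ij}$ and the other pairings vanish.
\end{itemize}
Invertibility then follows from symplecticity. I would also record that $\vec C\vec f_1 = \vec\Pi\vec r' = \vec r$.

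For part (2), I would argue by symmetry. Let $\mathcal G$ be the stabilizer of $\vec f_1$ in the symplectic group $\mathrm{Sp}(2n)$. The conditioned distribution of $(\vec v_1,\dots,\vec v_{n-1})$ is invariant under $\mathcal G$, since $\mathcal G$ preserves both isotropy and orthogonality to $\vec f_1$.

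Now condition on $\vec r$, which is nonzero with probability $1-2^{-n}$ in the relevant case. The image $\vec C V$ is an isotropic $(n-1)$-space orthogonal to $\vec C\vec f_1=\vec r$. Equivalently, $\vec C$ maps the uniform measure on bases of $(n-1)$-dimensional isotropic spaces in $\vec f_1^\perp$ bijectively onto the uniform measure on bases of $(n-1)$-dimensional isotropic spaces in $\vec r^\perp$. The reason is that $\vec C$ is a linear symplectic bijection carrying $\vec f_1^\perp$ onto $\vec r^\perp$, so it maps the set of such bases bijectively, and a bijection pushes a uniform measure to a uniform measure. Hence, conditioned on $\vec r$, the output is a uniform basis of a uniform isotropic $(n-1)$-space inside $\vec r^\perp$.

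It remains to compare this with the target distribution. Define
\[
N(\vec r) := \#\{\text{$(n-1)$-dimensional isotropic } W : \vec r \in W^\perp\}.
\]
By transitivity of $\mathrm{Sp}(2n)$ on nonzero vectors, $N(\vec r)$ is the same for every $\vec r\ne 0$. A truly uniform isotropic $(n-1)$-dimensional $W$ has $W^\perp$ of dimension $n+1$. So the target distribution equals: pick a uniform $W$, then note it lies in $\vec r^\perp$ for every $\vec r\in W^{\perp}\setminus\{0\}$. Double counting the pairs $(W,\vec r)$ with $\vec r\in W^\perp\setminus\{0\}$ shows that drawing $\vec r$ uniformly from the nonzero vectors, and then $W$ uniformly among the isotropic spaces orthogonal to $\vec r$, yields exactly the uniform distribution on $W$. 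This works because each $W$ has the same number $2^{n+1}-1$ of such $\vec r$, and each nonzero $\vec r$ has the same number $N(\vec r)$ of such $W$.

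Our $\vec r$ is uniform on $\Z_2^{2n}$, and so is within $2^{-2n}$ of uniform on the nonzero vectors. The degenerate case, where $\vec r$ has no nonzero $f$-part, occurs with probability $2^{-n}$. Together these give total variation distance $O(2^{-n})=\negl(n)$.

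The main obstacle is the symmetry step in part (2). One must check that conditioning on $V\subseteq \vec f_1^\perp$ gives the uniform measure over bases of such spaces. One must also check that $\vec C$'s dependence on $\vec r$ does not bias the result; this is fine because, for each fixed $\vec r$, $\vec C$ is a fixed bijection. A second point to verify is the double-counting uniformity, which rests on the transitivity of $\mathrm{Sp}(2n)$ on nonzero vectors and on isotropic subspaces of each dimension, together with Witt's extension theorem.
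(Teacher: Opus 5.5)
Your proposal is correct and follows the paper's proof. For (1) it checks $\vec C_0$ and $\vec\Pi$ directly on the standard symplectic basis. For (2) it uses that $\vec C$ bijectively maps bases of isotropic $(n-1)$-spaces orthogonal to $\vec f_1$ onto those orthogonal to $\vec C\vec f_1$, followed by the same double count over pairs $(W,\vec w)$ with $\vec w\in W^\perp\setminus\{\vec 0\}$; your appeal to transitivity of $\mathrm{Sp}(2n)$ on all nonzero vectors to equalize the counts is slightly more complete than the paper's, whose bijection only reaches targets with nonzero $f$-part. One minor slip you share with the paper: as defined, $\vec C\vec f_1=\vec\Pi\vec r'$ is $\vec r$ with coordinates $1$ and $k$ swapped, not $\vec r$ itself, but since the $e$-part of $\vec r$ is uniform and independent of $k$, $\vec C\vec f_1$ is still uniform over vectors with nonzero $f$-part, so the argument is unaffected.
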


The proof is deferred to \Cref{app:additional_proofs}.
Before we proceed to the reduction, we also require a lemma that total depolarizing noise, applied to only certain qubits, can be scrambled into completely symmetric depolarizing noise on all qubits.
A similar technique was used in \cite{khesin2025average} to reduce from $\lpn$ to $\symplpn$.

\begin{lemma}[Noise symmetrization] \label{lemma:noise_symmetrization}
Let $\vec e \in \Z_2^{2n}$, and let $\pi \sim S^n$ be a random permutation on $n$ elements.
We define $\pi$ to act on elements of $\Z_2^{2n}$ by executing the same permutation on the first and last set of $n$ indices.
Suppose that $(e_j, e_{n+j}) = (0, 0)$ for all $j$ except in a known set $M$ of size $m = \omega(\log n)$; for $j \in M$, $(e_j, e_{n+j}) \sim \CD_{3/4}$ independently (i.e., random bits).
Then, there exists a distribution $\mu_n$ over $\Z_2^{2n}$, sampleable in time $\poly(n)$, such that $\pi(\vec e + \vec e')$ for $\vec e' \sim \mu_n$ is within total variation distance $\negl(n)$ from $\CD_{q}^{\otimes n}$, where $q = \frac{m}{n}$.
\end{lemma}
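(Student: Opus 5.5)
The plan is to write the target distribution $\CD_q^{\otimes n}$ as a \emph{random set of uniformly random pairs}, and then use $\vec e'$ and the permutation $\pi$ to reproduce that form exactly, except on a Chernoff-small failure event. The starting observation is that for $q \le 3/4$, a single-qubit draw from $\CD_q$ is the same as the following: with probability $\tfrac{4q}{3}$, output a uniformly random pair in $\Z_2^2$ (that is, a draw from $\CD_{3/4}$); otherwise output $(0,0)$. Indeed, this gives $(0,0)$ with probability $1-\tfrac{4q}{3}+\tfrac{q}{3} = 1-q$, and each nonzero pair with probability $q/3$.

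Consequently, $\CD_q^{\otimes n}$ equals the following two-stage distribution. First sample $N \sim \Bin(n, \tfrac{4q}{3})$ and a uniformly random subset $S \subseteq [n]$ of size $N$. Then place independent uniform pairs $(e_j, e_{n+j})$ on the indices $j \in S$ and $(0,0)$ elsewhere. This requires $q = m/n \le 3/4$. We assume this (it is the relevant regime, since otherwise the noise already exceeds total depolarization on that fraction); if necessary we would handle the remaining range separately.

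We then define $\mu_n$ as follows.
\begin{enumerate}
\item Sample $N \sim \Bin(n, \tfrac{4m}{3n})$.
\item If $N \ge m$, choose a uniformly random set $T \subseteq [n]\setminus M$ of size $N-m$.
\item Let $\vec e'$ carry independent uniform pairs on the coordinates of $T$ and be zero elsewhere.
\item If $N < m$, output $\vec e' = \vec 0$; call this the failure event.
\end{enumerate}
This is clearly sampleable in $\poly(n)$ time, and it depends only on the known set $M$.

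Conditioned on $N \ge m$, the vector $\vec e+\vec e'$ has independent uniform pairs exactly on $M \cup T$, a set of size $N$, and zeros elsewhere. Applying the uniformly random $\pi$, which acts identically on both halves and so preserves pairs, maps $M\cup T$ to a uniformly random $N$-subset. The pair values remain i.i.d.\ uniform. Hence, conditioned on $N$ with $N\ge m$, the distribution of $\pi(\vec e+\vec e')$ coincides exactly with that of $\CD_q^{\otimes n}$ conditioned on the same $N$.

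By a standard coupling, which shares $N$ between the two distributions, the total variation distance is at most $\Pr[N < m]$. Here $\mathbb{E}[N] = 4m/3$, so the event $N<m$ requires a constant-factor deviation below the mean. A multiplicative Chernoff bound gives $\Pr[N<m] \le \exp(-\Omega(m))$, which is $\negl(n)$ because $m=\omega(\log n)$.

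The main (mild) obstacle is getting the exact decomposition of $\CD_q$ right, so that the conditional distributions match exactly rather than approximately. It is also important to check that $\pi$ uniformizes the support set while preserving the pairing between coordinates $j$ and $n+j$. Beyond that, only the Chernoff tail needs care.
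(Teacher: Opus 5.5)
Your proposal is correct and follows essentially the same route as the paper: decompose $\CD_q$ as a uniform pair with probability $\tfrac43 q$, sample $N\sim\Bin(n,\tfrac43 q)$, pad the noisy set up to $N$ uniform pairs, apply $\pi$, and control the failure event $N<m$ by a Chernoff bound using $m=\omega(\log n)$. The only differences are cosmetic: you add fresh noise only off $M$, whereas the paper also re-randomizes the pairs on $M$ (harmless). You also state explicitly the coupling and the requirement $q\le 3/4$, which the paper leaves implicit.
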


The proof is deferred as well to \Cref{app:additional_proofs}.
We also record a relevant lemma about the product of depolarizing distributions. 
\begin{lemma}[Depolarizing convolution, \cite{khesin2025average}] \label{lemma:depolarizing_convolution}
    Let $\vec P_1, \vec P_2 \in \CP_1$ be single-qubit phase-free Paulis such that $\vec P_1 \sim \CD_p$ and, independently, $\vec P_2 \sim \CD_u$ where $u = \frac{q-p}{1 - \frac{4}{3} p}$ for any $p \in [0, \frac34]$ and $q \in [p, \frac34]$.
    Then $\vec P_1 \vec P_2 \sim \CD_{q}$.
\end{lemma}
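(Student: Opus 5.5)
The plan is a direct single-qubit computation, organized around a symmetry argument so that only one probability needs to be computed. Write $\CD_p$ as a distribution on $\Z_2^2$ in symplectic form: it puts mass $1-p$ on $(0,0)$ and $p/3$ on each of the three nonzero vectors. By the isomorphism $\Symp(\vec P_1\vec P_2)=\Symp(\vec P_1)+\Symp(\vec P_2)$ (phases are irrelevant in $\CP_1$), the law of $\vec P_1\vec P_2$ is the convolution of $\CD_p$ and $\CD_u$ on the group $\Z_2^2$.

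\textbf{Symmetry step.} Every permutation of the three nonzero elements of $\Z_2^2$ is a group automorphism of $\Z_2^2$. Any three-element set of nonzero vectors there sums to zero, so any bijection of them extends linearly; equivalently, single-qubit Cliffords permute $\vec X,\vec Y,\vec Z$ arbitrarily modulo phase. Both $\CD_p$ and $\CD_u$ are invariant under these automorphisms. Automorphisms commute with convolution, so the convolution is invariant too. Hence it assigns equal mass to $\vec X,\vec Y,\vec Z$, and is therefore $\CD_{q'}$ for some $q'$. It remains only to determine $q'$ from the mass at the identity.

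\textbf{Identity mass.} The product is $\vec I$ exactly when $\vec P_1=\vec P_2$. This gives
\begin{align}
1-q' = (1-p)(1-u) + 3\cdot\frac{p}{3}\cdot\frac{u}{3} = 1-p-u\Bigl(1-\frac{4}{3}p\Bigr).
\end{align}
So $q' = p + u(1-\tfrac43 p)$. Substituting $u=\frac{q-p}{1-\frac43 p}$ yields $q'=q$.

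\textbf{Validity of $\CD_u$.} We must check that $u$ is a legitimate parameter, namely $u\in[0,\tfrac34]$.
\begin{itemize}
\item For $p<\tfrac34$ the denominator $1-\tfrac43 p$ is positive. Then $u\ge 0$ because $q\ge p$.
\item The bound $u\le \tfrac34$ rearranges to $q-p\le \tfrac34 - p$, which is just $q\le\tfrac34$.
\item The only delicate point, and the nearest thing to an obstacle, is the degenerate endpoint $p=\tfrac34$. Then $q=\tfrac34$ and $u$ is $0/0$. Here $\CD_{3/4}$ is uniform on $\Z_2^2$, and uniform convolved with anything is uniform, so the conclusion holds for any choice of $u$ (for instance $u=0$).
\end{itemize}
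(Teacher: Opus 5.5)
Your proof is correct and complete. The reduction to a convolution on $\Z_2^2$ is sound, and so is the symmetry step: every permutation of the three nonzero vectors is a linear automorphism, which forces the product law to be depolarizing. Your identity-mass computation $1-q' = (1-p)(1-u) + \frac{pu}{3} = 1-p-u\bigl(1-\frac43 p\bigr)$ also checks out, as does your handling of the range of $u$ and of the degenerate endpoint $p=\frac34$. Note that the paper does not prove this lemma itself; it imports it from \cite{khesin2025average}. So there is no in-paper argument to compare yours against.

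A slightly different route uses the representation the paper relies on in its proof of \Cref{lemma:noise_symmetrization}. There, $\CD_q$ is the mixture that is the identity with probability $1-\frac43 q$ and uniform on $\Z_2^2$ (i.e.\ $\CD_{3/4}$) otherwise; this is valid exactly for $q\in[0,\frac34]$, which is where your check $u\le\frac34$ is actually used. Since a uniform element times anything independent is uniform, the product $\vec P_1\vec P_2$ is the identity with probability $(1-\frac43 p)(1-\frac43 u)$ and uniform otherwise. Hence it is $\CD_q$ precisely when $1-\frac43 q = (1-\frac43 p)(1-\frac43 u)$, which rearranges to the stated formula for $u$.

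That route skips the symmetry argument and shows why $1-\frac43 p$ appears: it is the factor by which a depolarizing step shrinks the non-uniform part, and such factors multiply under composition. Your route instead works directly with the given parametrization and needs only a single probability computation once symmetry has fixed the shape of the distribution.
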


\begin{theorem}[Reducing by one logical bit in $\symplpn$] \label{thm:sympLPN_reducing_logicals}
    Let $p \in (0, 1)$ be such that $\frac{3}{4} - p = \W(1)$, and let $n \in \mathbb{N}$ such that $n \geq 2$.
    Suppose $\CO$ is an oracle which solves Decision $\symplpn(n-1, n, p + \log^2 n / n)$.
    Then there exists an algorithm running in time $\poly(n)$, which solves Decision $\symplpn(n, p)$, using a single call to $\CO$.
\end{theorem}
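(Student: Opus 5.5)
I will follow the blueprint sketched just before the statement: build a transformation $T$ that maps a $\symplpn(n,p)$ sample $(\vec A,\vec u)$ to a candidate $\symplpn(n-1,n,p')$ sample with $p' = p + \log^2 n/n$, feed it to $\CO$, and return $\CO$'s answer. I then combine $T$ with a noise-boosted variant via an interpolation argument. Concretely, $T$ does the following.

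(i) Let $V=\operatorname{im}\vec A$ and $W=V\cap \operatorname{span}(\vec f_1)^\perp$, i.e. the vectors of $V$ with first coordinate $0$. If $\dim W\neq n-1$, which happens only when $V\subseteq \vec f_1^\perp$ and hence with probability $\negl(n)$, output a random guess. Otherwise fix $\vec v\in V\setminus W$ and set $\vec u'=\vec u+u_1\vec v$. Let $\vec B$ be a uniformly random basis of $W$.

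(ii) Sample $\vec C\sim\CR_n$ and form $(\vec{CB},\vec{Cu'})$.

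(iii) Re-randomize the noise using \Cref{lemma:noise_symmetrization}, combined with independent depolarizing padding via \Cref{lemma:depolarizing_convolution}. Apply a uniformly random permutation $\pi$ to both the matrix and the vector; since $\pi$ is symplectic, this preserves the isotropic-matrix distribution.

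In the unstructured case, $\vec u'$ is uniform. Since $\vec C$ is invertible and the added noise is independent, the output is exactly $(\vec A',\text{uniform})$, where $\vec A'$ is $\negl$-close to a random full-rank isotropic $2n\times(n-1)$ matrix by \Cref{lemma:randomizing_symp_hyerplanes}(2). Note that $W$ is a random $(n-1)$-dim isotropic subspace orthogonal to $\vec f_1$, because $V$ is uniform.

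In the structured case, conditioned on the good event $G=\{e_1=e_{n+1}=0\}$, we have $u_1=(\vec{Ax})_1$. Hence $\vec u'=\vec A\vec x'+\vec e$ with $\vec A\vec x'$ a uniform element of $W$, i.e. $\vec u' = \vec B\vec y+\vec e$ for uniform $\vec y$, and $\vec{Cu'}=\vec{CB}\vec y+\vec{Ce}$.

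The noise analysis is the main obstacle. From \Cref{def:symp_hyperplane_rotation}, $\vec C$ acts as identity on $\vec e_1$, sends $\vec f_1$ to $\vec r$, and only adds multiples of $\vec e_1$ elsewhere, followed by the swap $\vec\Pi$. Given $e_{n+1}=0$, we therefore get $\vec{Ce}=\vec\Pi(\vec e+c\,\vec e_1)$ for a single bit $c$. So $\vec{Ce}$ is a depolarizing vector on $n-1$ qubits (relabelled by $\vec\Pi$), plus a possibly corrupted, non-depolarizing pair at the single qubit $k$.

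I plan to handle this as follows. Pick a random set $M$ of $m=\log^2 n$ coordinates containing $k$ and flood those pairs with $\CD_{3/4}$, i.e. uniform bits. This erases the structure at qubit $k$ and turns $M$ into fully random pairs. Outside $M$, the noise is still $\CD_p$ i.i.d. The symmetrization lemma then turns the $\CD_{3/4}$-on-$M$ part, together with the remaining $\CD_p$ part, into i.i.d. noise after $\pi$. I will need a version of \Cref{lemma:noise_symmetrization} that tolerates the background $\CD_p$ noise. My first attempt is to note that each qubit lies in $M$ with probability $m/n$, so after permutation the marginal per-qubit law is a mixture of $\CD_p$ and $\CD_{3/4}$, which is $\CD_{p''}$ with $p''\le p+\tfrac34 m/n$. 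The remaining correlations, coming from $|M|$ being fixed, are removed by the lemma's $\mu_n$ (randomizing $|M|$ to a binomial). Finally, \Cref{lemma:depolarizing_convolution} tops the noise up to exactly $p'$; the slack here uses $\tfrac34-p=\W(1)$. I expect the bookkeeping in this step, namely getting $p'$ with the constant absorbed into the $\log^2 n/n$ term and verifying the total variation bound, to be the hardest part.

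The interpolation argument goes as follows. Let $\CO$ have advantage $\epsilon$ on $\symplpn(n-1,n,p')$, and let $a_S=\Pr[\CO=1\mid\text{structured input}]$, $a_U=\Pr[\CO=1\mid\text{unstructured}]$, so $a_S-a_U\ge\epsilon$. In reduction $R_0$, which is $T$ as above, a structured input yields the good-case output with probability $g=\Pr[G]=1-2p/3$ and some bad-case distribution otherwise, on which $\CO$ outputs $1$ with probability $b$; the unstructured case gives $a_U\pm\negl$. In reduction $R_1$, first add uniform bits to coordinates $(1,n+1)$, which preserves the unstructured case, so $\Pr[G]$ becomes $1/4$. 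Crucially, conditioned on $G$ or on $\neg G$ the output distributions are the same in both reductions (the bad-case distribution has the same law given $\neg G$ because the flooding resamples from a conditioned distribution; I will check this, or condition on the exact value of $(e_1,e_{n+1})$). Then $R_0$ has structured acceptance $g a_S+(1-g)b$ and $R_1$ has $\tfrac14 a_S+\tfrac34 b$. If both differed from $a_U$ by less than $\delta$, subtracting the two would give $(g-\tfrac14)(a_S-b)$ small, so $b\approx a_S$, and then $a_S\approx a_U$, contradicting $\epsilon$. Since $g-\tfrac14=\W(1)$, one of $R_0,R_1$ has advantage $\W(\epsilon)$. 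The algorithm runs one of the two chosen at random, which gives advantage $\W(\epsilon)$ with a single oracle call.
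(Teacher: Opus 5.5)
Your plan follows the paper's proof step for step. You restrict to $V\cap\vec f_1^\perp$ and shift by $u_1\vec v$, rotate with $\vec C\sim\CR_n$, flood pair $k$, symmetrize and convolve up to $p'$, and interpolate against a second reduction that first randomizes pair $1$.

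Your direct mixture computation for the noise is a valid substitute for the paper's. After the binomial randomization the inclusion probability is $\frac{4}{3}\frac{m}{n}$, which gives $p''=p+\frac{m}{n}(1-\frac{4}{3}p)\le p'$ for $m=\log^2 n$, so the top-up is legitimate. The paper avoids needing a new version of \Cref{lemma:noise_symmetrization}. It writes the flooded noise as $\vec e''+\vec t$ with $\vec e''\sim\CD_p^{\otimes n}$ independent of $\vec t$ (a $\CD_p$ pair plus an independent uniform pair is uniform), applies the lemma to $\vec t$ alone, and then applies \Cref{lemma:depolarizing_convolution}.

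There are, however, three points to fix, and the first is a genuine gap.

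First, running $R_0$ or $R_1$ chosen at random does not give advantage $\W(\epsilon)$. The signed advantages are $\Delta_0=g\epsilon+(1-g)(b-a_U)$ and $\Delta_1=\frac14\epsilon+\frac34(b-a_U)$. They can have opposite signs and cancel, because $b$ is the oracle's acceptance probability on off-promise inputs and can be anything. For example, with $p=0.1$, $a_U=0.5$, $a_S=0.6$ and $b\approx 0.365$, you get $\Delta_0=-\Delta_1\approx 0.076$, so the random mixture has advantage $0$. Your argument only shows $\max(|\Delta_0|,|\Delta_1|)=\W(\epsilon)$; the paper proves exactly this and then uses it existentially. For one explicit single-call algorithm, set $\lambda=1/(1-\frac{4}{3}p)$ and $\kappa=\frac{4}{3}p\lambda$. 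Run $R_0$ with probability $\lambda/(\lambda+\kappa)$, and otherwise run $R_1$ and \emph{negate} its answer. Since $\lambda\Delta_0-\kappa\Delta_1=a_S-a_U$, this has advantage $(a_S-a_U)/(\lambda+\kappa)-\negl(n)$. Relatedly, $g=\Pr[(e_1,e_{n+1})=(0,0)]=1-p$, not $1-\frac{2}{3}p$. So $g-\frac14=\frac34-p$, and this is exactly where the hypothesis $\frac34-p=\W(1)$ is needed: it keeps $\lambda,\kappa=O(1)$.

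Second, in the unstructured case $\vec u'$ is not uniform. Since $v_1=1$, you always have $u'_1=0$. So $\vec C\vec u'$ is uniform only on $\vec r^\perp$, where $\vec r=\vec C\vec f_1$ is correlated with $\vec{CB}$, and invertibility of $\vec C$ does not help. Uniformity, and independence from the matrix, come from a different fact: $r_{n+k}=1$, so the constraint $\vec w\odot\vec r=0$ involves $w_k$, and pair $k$ is flooded with uniform bits. This is the paper's argument. Your step (iii) already floods pair $k$, so your conclusion survives, but the justification must be this one.

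Third, there is a point that both you and the paper leave implicit. The flooded set contains $k$, which is a function of $\vec r$. For the permuted noise to be independent of $\pi(\vec{CB})$, you need $k$ to be nearly independent of $U=\operatorname{im}(\vec{CB})$. This does hold. Given $U$, $\vec r$ is nearly uniform on $U^\perp$. With overwhelming probability no nonzero vector of $\operatorname{span}(\vec e_1,\dots,\vec e_{n/2})$ lies in $U$, so the functionals $\vec r\mapsto r_{n+j}=\vec r\odot\vec e_j$ for $j\le n/2$ are independent on $U^\perp$. Hence $k$ is nearly geometric with parameter $1/2$ whatever $U$ is.
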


\begin{proof}
Suppose first that the input is structured.
That is, we are given $(\vec A, \vec u := \vec{Ax} + \vec e)$, where $\vec A \in \Z_2^{2n \times n}$ is a uniformly random basis of a uniformly random symplectic subspace, $\vec x \sim \Z_2^{n}$, and $\vec e \sim \CD_p^{\otimes n}$.
Define $V := \operatorname{im}(\vec A)$.
By Lemma 6.2 of \cite{khesin2025average}, the first $n/2$ elements of the first row of $\vec A$ are marginally independently uniformly random bits up to $\negl(n)$ total variation distance.
Therefore, with probability $1 - \negl(n)$, at least one of the first $n/2$ columns of $\vec A$ has first entry 1.
Define $V_0 \subseteq \operatorname{im}(\vec A)$ to be the subspace of $\operatorname{im}(\vec A)$ with vectors whose first entry is 0.
Having first entry 0 is equivalent to being symplectically orthogonal to $\vec f_{1}$.
Conditioning on the $1 - \text{negl}(n)$ probability event, this is one extra nontrivial constraint, so $\dim(V_0) = n-1$.
Let $\vec A_0 \in \Z_2^{2n \times (n-1)}$ be a random basis matrix for $V_0$.
Suppose that a good event occurs, namely that $(e_1, e_{n+1}) = (0, 0)$.
Conditioned on this event, either $u_1 = 0$ or $u_1 = 1$.
If $u_1 = 1$, choose any element of $V \backslash V_0$, and subtract it from $\vec u$.
If $u_1 = 0$, do nothing at this step.
Denote by $\vec u' \in \Z_2^{2n}$ the outcome of this step.
Note that conditioned on $\vec A_0$, $\vec u'$ is equidistributed as $\vec A_0 \vec x' + \vec e'$, where $\vec x' \sim \Z_2^{n-1}$ and $\vec e'$ is independent depolarizing noise on pairs $(2, n+2), \dots, (n, 2n)$ but $(e'_1, e'_{n+1}) = (0, 0)$.
Thus, we can equivalently express $\vec u' = \vec A_0 \vec x' + \vec e'$.

Next, we sample and apply a random symplectic hyperplane rotation Clifford $\vec C$ from the ensemble in \Cref{def:symp_hyperplane_rotation}.
With probability $1 - \negl(n)$ (we condition on this event), there is some index $k \geq 1$, which is the first for which $r_{n+k} = 1$, where $\vec{r} = \vec{C}\vec{f}_1$. 
We apply $\vec C$ to $\vec A_0$ and $\vec u'$.
By \Cref{lemma:randomizing_symp_hyerplanes}, $\vec B := \vec{C A}_0$ is statistically indistinguishable from a uniformly random basis of a uniformly random symplectic subspace of dimension $n-1$.
Meanwhile, $\vec C \vec u' = \vec{Bx}' + \vec C \vec e'$.
The distribution of $\vec C \vec e'$ can be derived from the structural form of $\vec C$.
Since $\vec e'$ is independent depolarizing noise $\mathcal{D}_p$ on the last $n-1$ qubits and $e'_1, e'_{n+1} = 0$, we may write $
    \vec e' = \sum_{j=2}^{n} (a_j \vec e_j + b_{j} \vec f_{j})$,
where each $(a_j, b_j) \sim \CD_p$ independently. Let $\boldsymbol{\Pi}$ be the permutation matrix from \Cref{def:symp_hyperplane_rotation}. Then, we can write
\begin{align}
    \vec C \vec e' & = \vec \Pi\left(c \vec{e}_1+ \sum_{j = 2}^n (a_j \vec e_j + b_j \vec f_j)\right)\\
    &= c \vec e_k + (a_k \vec{e}_1+b_k\vec{f}_1) + \sum_{j\neq k, 1} a_j \mathbf{e}_j + b_j\vec{f}_j
\end{align}
where the coefficient $c$ of $\vec{e}_k$ has a potentially complicated distribution correlated with the $(a_j, b_j)$.
That is, $\vec C \vec e'$ takes the form of independent $\CD_p$ noise on all but the $k$th pair, where the distribution is complicated.
To adjust for this, we will add uniformly random bits to the $k$th pair (i.e. sample from $\CD_{3/4}$), given by $\vec s_0 \in \Z_2^{2n}$ which is $0$ on all indices, except uniformly random on $k$ and $n+k$.
This step erases any correlation the $k$th pair currently has with the $(a_j, b_j)_{j \neq k}$ (since this pair is now independently and uniformly random).
The distribution of the error is therefore now $
    \vec{C e}' + \vec s_0 \sim \CD_p^{\otimes k-1} \otimes \CD_{3/4} \otimes \CD_{p}^{\otimes n - k}$.
It is convenient to represent this distribution as the sum of two random variables, one $\vec{e}'' \sim \CD_{p}^{\otimes n}$ and the other $\vec s$ which is the uniform distribution on the $k$th index and 0 elsewhere.
Now, to satisfy the conditions of \Cref{lemma:noise_symmetrization}, choose $m-1$---for $m =\frac{\log^2(n)}{1-\frac{4}{3} p}$---indices other than $k$ and add a vector $\vec s' \in \Z_2^{2n}$ which is uniformly random bits on these $m-1$ pairs and 0 elsewhere.
Define $\vec t := \vec s + \vec s'$, a vector which has uniform bits on $m$ pairs and is $(0, 0)$ on all other pairs.
We add $\vec s'$ to our error, producing $\vec e'' + \vec t$.
Note that $\vec e''$ and $\vec t$ are independent.
Applying \Cref{lemma:noise_symmetrization}, we may add an extra certain random vector $\vec t'$ and apply a random permutation $\pi$ on the $n$ index pairs such that $\pi(\vec t + \vec t') \sim \CD_{m/n}^{\otimes n}$.
On the other hand, $\vec e''$ is permutation-invariant, so $\pi(\vec e'') \sim \vec e''$.
Let $p' = p + \log^2 n / n$.
We defined $m$ such that $m/n = \frac{p' - p}{1 - \frac{4}{3} p} = \frac{\log^2 n}{n(1 - \frac{4}{3} p)}$; for $p = \frac{3}{4} - \W(1)$, this choice ensures that $m = \w(\log n)$.
Hence, by \Cref{lemma:depolarizing_convolution}, 
\begin{align}
    \pi(\vec e'' + \vec t + \vec t') \sim \CD_{p}^{\otimes n} + \CD_{m/n}^{\otimes n} = \CD_{p'}^{\otimes n} ,
\end{align}
We conclude that if $(\vec A, \vec u)$ is (up to $\negl(n)$ TV distance), then $(\pi(\vec C \vec A_0), \pi(\vec C \vec A_0 \vec x' + \vec e'' + \vec t + \vec t'))$ is precisely a structured instance of $\symplpn(n-1, n, p')$, since $\vec C \vec A_0$ and $\vec C \vec x'$ are jointly follow a permutation-invariant distribution while $\pi(\vec e'' + \vec t + \vec t')$ follows precisely the right error distribution.
On the other hand, if the input were instead unstructured, then almost all of the operations we performed on $\vec u$ would leave its distribution---uniform over $\Z_2^{2n}$---invariant.
The only step which does not is the first, wherein we alter $\vec u$ to $\vec{u'}$ which yields a uniformly random vector with first bit $0$. 
However, we later multiply by $\vec C$ which maps the first bit to the $k$th bit, and then add a uniformly random bit to index $k$, and thus uniform randomness is preserved.
Therefore, this transformation yields a $\symplpn(n-1, n, p')$ problem which has negligible total variation distance from  structured (resp. unstructured) if the input $\symplpn(n, p)$ instance is structured (resp. unstructured).
We submit our transformed instance to $\CO$ and output its Decision.

To complete the proof, we must address the cases in which the good event does not occur.
That is, we discuss the cases when $(e_1, e_{n+1}) \neq (0, 0)$.
The other cases for the values of $(e_1, e_{n+1})$ are $(0, 1), (1, 0), (1, 1)$.
Let $p_{ij} = \Pr[(e_1, e_{n+1}) = (i, j)]$.
Then $p_{00} = 1 - p$ and $p_{01} = p_{10} = p_{11} = p/3$.
Define $q_{ij}$ to be the probability that $\CO$ outputs \texttt{structured} when we receive a structured input wherein $(e_1, e_{n+1}) = (i, j)$; let $\bar{q}$ be the probability that $\CO$ outputs \texttt{structured} on an unstructured input.
By the above, $|q_{00} - \bar{q}| = \frac{1}{\poly(n)}$.
In general, \begin{align}
    Q_0 & := \Pr_{p_{ij} = (1-p, p/3, p/3, p/3)}[\CO = \text{\texttt{structured}} \,|\, \text{\texttt{structured}}] \\
    & = q_{00} (1-p) + \frac{p}{3} (q_{01} + q_{10} + q_{11}) .
\end{align}
Now, consider an alternate reduction wherein we first add uniform noise to bits $1$ and $n+1$ of $\vec u$.
This has no effect on unstructured instances, but on structured instances, we may equivalently express the effect of the extra noise by modifying $p_{ij} = 1/4$ for all $i, j$.
In this case, \begin{align}
    Q_1 & := \Pr_{p_{ij} = (1/4, 1/4, 1/4, 1/4)}[\CO = \text{\texttt{structured}} \,|\, \text{\texttt{structured}}] \\
    & = q_{00} \frac{1}{4} + \frac{1}{4} (q_{01} + q_{10} + q_{11}) .
\end{align}
Suppose that both probabilities were negligibly close to unstructured probability, i.e. $|Q_0 - \bar{q}| = \negl(n)$ and $|Q_1 - \bar{q}| = \negl(n)$.
Define \begin{align}
    a := \frac{1}{1 - \frac{4}{3}p} ,\; b := - \frac{4}{3} a p .
\end{align}
Then $a + b = 1$ and $a Q_0 + b Q_1 = q_{00}$.
By the triangle inequality, \begin{align}
    \frac{1}{\poly(n)} & = |q_{00} - \bar{q}| = |a Q_0 + b Q_1 - \bar{q}| \leq a |Q_0 - \bar{q}| + |b| \cdot |Q_1 - \bar{q}| \\
    & = a \cdot \negl(n) + |b| \cdot \negl(n) = \negl(n) ,
\end{align}
where in the last line we used the fact that $\frac{3}{4} - p = \W(1)$ so that $a = O(1)$ and thus $|b| = O(1)$.
This is a contradiction, and therefore either $|Q_0 - \bar{q}| = 1/\poly(n)$ or $|Q_1 - \bar{q}| = 1/\poly(n)$.
In other words, at least one of the two reductions solves Decision $\symplpn(n, p)$ with non-negligible advantage.
\end{proof}

We may now proceed to construct cryptographic protocols from $\symplpn$, starting with a simple one-way function, and then constructing public-key cryptography and oblivious transfer schemes that rely on the above result.

\section{Public-key encryption from low-noise $\symplpn$} \label{sec:public-key}

While private-key cryptography is readily constructible from $\lpn$ at constant noise rate, public-key encryption is substantially more challenging to construct from $\lpn$ at constant noise rate.
Alekhnovich-style $\lpn$-based $\mathsf{PKE}$ schemes with $2^{\widetilde{\Theta}(\sqrt{n})}$ security are correct only when $p = O(1/\sqrt{n})$~\cite{Alekhnovich03,damgaard2012practical}.
% At such scales, a brute-force enumeration attack would take time $2^{\widetilde{\Theta}(\sqrt{n})}$.
% While constructions of $\mathsf{PKE}$ have been proposed since the work of \cite{Alekhnovich03}, their attacks execute in time $2^{\widetilde{O}(\sqrt{n})}$ and therefore have comparable security to low-noise $\lpn$ schemes.
We here construct a $\mathsf{PKE}$ scheme based on $\symplpn$ with the same $O(1/\sqrt{n})$ noise rate.
% Indeed, our construction is inspired by such low-noise $\lpn$ $\mathsf{PKE}$ constructions.

\begin{construction}[$\mathsf{PKE}$ from low-noise $\symplpn$]\label{const-pke}
Let $n \in \mathbb{N}$ be the security parameter and $p \in (0,1)$. The $\mathsf{PKE}$ scheme $\Sigma = (\gen,\mathsf{Enc}, \mathsf{Dec})$ is given by:
\begin{itemize}
    \item $\gen(1^n):$ sample a random full-rank isotropic matrix $\vec A \in \Z_2^{2n \times n}$, $\vec x \sim \Z_2^n$ and $\vec e \sim \CD_p^{\otimes n}$; output public key $\pk = (\vec A,\vec b = \vec A \vec x + \vec e)$ and secret key $\mathsf{sk}=\vec x$.

    \item $\mathsf{Enc}(
    \pk,\mu):$ to encrypt a single bit $\mu \in \bit$ using the public key $(\vec A,\vec b)$,  sample $\vec f \sim \CD_p^{\otimes n}$ and output the ciphertext pair $\mathsf{ct}=(\vec f \odot \vec A, \vec f \odot \vec b + \mu)$.

    \item $\mathsf{Dec}(\sk,\mathsf{ct})$:
    to decrypt $\mathsf{ct}=(\vec u,c)$ using the secret key $\sk = \vec x$, output $c + \vec u \cdot \vec x$. 
\end{itemize}
\end{construction}

We next state the correctness and security of \Cref{const-pke}.
Our security proof demonstrates the CPA security of the scheme.

\begin{theorem}[Correctness]\label{lem:pke_correctness} For any $\delta >0$, there exists $p = \Theta\left(1/\sqrt{n} \right)$ such that
the $\mathsf{PKE}$ scheme $\Sigma$ in \Cref{const-pke} is $(1-\delta)$-correct, i.e., for any message bit $\mu \in \bit$,  
\begin{align}
    \Pr\left[ \mathsf{Dec}(\sk,\mathsf{ct}) = \mu \,\, \vline \,\, \substack{
(\pk,\sk) \leftarrow \gen(1^n)\\
\mathsf{ct} \leftarrow \mathsf{Enc}(\pk,\mu)} \right] \geq 1- \delta.
\end{align}
\end{theorem}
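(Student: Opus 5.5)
The plan is to show that decryption returns $\mu$ plus a single noise bit $\vec f \odot \vec e$, and then to bound the probability that this bit equals $1$.

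\textbf{Step 1 (algebraic identity).} Write $\vec u = \vec f \odot \vec A$ for the row vector whose $i$th entry is $\vec f \odot \vec a_i$, where $\vec a_i$ is the $i$th column of $\vec A$. The symplectic product is bilinear, so
\[
\vec f \odot (\vec A \vec x) = \sum_i x_i (\vec f \odot \vec a_i) = \vec u \cdot \vec x .
\]
Hence
\[
\dec(\sk,\mathsf{ct}) = \vec f \odot (\vec A\vec x + \vec e) + \mu + \vec u\cdot \vec x = \mu + \vec f \odot \vec e \pmod 2 .
\]
Nothing about isotropy or full rank of $\vec A$ enters here, so correctness reduces to showing $\Pr[\vec f \odot \vec e = 1] \le \delta$ for independent $\vec e, \vec f \sim \CD_p^{\otimes n}$.

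\textbf{Step 2 (per-qubit structure).} Expand
\[
\vec f \odot \vec e = \sum_{j=1}^n \left( f_j e_{n+j} + f_{n+j} e_j \right).
\]
The $j$th summand is the symplectic product of the two single-qubit Paulis on qubit $j$. It equals $1$ exactly when both Paulis are non-identity and distinct, i.e.\ when they anticommute. Since the qubits are independent, these summands are i.i.d.\ Bernoulli bits. Each has parameter
\[
\rho = p^2 \cdot \tfrac{2}{3},
\]
because both Paulis are non-identity with probability $p^2$, and then the second differs from the first with probability $2/3$.

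\textbf{Step 3 (bounding the failure probability).} I would use the union bound:
\[
\Pr[\vec f \odot \vec e = 1] \le \Pr[\text{some summand is } 1] \le n\rho = \tfrac{2}{3} n p^2 .
\]
Setting $p = c/\sqrt n$ with $c = \sqrt{3\delta/2}$ gives a failure probability of at most $\delta$, and $p = \Theta(1/\sqrt n)$ as required.

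For a sharper constant, matching \eqref{eq:matched}, one can use the exact parity formula:
\[
\Pr[\vec f \odot \vec e = 0] = \tfrac{1}{2}\left(1 + (1 - 2\rho)^n\right) = \tfrac12\left(1 + (1 - \tfrac43 p^2)^n\right).
\]
With $p = c/\sqrt n$ this is about $\tfrac12\left(1 + e^{-4c^2/3}\right)$, and $c$ can be chosen so this is at least $1 - \delta$.

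I do not expect a real obstacle. The only point needing care is the identity $\vec f \odot (\vec A \vec x) = (\vec f \odot \vec A) \cdot \vec x$, which is just bilinearity of $\odot$.
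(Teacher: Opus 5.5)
Your proof is correct and follows the paper's argument: bilinearity of $\odot$ reduces decryption to $\mu + \vec f \odot \vec e$, the per-qubit terms are i.i.d.\ $\Ber(\tfrac{2}{3}p^2)$, and you bound the parity. The only difference is that your main estimate is the union bound $\tfrac{2}{3}np^2$, whereas the paper uses the exact parity formula $\tfrac12 + \tfrac12(1-\tfrac43 p^2)^n$ with an exponential lower bound; the two agree to first order (by Bernoulli's inequality), and you mention the exact formula yourself.
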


The proof is deferred to \Cref{app:additional_proofs}.
Our security proof relies on the following indistinguishability lemma, which one can think of as the \emph{dual mode} of $\symplpn$.
\begin{lemma}\label{lemma:sympLPN_reducedbit_security}
Let $n \in \mathbb{N}$ be the security parameter.
Suppose that there exists an efficient (quantum or classical) algorithm which runs in time $\poly(n)$ and, with non-negligible advantage, distinguishes
\begin{align}
    (\vec H, \vec f \odot \vec H) \quad \text{from} \quad (\vec H, \vec r) ,
\end{align}
where $\vec H =(\vec B \,|\,  \vec v) \in \Z_2^{2n \times (n+1)}$ consists of a random full-rank isotropic matrix $\vec B \in \Z_2^{2n \times n}$ and a random column vector $\vec v \sim \Z_2^{2n}$, and where 
$\vec f \sim \CD_p^{\otimes n}$
and $\vec r \sim \Z_2^{n+1}$. 
Then there exists an efficient (quantum or classical) algorithm which solves Decision $\symplpn[n, n-1, p]$.
\end{lemma}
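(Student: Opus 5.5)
We prove this by the classical Alekhnovich duality: the dual (syndrome) form of a $\symplpn$ instance is obtained by taking the symplectic complement of the code. We read the target problem as Decision $\symplpn(n-1, n, p)$, i.e.\ $n-1$ logical bits, consistent with \Cref{sec:reducing_one_logical_sympLPN}. Let $\CD$ be the assumed distinguisher. The reduction receives $(\vec A', \vec z)$, where $\vec A' \in \Z_2^{2n\times(n-1)}$ is a random full-rank isotropic matrix and $\vec z$ is either $\vec A'\vec x + \vec e$ or uniform. It then does the following.
\begin{enumerate}
    \item It computes $W := (\operatorname{im}\vec A')^{\perp}$, which has dimension $n+1$. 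Because $\operatorname{im}\vec A'$ is isotropic, $\operatorname{im}\vec A' \subseteq W$.
    \item The quotient $W/\operatorname{im}\vec A'$ is a $2$-dimensional nondegenerate symplectic space, so it contains exactly three isotropic lines. Hence exactly three $n$-dimensional isotropic (Lagrangian) subspaces $L$ satisfy $\operatorname{im}\vec A' \subset L \subset W$. The reduction picks one such $L$ uniformly at random.
    \item It lets $\vec B$ be a uniformly random basis matrix of $L$, samples $\vec v$ uniformly from $W\setminus L$, and sets $\vec H := (\vec B \,|\, \vec v)$.
    \item It feeds $(\vec H, \vec z \odot \vec H)$ to $\CD$ and outputs $\CD$'s answer.
\end{enumerate}

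\emph{The second component.} Each column $\vec h$ of $\vec H$ lies in $W = (\operatorname{im}\vec A')^\perp$, so $(\vec A'\vec x)\odot \vec h = 0$.
\begin{itemize}
    \item In the structured case this gives $\vec z \odot \vec H = \vec e \odot \vec H$, with $\vec e \sim \CD_p^{\otimes n}$ independent of $\vec H$. This is exactly the left-hand distribution of the lemma.
    \item In the unstructured case, $\vec H$ has $n+1$ linearly independent columns. The map $\vec u \mapsto \vec u \odot \vec H$ is therefore a surjection $\Z_2^{2n} \to \Z_2^{n+1}$, since the symplectic form is nondegenerate. So $\vec u \odot \vec H$ is uniform and independent of $\vec H$, which is the right-hand distribution.
\end{itemize}

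\emph{The distribution of $\vec H$.} This is the main obstacle: we must show $\vec H$ is within $\negl(n)$ of the lemma's $(\vec B \,|\, \vec v)$, where $\vec B$ is a random full-rank isotropic matrix and $\vec v$ is uniform.
\begin{itemize}
    \item \emph{The target.} With probability $1 - 2^{-n}$ we have $\vec v \notin L := \operatorname{im}\vec B$, and we condition on this. Then $W := L + \operatorname{span}(\vec v)$ has dimension $n+1$ and contains a Lagrangian, so $W^\perp \subseteq L^\perp = L \subseteq W$. Thus $W$ is coisotropic and $W^\perp$ is an $(n-1)$-dimensional isotropic subspace. Every flag $L \subset W$ (Lagrangian inside an $(n+1)$-space) arises this way. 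Given $L$, the space $W$ is determined by a uniformly random nonzero element of $\Z_2^{2n}/L$. So the target $(L, W)$ is uniform over such flags, and given $W$, $\vec v$ is uniform on $W \setminus L$.
    \item \emph{The reduction.} Set $A := \operatorname{im}\vec A'$. The pair $(A, L)$ is uniform over pairs (isotropic $(n-1)$-space, Lagrangian containing it), because $A$ is uniform and each $A$ lies in exactly three Lagrangians. The map $(A, L) \mapsto (L, W = A^\perp)$ is a bijection onto the flags $L \subset W$, with inverse $A = W^\perp$. So $(L, W)$ is uniform over flags in the reduction as well.
    \item \emph{Conclusion.} In both distributions $\vec B$ is a uniform basis of $L$ and $\vec v$ is uniform on $W \setminus L$. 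The two distributions of $\vec H$ therefore agree up to the $2^{-n}$ events (and the negligible full-rank conditioning). Alternatively, both statements follow from transitivity of the symplectic group on such flags, which gives uniformity directly.
\end{itemize}

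Since $\vec H$ is distributed correctly up to $\negl(n)$ and the second component maps structured to structured and unstructured to unstructured, $\CD$'s non-negligible advantage transfers, up to $\negl(n)$, to a Decision $\symplpn(n-1, n, p)$ solver. The reduction takes polynomial time: computing $W$, enumerating the three Lagrangians, and sampling $\vec v$ are all linear algebra over $\Z_2$. It uses only $\CD$ as a black box, so it works for quantum as well as classical distinguishers.
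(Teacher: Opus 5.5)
Your proposal is correct and is essentially the paper's proof: the paper also passes to $S^\perp$ for $S=\operatorname{im}(\vec A)$ and extends $S$ to a Lagrangian $\operatorname{im}(\vec B)$ using a uniformly random $\vec u \in S^\perp\setminus S$, which is equivalent to your uniform choice among the three Lagrangians; it then samples $\vec v \in S^\perp\setminus\operatorname{im}(\vec B)$ and feeds $(\vec H,\vec b\odot\vec H)$ to the distinguisher. Your flag bijection $(A,L)\leftrightarrow(L,A^\perp)$ is a cleaner packaging of the paper's counting argument: the paper shows $\vec B$ is uniform, and that given $\vec B$, each $\vec v\notin\operatorname{im}(\vec B)$ is symplectically orthogonal to exactly one $(n-1)$-dimensional subspace of $\operatorname{im}(\vec B)$.
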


\begin{proof}
% Assume there exists a distinguisher $\mathcal{D}$ that can distinguish between the two samples with advantage $1/\poly(n)$. We will show that such a distinguisher allows us to solve Decision $\symplpn$.
Suppose we are given as input the $\symplpn(n-1, n, p)$ instance $(\vec A \in \Z_2^{2n \times (n-1)},\vec b \in \Z_2^{2n})$, where $\vec b$ is either structured with $\vec b=\vec A \vec x + \vec f$, for $\vec f \sim \CD_p^{\otimes n}$, or unstructured with $\vec b \sim \Z_2^{2n}$. 
Note that the matrix $\vec A$ has full column rank $n-1$, and thus $S = \mathrm{im}(\vec A)$ spans a $(n-1)$-dimensional subspace within $\Z_2^{2n}$. Consequently, the symplectic dual $S^\perp$---the space of vectors whose symplectic inner product with vectors in $S$ is zero---is a subspace of dimension $n+1$.
Our reduction proceeds as follows:
\begin{enumerate}
    \item Sample $\vec u \sim S^\perp \backslash S$. 
    Let $\vec{B} \in \Z_2^{2n \times n}$ be a random basis for $\operatorname{im}(\vec{A}) \oplus \text{span}(\vec u)$.
    Note that $\vec B$ is isotropic with $\dim \operatorname{im}(\vec B) = n$.
    Since $n$ is the maximum dimension of an isotropic subspace, $\operatorname{im}(\vec B) = \operatorname{im}(\vec B)^\perp$.
    Then, let $\vec v$ be a random vector in $S^\perp \setminus \operatorname{im}(\vec B)$. 
    Set $\vec H =(\vec B \, | \, \vec v)$.
    We argue that $\vec H$ has negligible total variation distance from the random variable mentioned in the claim. 

    \item Run the assumed distinguisher on input $(\vec H, \vec b \odot \vec H)$.
\end{enumerate}
Since each column of $\vec H$ is in $S^\perp$, $\vec H^\intercal \odot \vec A = 0$. 
Consequently, 
\begin{itemize}
    \item if $\vec b$ is structured with $\vec b=\vec A \vec x + \vec f$, our reduction computes
    \begin{align}
        \vec b \odot \vec H = (\vec H^\intercal \odot \vec b)^\intercal = (\vec H^\intercal \odot (\vec A \vec x + \vec f))^\intercal = (\vec H^\intercal \odot \vec f)^\intercal = \vec f \odot \vec H;
    \end{align}
    
    \item if $\vec b$ is unstructured with $\vec b \sim \Z_2^{2n}$, then our reduction computes $\vec r=\vec b \odot \vec H$.
\end{itemize}
In both cases, assuming a negligible total variation distance between $\vec H$ and the random variable mentioned above, we claim that the input to the distinguisher is negligibly close to the desired distribution.
In the structured case, this is immediate, whereas in the unstructured case it readily follows from the fact that $\vec r = \vec b \odot \vec H $ is uniformly random over $\Z_2^{n+1}$ when $\vec b \sim \Z_2^{2n}$ itself is uniform.
This is because $\vec H$ is a full-rank matrix, so that every bitstring in $\Z_2^{n+1}$ has the same number of pre-images under $\vec H^\intercal$.

We need only justify the assertion in Step 1 regarding the distribution of $\vec H$. 
First, we claim that $\vec B$ is a uniformly random full-rank isotropic matrix.
Indeed, $\vec A \in \Z_2^{2n \times (n-1)}$ is a uniformly random full-rank isotopic matrix, and there are exactly $2^{n+1} - 2^{n-1}$ choices for $\vec u$ no matter what $\vec A$ is. 
Hence, the probability $p_\vec{B}$ of sampling any given $\vec{B}$ that is full rank and isotropic is $\frac{1}{(2^{n+1} - 2^{n-1})N_\vec{A}}$, where $N_\vec{A}$ is the number of full-rank isotropic matrices $\vec{A} \in \Z_2^{2n \times (n-1)}$. 
This probability is equal for any such $\vec{B}$, so $p_{\vec B}$ is indeed uniformly distributed. 

Next, we claim that $\vec{v}$ is negligibly close in distribution to a uniformly random vector outside of $\operatorname{im}(\vec{B})$.
By definition, $\vec v$ is a random vector, not contained in $\operatorname{im}(\vec B)$, that is symplectically orthogonal to a random $(n-1)$-dimensional subspace $S$ of $\operatorname{im}(\vec{B})$, namely $\operatorname{im}(\vec{A})$.
For any nonzero $\vec v \notin \operatorname{im}(\vec B)$, $\vec v$ has an orthogonal complement of dimension $2n - 1$, and its intersection with $\operatorname{im}(\vec B)$ therefore has dimension $n-1$, one less than the maximum possible $n$.
This is because not every element of $\operatorname{im}(\vec B)$ is orthogonal to $\vec v$, as $\vec v \notin \operatorname{im}(\vec B) = \operatorname{im}(\vec B)^\perp$.
For each $\vec{v} \notin \operatorname{im}(\vec B)$, therefore, there is exactly one $(n-1)$-dimensional subspace of $\operatorname{im}(\vec B)$ which is orthogonal to $\vec v$. 
Say that the number of $(n-1)$-dimensional subspaces of $\operatorname{im}(\vec B)$ is $N_\vec{B}$. 
Then the probability of sampling any $\vec{v} \notin \operatorname{im}(\vec B)$ is $\frac{1}{(2^{n+1} - 2^{n})N_\vec{B}}$, a constant not depending on $\vec v$ as desired, so long as $\vec v \neq \vec 0$ which occurs with $\negl(n)$ probability.
($2^{n+1} - 2^n$, above, is the number of vectors $\vec v \notin \operatorname{im}(\vec B)$ orthogonal to any given $(n-1)$-dimensional $V \subseteq \vec B$.)

Recall that the total variation distance between uniform distributions over sets $\mathcal{X}, \mathcal{Y}$ where $\mathcal{X} \subseteq \mathcal{Y}$, is given by $1 - |\mathcal{X}| / |\mathcal{Y}|$.
Thus, for each $\vec B$, $\vec v$ has a negligible total variation distance from uniform over all of $\Z_2^{2n}$, since it is (negligibly close to) uniform over $\Z_2^{2n} \backslash \operatorname{im}(\vec B)$, which is a $1 - \negl(n)$ fraction of the full set. 
Therefore, the distribution of $\vec H = (\vec {B} \,|\, \vec{v})$ has negligible total variation distance from one wherein $\vec B$ is a uniformly random full-rank isotropic matrix and $\vec v \sim \Z_2^{2n}$.
\end{proof}

\begin{theorem}[Security]\label{thm:pke_security}
Let $n \in \mathbb{N}$ be the security parameter and $\Sigma$ be the public-key encryption scheme in~Construction \ref{const-pke} with parameter $p$. 
Suppose there is a (quantum or classical) algorithm $\CA$, running in time $\poly(n)$, which breaks the \textsf{IND-CPA} security of $\Sigma$.
Then there is a (quantum or classical) algorithm $\CB$ which solves either Decision $\symplpn[n, p]$ or Decision $\symplpn[n, n-1, p]$.
\end{theorem}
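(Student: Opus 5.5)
A standard hybrid argument suffices, following the Alekhnovich/Damg\aa rd--Park template, with \Cref{lemma:sympLPN_reducedbit_security} playing the role of the dual-mode step. The adversary $\CA$ sees $(\pk, \mathsf{ct}) = \big((\vec A, \vec b), (\vec f \odot \vec A, \vec f \odot \vec b + \mu)\big)$ for $\mu \in \bit$. We consider three hybrids.

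\begin{itemize}
\item[] $\mathsf{H}_0$: the real \textsf{IND-CPA} experiment.
\item[] $\mathsf{H}_1$: the public key is replaced by $(\vec A, \vec v)$ with $\vec v \sim \Z_2^{2n}$ uniform, while encryption still outputs $(\vec f \odot \vec A, \vec f \odot \vec v + \mu)$.
\item[] $\mathsf{H}_2$: the ciphertext is replaced by $(\vec r, r' + \mu)$ with $(\vec r, r') \sim \Z_2^{n+1}$ uniform.
\end{itemize}

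In $\mathsf{H}_2$ the view of $\CA$ is independent of $\mu$, so its advantage is exactly zero. If $\CA$ has non-negligible advantage in $\mathsf{H}_0$, then by the triangle inequality either $\mathsf{H}_0$ and $\mathsf{H}_1$ or $\mathsf{H}_1$ and $\mathsf{H}_2$ are distinguishable with non-negligible advantage $1/\poly(n)$. Here we fix the challenge bit $\mu$, or take its random choice, in the usual way.

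\textbf{Step 1 ($\mathsf{H}_0 \approx \mathsf{H}_1$).} Build a distinguisher $\CB_1$ for Decision $\symplpn[n,p]$. On input $(\vec A, \vec b)$, where $\vec b$ is either $\vec A\vec x + \vec e$ or uniform, $\CB_1$ sets $\pk = (\vec A, \vec b)$. It samples $\mu$ and $\vec f \sim \CD_p^{\otimes n}$ itself, forms the ciphertext honestly, and outputs whatever $\CA$'s guess-correctness predicts. Structured inputs simulate $\mathsf{H}_0$ exactly, since $\gen$ samples $\vec A$ as a random full-rank isotropic matrix, exactly as in \Cref{def:sympLPN}. Unstructured inputs simulate $\mathsf{H}_1$ exactly. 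This step is routine.

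\textbf{Step 2 ($\mathsf{H}_1 \approx \mathsf{H}_2$).} In $\mathsf{H}_1$, set $\vec H = (\vec A \,|\, \vec v)$. The joint view is then $(\vec H, \vec f \odot \vec H)$ with the last coordinate shifted by $\mu$. In $\mathsf{H}_2$ it is $(\vec H, \vec r)$, shifted in the same way. Here $\vec A \in \Z_2^{2n\times n}$ is a random full-rank isotropic matrix and $\vec v$ is uniform, which is exactly the distribution of $\vec H = (\vec B \,|\, \vec v)$ in \Cref{lemma:sympLPN_reducedbit_security}. A distinguisher for $\mathsf{H}_1$ versus $\mathsf{H}_2$ therefore gives an algorithm that distinguishes $(\vec H, \vec f\odot\vec H)$ from $(\vec H, \vec r)$: it adds a self-chosen $\mu$ to the last coordinate and runs $\CA$. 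By \Cref{lemma:sympLPN_reducedbit_security}, this yields an algorithm for Decision $\symplpn[n, n-1, p]$, which gives $\CB_2$.

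The whole theorem then follows from a union of the two cases. The only subtlety I expect is bookkeeping. The distinguishing advantage of each hybrid pair must be turned cleanly into a decision advantage; this means handling $\mu$ by having the reduction pick it uniformly and output ``structured'' iff $\CA$ guesses correctly. We must also check that the $\negl(n)$ statistical losses inside \Cref{lemma:sympLPN_reducedbit_security} do not erase the $1/\poly(n)$ advantage. Neither is a real obstacle, since all the heavy lifting (the dual-mode statement) is already done in that lemma. If desired, one can combine this with \Cref{thm:sympLPN_reducing_logicals} to base security on $\symplpn[n,p]$ alone, at noise rate shifted by $\log^2 n/n$.
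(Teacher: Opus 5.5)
Your proposal is correct and follows essentially the same route as the paper. It uses the same three hybrids $\mathsf{H}_0,\mathsf{H}_1,\mathsf{H}_2$. The step $\mathsf{H}_0\approx\mathsf{H}_1$ reduces directly to Decision $\symplpn[n,p]$ by simulating ciphertexts. The step $\mathsf{H}_1\approx\mathsf{H}_2$ becomes distinguishing $(\vec H,\vec f\odot\vec H)$ from $(\vec H,\vec r)$ by adding $\mu$ to the last coordinate yourself, and \Cref{lemma:sympLPN_reducedbit_security} then gives a Decision $\symplpn[n,n-1,p]$ solver. Your extra bookkeeping (choosing $\mu$ inside the reduction and outputting according to whether $\CA$ guesses correctly) is just the standard detail the paper leaves implicit.
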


\begin{proof}
Let $n \in \mathbb{N}$. Consider the following hybrid distributions for $\mu \in \bit$:
\begin{itemize}
    \item \textsf{H}${_0}:$ This is the ciphertext pair $\mathsf{ct}$ produced by $\Sigma$ in \Cref{const-pke} with \begin{align}
        \mathsf{ct}=(\vec f \odot \vec A, \vec f \odot \vec b + \mu) 
    \end{align}
    where $\pk = (\vec A,\vec b = \vec A \vec x + \vec e)$ and $\mathsf{sk}=\vec x$ are output by $\gen(1^n)$ such that $\vec A \in \Z_2^{2n \times n}$ is a random full-rank isotropic matrix, $\vec x \sim \Z_2^n$, and $\vec e,\vec f \sim \CD_p^{\otimes n}$.

    \item \textsf{H}${_1}:$ Same as $\mathsf{H}_0$, except that $\vec b$ in the public key $\pk = (\vec A,\vec b)$ is instead sampled uniformly at random over $\Z_2^{2n}$.

    \item \textsf{H}${_2}:$ Same as $\mathsf{H}_1$, except that $\mathsf{ct}$ is chosen uniformly at random over $\Z_2^{n+1}$.
\end{itemize}

Recall that the \textsf{IND-CPA} security game for a one-bit \textsf{PKE} scheme is defined as follows.
The challenger samples $(\pk, \sk) \leftarrow \gen(1^n)$ and $\m \leftarrow \set{0, 1}$.
Then the challenger sends $\pk, \enc(\pk, \m)$ to the adversary, who responds with a guess $\hat{\m}$.
The adversary wins if $\hat{\m} = \m$, and we define the adversary's advantage as $|\Pr[\hat{\m} = \m] - 1/2|$.
Importantly, in \textsf{H}${_2}$ the encryption $\mathsf{ct}$ is completely independent of $\mu$, and hence the distinguishing advantage in the $\mathsf{IND}\mbox{-}\mathsf{CPA}$ experiment is zero. 
Thus an advantageous adversary in this game implies a distinguisher with non-negligible advantage for at least one of these hybrids.
Consequently, it suffices to show that the three hybrids are computationally indistinguishable.

First we show that an efficient distinguisher between $\mathsf{H}_0$ and $\mathsf{H}_1$ implies an efficient solver for Decision $\symplpn[n, p]$.
Suppose we are given as input $(\vec A,\vec b)$, where $\vec b$ is either structured (i.e. $\vec b=\vec A \vec x + \vec e$) or unstructured (i.e. $\vec b \sim \Z_2^{2n}$).
Then we can generate the ciphertexts for each $\mu$ ourselves, simulating the two hybrids.
Thus, an efficient algorithm distinguishing \textsf{H}${_0}$ and \textsf{H}${_1}$ with non-negligible advantage implies an efficient solver for Decision $\symplpn[n, p]$. 

Next, we argue that the hybrids \textsf{H}${_1}$ and \textsf{H}${_2}$ are indistinguishable. 
To prove this, we will view the public key $(\vec A,\vec b)$ as a matrix 
$\vec H \in \Z_2^{2n \times (n+1)}$. 
This matrix $\vec H$ satisfies the property that the first $n$ columns form a uniformly random full-rank isotropic matrix, while the final column is uniformly random.
In $\mathsf{H}_1$, $\mathsf{ct} = \vec f \odot \vec H + (0^{2n}, \m)$, whereas in $\mathsf{H}_2$, $\mathsf{ct} = \vec r \sim \Z_2^{n+1}$.
Any distinguisher between $\mathsf{H}_1$, which gives $(\vec H, \vec f \odot \vec H + (0^{2n}, \m))$ for $\m \sim \set{0, 1}$, and $\mathsf{H}_2$, which gives $(\vec H, \vec r)$, implies a distinguisher between $(\vec H, \vec f \odot \vec H)$ and $(\vec H, \vec r)$ by way of adding $(0^{2n}, \m)$ ourselves (in the case of $(\vec H, \vec r)$, adding $(0^{2n}, \m)$ has no effect on the distribution of $\vec r$).
Moreover, a distinguisher for the latter problem implies a solver for Decision $\symplpn[n, n-1, p]$ by \Cref{lemma:sympLPN_reducedbit_security}.
We conclude therefore that $\mathsf{H}_1$ and $\mathsf{H}_2$ are computationally indistinguishable under the hardness of $\symplpn[n, n-1, p]$.
\end{proof}

Previously, in \Cref{sec:reducing_one_logical_sympLPN}, we showed in \Cref{thm:sympLPN_reducing_logicals} that Decision $\symplpn[n, p]$ reduces to $\symplpn[n, n-1, p']$ where $p' = p + o(1/\sqrt{n})$.
Hence, the security of \Cref{const-pke} is based entirely on the hardness of Decision $\symplpn[n, p]$, for $p = \Theta(1/\sqrt{n})$, as desired.

% We remark that $\gen$ runs in time $O(n^3)$, $\enc$ in time $O(n^2)$, and $\dec$ in time $O(n)$.

\section{Strongly uniform public-key encryption}
\label{sec:SU-PKE}

In this section, we alter our previous $\mathsf{PKE}$ construction to upgrade it into a so-called Strongly Uniform Type-A $\mathsf{PKE}$ (\textsf{SU-PKE}).
While this alteration adds complications to the construction, a \textsf{SU-PKE} scheme is known to imply a round-optimal oblivious transfer (\textsf{OT}) scheme~\cite{friolo2019black}, which in turn implies all of secure multiparty computation~\cite{kilian1988founding,katz2007introduction}.

\begin{definition}[Strongly Uniform Type-A Public-Key Encryption] \label{def:SU-PKE}
A $\mathsf{PKE}$ scheme $(\gen, \enc, \dec)$ is called "Strongly Uniform of Type A" if the public key $\pk$ generated by $\gen(1^n)$ is computationally indistinguishable from a uniform distribution over some efficiently sampleable group parameterized by $n$, $G(n)$.
\end{definition}
For our purposes, the public key takes values in $\Z_2^m$ for some $m$ depending on $n$, and this is the corresponding group.
However, the PKE scheme presented in \Cref{sec:public-key} is not strongly uniform. 
The public key is of the form $(\vec A, \vec{Ax} + \vec{e})$---while $\vec{Ax} + \vec{e}$ is computationally indistinguishable from uniform, $\vec A$ is a uniformly random full-rank isotropic, and is thus far from uniform. 
Fortunately, it is possible to change the cryptographic protocol to remedy this problem. 
To do so, we first prove a lemma that relates the random variable $\vec A \in \Z_2^{2n \times n}$ to a uniform distribution over $\Z_2^{4n^2}$. 

\begin{lemma}[A uniformly random public key] \label{lem:uniform_public_key}
Sample $\vec s \sim \Z_2^{4n^2}$ and let $\vec A \in \Z_2^{2n \times n}$ be a uniformly random full-rank isotropic matrix. 
There exists a deterministic, polynomial-time classical algorithm $\mathcal A$ and a polynomial-time, randomized classical algorithm $\mathcal B$ such that $\mathcal{A}(\vec s)$ is statistically indistinguishable from $\vec A$, $\mathcal{B}(\vec A)$ is statistically indistinguishable from $\vec s$, and 
\begin{align}
    \Pr_{\CB, \vec A}[\mathcal{A}(\mathcal{B}(\vec A)) = \vec A] = 1-\negl(n).
\end{align}
\end{lemma}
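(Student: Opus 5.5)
Our plan is to build $\mathcal A$ as a ``rejection-free'' column-by-column sampler that consumes the seed $\vec s \in \Z_2^{4n^2}$ in $2n$-bit blocks. We split $\vec s$ into $2n$ blocks $\vec s_1,\dots,\vec s_{2n} \in \Z_2^{2n}$. We then build $\vec A$ column by column. Suppose the columns $\vec a_1,\dots,\vec a_{i-1}$ are already chosen and span an isotropic space $V_{i-1}$ of dimension $i-1$. The admissible set for $\vec a_i$ is $V_{i-1}^\perp \setminus V_{i-1}$, where $V_{i-1}^\perp$ has dimension $2n-i+1$. We compute deterministically (by Gaussian elimination) a canonical basis of $V_{i-1}^\perp$ that extends the basis $\vec a_1,\dots,\vec a_{i-1}$ of $V_{i-1}$. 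We read the next unused block as coefficients $(\vec c, \vec d)$ and set $\vec a_i$ to the corresponding linear combination. The combination is accepted iff the coefficients on the complementary part (outside $V_{i-1}$) are not all zero. If the block is rejected, we move to the next block. If all blocks are exhausted, $\mathcal A$ outputs a fixed default full-rank isotropic matrix.

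\textbf{Why $\mathcal A(\vec s)$ is correctly distributed.} Each block is uniform, so it is uniform over the coefficient space. Conditioned on acceptance, $\vec a_i$ is therefore uniform on $V_{i-1}^\perp\setminus V_{i-1}$. Sequentially sampling columns this way yields exactly the uniform distribution over full-rank isotropic $2n\times n$ matrices, since every such matrix has the same number of sequential extensions. A block is rejected with probability $2^{-(2n-2i+2)} \le 2^{-2}$, with $i \le n$. Hence, among $2n$ blocks, needing more than $n$ blocks for $n$ acceptances has probability $2^{-\Omega(n)}$ by a Chernoff bound. (If this is tight for small $i$-margins, we will instead use blocks of only the needed length and leave the remaining bits as padding.) This gives statistical closeness of $\mathcal A(\vec s)$ to $\vec A$.

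\textbf{The inverter $\mathcal B$.} Given $\vec A$, $\mathcal B$ samples, for each column $i$, a geometric number of rejected blocks. Each rejected block is uniform among the coefficient vectors with zero complementary part, and each is interpreted relative to the canonical basis computed from the true prefix $\vec a_1,\dots,\vec a_{i-1}$. $\mathcal B$ then writes the unique accepting block, namely the coordinates of $\vec a_i$ in that canonical basis. This is where determinism of the basis computation matters. The remaining bits are filled uniformly at random.

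\textbf{Correctness of $\mathcal B$.} The joint distribution of $(\vec A, \mathcal B(\vec A))$ equals the joint distribution of $(\mathcal A(\vec s), \vec s)$, conditioned on the non-exhaustion event. This follows from Bayes' rule: given the output, the transcript of rejections and acceptances factorizes exactly as $\mathcal B$ samples it. It also requires that the coordinate map is a bijection between accepted blocks and admissible vectors. Hence $\mathcal B(\vec A)$ is $\negl(n)$-close to uniform $\vec s$, and $\mathcal A(\mathcal B(\vec A))=\vec A$ whenever $\mathcal B$ does not overflow the $2n$ blocks, which happens with probability $1-\negl(n)$.

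The main obstacle is bookkeeping the seed length: the rejection count must fit within $4n^2$ bits with overwhelming probability. We must also argue carefully that the canonical basis, and therefore the block-to-vector bijection, depends only on the prefix of columns, so that $\mathcal B$ can reproduce it.
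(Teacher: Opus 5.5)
Your proposal is correct and follows essentially the paper's approach: draw candidate columns uniformly from the symplectic dual of the current span via a deterministically computed basis, and waste a draw whenever it lands in the current span. The paper keeps such draws in an auxiliary matrix $\widetilde{\vec A}_{2n}$, consumes only $d_i$ bits per draw, and extracts the first $n$ independent columns, whereas you reject fixed $2n$-bit blocks; in both versions $\mathcal B$ re-simulates the wasted draws with the correct probabilities, as uniform vectors from the current span, before re-encoding the seed.

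Two small fixes. First, $\mathcal B$ must also fill the unused $i-1$ bits inside each block uniformly at random. Second, the overflow event is ``more than $n$ rejections among the $2n$ blocks'', not ``more than $n$ blocks''; your domination of the rejection counts by independent geometric variables with rejection probability $1/4$ correctly bounds that event by $2^{-\Omega(n)}$.
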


\begin{proof}
Intuitively, $\CA$ uses $\vec s$ as a source of randomness to deterministically construct $\vec A$ such that if $\vec s$ is uniformly random, then $\vec A$ is a uniformly random full-rank isotropic matrix.
We construct $\mathcal{A}$ as follows.
Given $\vec s \sim \Z_2^{4n^2}$, we build $\vec{A}$ column-by-column. 
Let $\widetilde{\vec{A}}_i \in \Z_2^{2n \times i}$ be a matrix with $i$ columns, and let $\widetilde{\vec{A}}_0$ be an empty matrix. 
For $i = 1, \dots, 2n$, we use a deterministic algorithm to calculate an ordered basis for the symplectic dual space of $\operatorname{im}(\widetilde{\vec A}_{i-1})$. 
(By convention, the dual of the empty matrix is $\Z_2^{2n}$.)
This orthogonal complement has some dimension $d_i \leq 2n$, and we may consume the next $d_i$ unused bits of $\vec{s}$ to sample a random vector $\vec w \in \operatorname{im}(\widetilde{\vec A}_{i-1})^\perp$. 
Then, we can set $\widetilde{\vec A}_i = (\widetilde{\vec A}_{i-1}\,|\, \vec{w})$. 
This process yields a matrix $\widetilde{\vec A}_{2n}$, and there are always enough bits of $\vec s$ since there are $2n$ steps taking $d_i \leq 2n$ bits each. 
$\widetilde{\vec A}_{2n}$ is almost what we want, except that it has twice as many columns as needed and is not necessarily full rank. 

Therefore, we define $\vec A$ by iteratively adding the $i$th column of $\widetilde{\vec A}_{2n}$ if it is linearly independent to all previous columns.
We then pad $\vec A$ with zero columns until all $n$ columns are filled in (this is an edge-case event which we will momentarily show occurs with only negligible probability). 
Note that the number of columns is never larger than $n$, since there can be at most $n$ linearly independent vectors that are symplectically orthogonal. 

Now, we claim that $\mathcal{A}(\vec s)$ is statistically indistinguishable from a uniformly random full-rank isotropic matrix $\vec A \in \Z_2^{2n \times n}$. 
Conditioned on the event $L$---that $\vec A$ was not padded with any zero columns---we are constructing a matrix column by column by sampling symplectically orthogonal, linearly independent vectors, which yields precisely the same distribution as $\vec A$. 
Hence, we need only show $\Pr[L] = \negl(n)$. 
For the $i$th column of $\widetilde{\vec A}_{2n}$, the chance that it does not increase the rank from the previous $i-1$ columns is precisely $\frac{2^{2n - d_i}}{2^{d_i}}$. 
For $d_i \geq n + 1$ this probability is at most $1/4$.
At $i = 1$, $d_i$ begins at $2n$, and decreases by 1 each time a linearly independent vector is added.
If $\vec{A}$ does not have $n$ columns prior to padding, then $\widetilde{\vec{A}}_{2n}$ does not have rank $n$, so this event must occur at least $n + 1$ times.
Each time this event occurs, $d_i$ does not decrease, which means $d_i$ decreased at most $n-1$ times.
Hence, $d_i \geq n + 1$ for all $i$, and multiplying these events together gives
$\Pr[L] \leq \frac{1}{4^{n+1}} = \negl(n).$
We next construct algorithm $\mathcal{B}$. 
Given an input $\vec{A} \in \Z_2^{2n \times n}$ that is full-rank and isotropic, we will describe a procedure for $i = 1, \dots, 2n$ to build matrices $\widetilde{\vec{A}}_{i}$, with $\widetilde{\vec{A}}_0$ empty.
\begin{enumerate}
\item Let $d_i := \dim \operatorname{im}(\widetilde{\vec{A}}_{i-1})^{\perp}$. 
With probability $1 - \frac{2^{2n - d_i}}{2^{d_i}}$, let $\widetilde{\vec{A}}_{i} = (\widetilde{\vec{A}}_{i-1} \, | \, \vec{v}_i)$, where $\vec{v}_i$ is the next column of $\vec A$ which has not yet been sampled.

\item Otherwise, let $\widetilde{\vec{A}}_{i} = (\widetilde{\vec{A}}_{i-1} \, | \, \vec{u}_i)$, where $\vec{u}_i$ is a random vector in the span of the columns in $\vec A$ which have already been sampled.
(If no column has yet been sampled, then $\vec u_i = \vec 0$.)

\item Notice that the columns are all symplectically orthogonal. 
Now, given $\widetilde{\vec{A}}_{2n}$, reconstruct the random bits $\vec s'$ of length $m \leq 4n^2$ that would have produced $\widetilde{\vec{A}}_{2n}$. 
Sample the remaining bits uniformly at random to produce $\vec{s}$ of length $4n^2$. 
Output $\vec{s}$. 
\end{enumerate}
By the same reasoning as before, $\widetilde{\vec{A}}_{2n}$ is not full rank with only $\negl(n)$ probability. 
Moreover, conditioned on the event $L$ that $\widetilde{\vec{A}}_{2n}$ is full rank, we have that $\mathcal{A}(\mathcal{B}(\vec A)) = \vec A$. 
Indeed, applying $\mathcal{A}$ to the seed $\vec{s} = \mathcal{B}(\vec A)$ produces a matrix $\widetilde{\vec{A}}_{2n}$, and the first $n$ linearly independent columns obtained from it give $\vec A$. 
Hence, $
    \Pr[\mathcal{A}(\mathcal{B}(\vec A)) = \vec A] = 1 - \negl(n).$ 
It now suffices to show that $\mathcal{B}(\vec A)$ is statistically indistinguishable from a uniformly random $\vec s$. 
We will show that it is precisely equal to a uniform distribution over seeds for which applying $\mathcal{A}$ yields a full rank $\vec{A}$. 
To see this, note that for a fixed $\vec{A}$, the procedure in $\mathcal{B}$ computes a uniformly random symplectic $\widetilde{\vec{A}}_{2n}$ such that collecting the first $n$ linearly independent columns produces $\vec{A}$. 
This fact follows from the analysis of $\mathcal{A}$, which showed that in this distribution the chance that a subsequent column increases the rank is $1 - \frac{2^{2n - d_i}}{2^{d_i}}$. 
Then, given a uniformly random $\widetilde{\vec{A}}_{2n}$ corresponding to $\vec{A}$, the remaining procedure yields a uniformly random seed $\vec{s}$ that yields $\vec{A}$ (this is the converse of the deterministic procedure in $\mathcal{A}$ producing $\widetilde{\vec{A}}_{2n}$ from the prefix bits of a uniformly random $\vec{s}$). 
By the procedure of $\mathcal{B}$, every full-rank $\vec{A}$ has exactly the same number of seeds $\vec{s}$, since the number of seeds only depends on counting which columns increase the rank of $\vec{\widetilde{A}}_{2n}$, and not the actual values of the first $n$ linearly independent columns.
$\mathcal{B}(\vec A)$ is therefore a uniform distribution over the seeds $\vec{s}'$ for which $\mathcal{A}(\vec s')$ is full-rank. 
Now, noting that the event $L$ that $\mathcal{A}(\vec{s})$ is not full-rank has $\negl(n)$ probability, we conclude that there is a negligible fraction of seeds which are not in the preimage of full-rank $\vec{A}$, so that the uniform distribution over seeds $\vec{s'}$ has negligible TVD from $\vec s \sim \Z_2^{4n^2}$. 
\end{proof}

\begin{corollary}\label{cor:indistinguishability}
Let $\vec{s} \sim \Z_2^{4n^2}$, $\vec A \in \Z_2^{2n \times n}$, and algorithm $\CA$ be as in \Cref{lem:uniform_public_key}.
Fix any randomized algorithms $\mathcal{A}_1, \mathcal{A}_2$ mapping to $\{0, 1\}^m$ for $m= \poly(n)$.
If $(\vec{A}, \mathcal{A}_1(\vec{A}))$ is computationally indistinguishable from $(\vec{A}, \mathcal{A}_2(\vec{A}))$, then $(\vec{s}, \mathcal{A}_1(\mathcal{A}(\vec{s})))$ and $(\vec{s}, \mathcal{A}_2(\mathcal{A}(\vec{s})))$ are computationally indistinguishable. 
\end{corollary}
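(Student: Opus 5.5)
We argue by contrapositive: from a distinguisher $\CD$ for the seed-based pairs we build a distinguisher $\CD'$ for the matrix-based pairs, using the inverting algorithm $\CB$ of \Cref{lem:uniform_public_key} to manufacture a seed from a given matrix. Concretely, suppose an efficient $\CD$ distinguishes $(\vec s, \CA_1(\CA(\vec s)))$ from $(\vec s, \CA_2(\CA(\vec s)))$ with non-negligible advantage $\e$. On input $(\vec A, \vec z)$, where $\vec z$ is either $\CA_1(\vec A)$ or $\CA_2(\vec A)$, the distinguisher $\CD'$ samples $\hat{\vec s} \leftarrow \CB(\vec A)$ using fresh randomness and outputs $\CD(\hat{\vec s}, \vec z)$. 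This is efficient because $\CB$ runs in polynomial time.

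The core step is to show that, for each $b \in \{1,2\}$, the distribution of $(\CB(\vec A), \CA_b(\vec A))$ is within negligible total variation distance of $(\vec s, \CA_b(\CA(\vec s)))$. I would pass through an intermediate hybrid $(\CB(\vec A), \CA_b(\CA(\CB(\vec A))))$.
\begin{itemize}
\item \emph{From the real input to the hybrid.} By the lemma, $\CA(\CB(\vec A)) = \vec A$ except with probability $\negl(n)$. Off that event, and using the same internal randomness for $\CA_b$, the two tuples coincide, so their statistical distance is at most $\negl(n)$.
\item \emph{From the hybrid to the seed-based pair.} The hybrid is obtained by applying the fixed randomized map $\vec t \mapsto (\vec t, \CA_b(\CA(\vec t)))$ to $\vec t = \CB(\vec A)$. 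The seed-based pair is the same map applied to $\vec t = \vec s$. Since $\CB(\vec A)$ is statistically close to uniform $\vec s$ and randomized post-processing cannot increase total variation distance, these are $\negl(n)$-close.
\end{itemize}

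With this closeness, the advantage of $\CD'$ in distinguishing $(\vec A, \CA_1(\vec A))$ from $(\vec A, \CA_2(\vec A))$ is at least $\e - \negl(n)$, which is non-negligible. This contradicts the hypothesis.

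The main point of care is the first bullet. The lemma's statement that $\CB(\vec A)$ is close to $\vec s$ is only a marginal statement, so it does not by itself control the joint distribution of $\CB(\vec A)$ together with $\vec z$. That joint control comes from the correctness property $\CA(\CB(\vec A)) = \vec A$, which ties the manufactured seed back to the matrix that actually produced $\vec z$. The marginal closeness is then applied once, to the common variable $\vec t$, under deterministic-plus-randomized post-processing.

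We should also check that coupling the randomness of $\CA_b$ is legitimate, which holds because $\CA_b$'s coins are independent of $\CB$'s coins. Finally, if $\CA_1, \CA_2$ are not efficient, the argument still goes through for distinguishers given the sample, since $\CD'$ never runs $\CA_b$ itself.
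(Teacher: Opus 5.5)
Your proposal is correct and is essentially the paper's proof. The paper likewise runs the distinguisher on $(\CB(\vec A), \CA_b(\vec A))$, uses $\CA(\CB(\vec A)) = \vec A$ with probability $1-\negl(n)$ to identify this with $(\vec s', \CA_b(\CA(\vec s')))$ for $\vec s' = \CB(\vec A)$, and then uses that $\vec s'$ is statistically close to a uniform seed; your explicit intermediate hybrid and coupling of $\CA_b$'s coins just spell out the step the paper compresses into one sentence.
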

\begin{proof}
Suppose for contradiction that there is a distinguisher $\CD$ for the latter two distributions. 
Then we can evaluate $\mathcal{D}$ on the sample $(\mathcal{B}(\vec{A)}, \mathcal{R}(\vec A))$, where $\mathcal{R}$ is either $\mathcal{A}_1$ or $\mathcal{A}_2$. 
By writing $\vec{s}' = \mathcal{B}(\vec A)$, this is precisely $\mathcal{D}((\vec s', \mathcal{R}(\mathcal{A}(\vec s')))$ with probability $1 - \negl(n)$, since $\mathcal{A}(\mathcal{B}(\vec A)) = \vec A$ with that probability. 
However, $\vec{s}'$ has negligible total variation distance from $\vec{s}$, and hence $\mathcal{D}$ can distinguish between  $\mathcal{R}$ being $\mathcal{A}_1$ versus $\mathcal{A}_2$. 
This is a contradiction. 
\end{proof}

\begin{corollary}
Assuming $\symplpn[n, p]$ with $p = \Theta(\frac{1}{\sqrt{n}})$, there exists a \textsf{IND-CPA}-secure \textsf{SU-PKE} scheme.
\end{corollary}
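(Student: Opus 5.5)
We modify \Cref{const-pke} so that the public key carries the seed instead of the matrix. Let $\Sigma' = (\gen', \enc', \dec')$ be defined as follows. $\gen'(1^n)$ samples $\vec s \sim \Z_2^{4n^2}$ and computes $\vec A := \CA(\vec s)$ using the deterministic algorithm of \Cref{lem:uniform_public_key}. It then samples $\vec x \sim \Z_2^n$ and $\vec e \sim \CD_p^{\otimes n}$, and outputs $\pk = (\vec s, \vec b = \vec A\vec x + \vec e)$ and $\sk = \vec x$. $\enc'$ first recomputes $\vec A = \CA(\vec s)$ and then runs $\enc$ on $(\vec A, \vec b)$. $\dec'$ is identical to $\dec$. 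Since $\CA$ is deterministic and polynomial-time, both key generation and encryption remain efficient, and the public key lives in the group $\Z_2^{4n^2 + 2n}$.

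\textbf{Correctness.} By \Cref{lem:uniform_public_key}, $\CA(\vec s)$ is within negligible total variation distance of a uniformly random full-rank isotropic matrix. Hence the joint distribution of $(\pk, \sk, \mathsf{ct})$ in $\Sigma'$ is negligibly close to that of $\Sigma$ under the map $\vec s \mapsto \CA(\vec s)$. Correctness then follows from \Cref{lem:pke_correctness}, losing only an additive $\negl(n)$; we choose $p = \Theta(1/\sqrt n)$ so that the error stays at most $\delta$.

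\textbf{Strong uniformity.} Apply \Cref{cor:indistinguishability} with $\CA_1(\vec A) = \vec A\vec x + \vec e$ and $\CA_2(\vec A) = \vec u \sim \Z_2^{2n}$. The hypothesis, that $(\vec A, \vec A\vec x+\vec e)$ is computationally indistinguishable from $(\vec A, \vec u)$, is exactly the hardness of Decision $\symplpn[n,p]$. The corollary then gives that $(\vec s, \vec A\vec x + \vec e)$ with $\vec A = \CA(\vec s)$ is indistinguishable from $(\vec s, \vec u)$, and the latter is uniform over $\Z_2^{4n^2+2n}$. This is \Cref{def:SU-PKE}.

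\textbf{IND-CPA security.} We reduce to \Cref{thm:pke_security}. Given an adversary $\CD'$ against $\Sigma'$, we build an adversary against $\Sigma$: on input $(\vec A, \vec b)$ and a challenge ciphertext, it computes $\vec s' = \CB(\vec A)$ and runs $\CD'$ on $((\vec s', \vec b), \mathsf{ct})$. By \Cref{lem:uniform_public_key}, $\vec s'$ is statistically close to uniform and $\CA(\vec s') = \vec A$ with probability $1-\negl(n)$. So $\CD'$'s view is negligibly close to a real $\Sigma'$ experiment, because the ciphertext was computed from exactly the matrix $\CA(\vec s')$. Hence the advantage carries over up to $\negl(n)$. \Cref{thm:pke_security} then yields a solver for Decision $\symplpn[n,p]$ or Decision $\symplpn[n,n-1,p]$. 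The latter reduces to Decision $\symplpn[n,p'']$ with $p'' \le p$ by \Cref{thm:sympLPN_reducing_logicals}: we instantiate the scheme with noise $p$, set $p'' = p - \log^2 n/n = \Theta(1/\sqrt n)$, and assume hardness at both rates in the $\Theta(1/\sqrt n)$ regime. Alternatively, this step can be phrased as an application of \Cref{cor:indistinguishability} to the hybrid distributions of \Cref{thm:pke_security}, with $\CA_i$ now outputting $(\vec b, \mathsf{ct})$.

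The main obstacle is the bookkeeping in this last step. The reductions must transfer joint distributions involving $\vec s$, $\vec b$ and the ciphertext, not merely the public key, and \Cref{thm:sympLPN_reducing_logicals} slightly shifts the noise parameter. Both issues appear to be handled by the $\CB$-based simulation and by choosing $p$ inside the $\Theta(1/\sqrt n)$ window.
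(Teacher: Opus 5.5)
Your proposal is correct and follows essentially the paper's route. It uses the same seed-based public key $(\vec s, \vec b)$, correctness via statistical closeness, strong uniformity via \Cref{cor:indistinguishability}, and IND-CPA security via the $\CB$-inversion simulation, which is exactly the mechanism behind \Cref{cor:indistinguishability}; you apply it once to the whole IND-CPA experiment, whereas the paper applies it hybrid-by-hybrid. Your explicit handling of the noise shift in \Cref{thm:sympLPN_reducing_logicals} (taking $p'' = p - \log^2 n/n$) makes precise what the paper treats only informally after \Cref{thm:pke_security}.
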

\begin{proof}
Using \Cref{lem:uniform_public_key}, the construction, correctness, and security of the scheme in \Cref{sec:public-key} can be readily repeated, which we now outline. 
In the construction keep the same secret key, but take the public key $\pk$ to be $(\vec s, \vec b)$ for $\vec b = \mathcal{A}(\vec s) \cdot \vec x + \vec e$, where $\vec s \sim \Z_2^{4n^2}$ is uniformly random.
Assuming the hardness of $\symplpn$, this $\pk$ is manifestly computationally indistinguishable from uniformly random, satisfying the requirements of \textsf{SU-PKE} given in \Cref{def:SU-PKE}. 
Then, the encryption algorithm $\enc(\pk, \mu)$ outputs $\mathsf{ct} = (\vec f \odot \mathcal{A}(\vec s), \vec f \odot \vec b + \mu)$, while the decryption algorithm $\dec(\sk, \mathsf{ct})$ for $\mathsf{ct} = (\vec u, c)$ outputs $c + \vec u \cdot \vec x$. 

Correctness easily follows from \Cref{lem:pke_correctness}, since the scheme is exactly the same as the one in \Cref{sec:public-key}, up to negligible total variation distance. 
The only remaining claim is \textsf{IND-CPA} security. 
In the security proof in \Cref{thm:pke_security}, one can replicate the exact same hybrid argument, replacing $\vec{A}$ in the public key with $\vec s$. 
Our arguments that $\mathsf{H}_0$ and $\mathsf{H}_1$ are indistinguishable, and that $\mathsf{H}_1$ and $\mathsf{H}_2$ are indistinguishable, both follow in the \textsf{SU-PKE} construction from an application of \Cref{cor:indistinguishability}.
Indeed, in both cases the new hybrid argument is the same as the original, except that $\vec{A}$ is replaced in the public key with $\vec{s}$.
Meanwhile, the rest of the public key and ciphertext is one of two different randomized functions of $\vec{A}$, as formulated in \Cref{cor:indistinguishability}.
\end{proof}

While there are black-box impossibility results for using $\mathsf{PKE}$ to build $\mathsf{OT}$~\cite{gertner2000relationship}, strongly uniform $\mathsf{PKE}$ circumvents these restrictions. 
In fact, strongly uniform PKE with CPA-security implies a round-optimal (four rounds) maliciously secure $\mathsf{OT}$  protocol~\cite{friolo2019black}.
It is also known that a 4-round, maliciously secure, $\mathsf{OT}$ protocol implies round-optimal $4$-round secure multiparty computation~\cite{kilian1988founding}.

\begin{corollary}
There exists a four-round, maliciously secure oblivious transfer protocol assuming the computational hardness of $\symplpn[n, p]$ with $p = \Theta(\frac{1}{\sqrt{n}})$.
\end{corollary}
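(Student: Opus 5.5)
The plan is a composition of black-box results. By the preceding corollary, hardness of $\symplpn[n, p]$ with $p = \Theta(1/\sqrt{n})$ gives an \textsf{IND-CPA}-secure \textsf{SU-PKE} scheme. I would therefore apply the transformation of~\cite{friolo2019black}, which turns any CPA-secure strongly uniform (Type-A) \textsf{PKE} into a four-round \textsf{OT} protocol secure against malicious adversaries. The proof then reduces to checking that our scheme meets every hypothesis of that theorem.

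First, I would check the syntactic requirements in \Cref{def:SU-PKE}. The public key $(\vec s, \vec b) \in \Z_2^{4n^2 + 2n}$ lives in the group $G(n) = \Z_2^{4n^2+2n}$. This group is efficiently sampleable, and its group operation (XOR) is efficient, which is what the Friolo et al.\ construction uses. The receiver in that protocol samples a random group element and combines it with a real public key, so that one of the two keys is real and the other is uniformly random.

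Second, I would verify computational uniformity of the public key. This follows from \Cref{cor:indistinguishability}. Take $\mathcal{A}_1(\vec A) = \vec A\vec x + \vec e$ and $\mathcal{A}_2(\vec A) = \vec u$. The two pairs $(\vec A, \mathcal{A}_1(\vec A))$ and $(\vec A, \mathcal{A}_2(\vec A))$ are indistinguishable by Decision $\symplpn[n,p]$. Hence $(\vec s, \mathcal{A}(\vec s)\vec x + \vec e) \approx_c (\vec s, \vec u)$, which is uniform.

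Third, I would confirm correctness and \textsf{IND-CPA} security. Correctness with constant $\delta$ comes from \Cref{lem:pke_correctness}. If the black-box theorem demands negligible decryption error, I would amplify by encrypting each bit $\lambda = \omega(\log n)$ times under independent noise and decoding by majority vote. This only costs a polynomial factor and preserves CPA security by a standard hybrid argument. CPA security comes from \Cref{thm:pke_security} together with \Cref{thm:sympLPN_reducing_logicals}. That theorem reduces $\symplpn[n, n-1, p]$ to $\symplpn[n, p^*]$ for some $p^* = p - \log^2 n/n$, which is still $\Theta(1/\sqrt{n})$. So the whole scheme rests on $\symplpn[n,p]$ with $p = \Theta(1/\sqrt{n})$ after adjusting constants.

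The main obstacle is the parameter bookkeeping. Security uses $\symplpn[n-1, n, p]$, and the reduction in \Cref{thm:sympLPN_reducing_logicals} shifts the noise rate by $\log^2 n / n$. I would instantiate the scheme at rate $p$ and cite hardness at the slightly smaller rate $p^*$. Both rates are $\Theta(1/\sqrt{n})$, so the corollary's hypothesis, read as holding for some such $p$, covers this case. Once these checks are in place, the four-round maliciously secure \textsf{OT} follows directly from~\cite{friolo2019black}.
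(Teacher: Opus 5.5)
Your proposal is correct and follows the paper's own argument. The paper obtains this corollary by plugging the \textsf{IND-CPA}-secure \textsf{SU-PKE} of the preceding corollary into the black-box transformation of~\cite{friolo2019black}; your explicit checks (key uniformity via \Cref{cor:indistinguishability}, noise-rate bookkeeping through \Cref{thm:sympLPN_reducing_logicals}) spell out what the paper leaves implicit. Two small points to tighten. First, hardness at rate $p$ itself, needed for $\mathsf{H}_0 \approx \mathsf{H}_1$ and key uniformity, follows from hardness at $p^* < p$ by adding independent depolarizing noise via \Cref{lemma:depolarizing_convolution}, so one assumption at $p^*$ suffices. Second, your majority-vote copies all share the same $\vec e$, so condition on the overwhelming event that $\vec e$ has at most $2pn$ nonzero pairs; given that event, each copy errs independently with probability bounded away from $1/2$.
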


\begin{corollary}
There exists a four-round secure multiparty computation scheme assuming the computational hardness of $\symplpn[n, p]$ with $p = \Theta(\frac{1}{\sqrt{n}})$. 
\end{corollary}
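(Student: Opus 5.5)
The argument is a black-box composition of the preceding corollaries. First, by the previous corollary, the hardness of $\symplpn[n, p]$ with $p = \Theta(1/\sqrt{n})$ yields a four-round, maliciously secure \textsf{OT} protocol. That corollary is itself obtained by applying the transformation of \cite{friolo2019black} to the \textsf{IND-CPA}-secure \textsf{SU-PKE} scheme built via \Cref{lem:uniform_public_key} and \Cref{cor:indistinguishability}. The security of that scheme rests only on Decision $\symplpn[n, p]$: \Cref{thm:pke_security} also invokes Decision $\symplpn[n, n-1, p]$, but \Cref{thm:sympLPN_reducing_logicals} reduces $\symplpn[n, p]$ to $\symplpn[n-1, n, p + \log^2 n/n]$, so that second assumption can be absorbed.

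For this to work, the noise rates must be set consistently. The scheme should be instantiated with noise rate $p' = p + \log^2 n / n$. This $p'$ is still $\Theta(1/\sqrt{n})$, since $\log^2 n/n = o(1/\sqrt{n})$, so it remains in the regime where \Cref{lem:pke_correctness} gives $(1-\delta)$-correctness. The hypothesis $\frac34 - p = \Omega(1)$ needed by \Cref{thm:sympLPN_reducing_logicals} is trivially satisfied here. One subtlety must be checked: \Cref{thm:pke_security} needs hardness of both $\symplpn[n, p']$ and $\symplpn[n-1, n, p']$. The first is hardness at the slightly larger rate $p'$, which follows from hardness at rate $p$ by adding independent depolarizing noise via \Cref{lemma:depolarizing_convolution}. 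The second follows from \Cref{thm:sympLPN_reducing_logicals} applied at rate $p$. With both in hand, the \textsf{OT} protocol is secure against malicious quantum polynomial-time adversaries, since all reductions above are black-box and work against quantum adversaries as well.

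Second, we apply the known compiler cited in the text \cite{kilian1988founding}, which turns a four-round maliciously secure \textsf{OT} into a four-round secure multiparty computation protocol for arbitrary polynomial-size functionalities. The compiler uses \textsf{OT} only as a black box, so its output inherits the security guarantee of the underlying \textsf{OT}. For post-quantum security, we need that the compiler's own security argument contains no rewinding or other step that fails against quantum adversaries. This is the main point to verify. If it fails, one would instead cite a post-quantum round-optimal \textsf{OT}-to-MPC compiler, whose only assumption is again the \textsf{OT} (together with \textsf{OWF}s, which the \textsf{PKE} already implies).

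The main obstacle is therefore not technical but a matter of bookkeeping. First, confirm that the round counts compose, so that the \textsf{OT} rounds embed into the MPC rounds without adding rounds. Second, confirm that the noise rates line up as described. Both are routine given the earlier results, and no new probabilistic analysis is required.
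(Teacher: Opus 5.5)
Your proposal follows the paper's route: the corollary comes from composing the four-round maliciously secure \textsf{OT} (from the \textsf{SU-PKE} via \cite{friolo2019black}) with the \textsf{OT}-to-MPC compiler that the paper attributes to \cite{kilian1988founding}. Your noise-rate bookkeeping is correct and fills in a step the paper only sketches: you instantiate at $p' = p + \log^2 n/n = \Theta(1/\sqrt{n})$, get \symplpn[n,p'] from \symplpn[n,p] via \Cref{lemma:depolarizing_convolution}, and get the $(n-1)$-logical-bit problem via \Cref{thm:sympLPN_reducing_logicals}. One caution: being black-box does not by itself make rewinding-based simulation proofs secure against quantum adversaries, so that remark does not justify quantum security of the \textsf{OT}. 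The statement as written does not claim post-quantum security, so this does not affect the argument.
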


These results demonstrate that low-noise $\symplpn$ can underlie the same central cryptographic primitives that low-noise $\lpn$ can, including public-key encryption, oblivious transfer, and secure multiparty computation.
It is open, however as to the comparative practical security of these schemes.

\section*{Acknowledgments}

We thank Alexandru Gheorghiu, Gregory Kahanamoku-Meyer, Peter Shor and Vinod Vaikuntanathan for many insightful discussions, particularly on the state-of-the-art attacks on $\lpn$ and its many cryptographic applications. 
We would also like to credit Kabir Tomer with an independent discovery of how to construct one-way functions from the $\lsn$ assumption.

JZL is funded in part by a National Defense Science and Engineering Graduate (NDSEG) Fellowship.
YQ is supported by a collaboration between the US DOE and other Agencies. 
This material is based upon work supported by the U.S. Department of Energy, Office of Science, National Quantum Information Science Research Centers, Quantum Systems Accelerator. 

\bibliographystyle{alpha}
\bibliography{main}

\appendix

\newpage

\crefalias{section}{appendix}

% hyperref anchor uniqueness
\renewcommand{\theHsection}{app.\Alph{section}}

\section{Further discussion of $\lpn$}
\label{app:more_on_lpn}

For completeness, we here give a formal definition of $\lpn$ and give a simple reduction from $\lpn[k, n, p]$ to $\lpn[k - k', n, p]$, where $k' = O(\log n)$.

\begin{definition}[Decision Learning Parity with Noise, $\lpn$] \label{def:lpn}
Let $\vec A \sim \Z_2^{n \times k}$, $\vec x \sim \Z_2^k$, and $\vec e \sim \Ber(p)^{\otimes n}$, i.e. each $e_i$ is i.i.d. $1$ with probability $p$ and $0$ with probability $1-p$.
Let $\vec u \sim \Z_2^n$.
$\lpn(k, n, p)$ is the task of distinguishing between the two distributional samples \begin{align}
    (\vec A, \vec{Ax} + \vec e) \quad \text{or} \quad (\vec A, \vec u \sim \Z_2^n)
\end{align}
with advantage $1/\poly(n)$.
The former case is called \emph{structured} and the latter \emph{unstructured}.
\end{definition}

\begin{lemma}[Reducing a logical bit in $\lpn$] \label{claim:LPN_reduce_logical}
Let $p \in (0, 1)$ and $k, k', n \in \mathbb N$ such that $k > k'$.
Let $\CO$ be an oracle which solves Decision $\lpn(k-k', n, p)$ with advantage $\delta$.
Then there is an algorithm $\CA$, running in time $O(\poly(n))$, which solves Decision $\lpn(k, n, p)$ with advantage $\delta / 2^{k'}$ using a single call to $\CO$.
\end{lemma}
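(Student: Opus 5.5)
The reduction truncates the matrix. Given an instance $(\vec A, \vec z)$ of Decision $\lpn(k, n, p)$, write $\vec A = [\vec A' \,|\, \vec A'']$ with $\vec A' \in \Z_2^{n \times (k-k')}$ and $\vec A'' \in \Z_2^{n \times k'}$. The algorithm $\CA$ queries $\CO$ once on $(\vec A', \vec z)$ and returns whatever $\CO$ outputs. Marginally, $\vec A'$ is a uniformly random matrix in $\Z_2^{n \times (k-k')}$, so only the distribution of $\vec z$ given $\vec A'$ needs to be controlled.

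\textbf{Unstructured case.} If $\vec z = \vec u \sim \Z_2^n$ is independent of $\vec A$, then $(\vec A', \vec u)$ is exactly an unstructured $\lpn(k-k', n, p)$ sample. Let $\bar q$ denote the probability that $\CO$ outputs \texttt{structured} on such samples.

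\textbf{Structured case.} Write $\vec x = [\vec x' \,|\, \vec x'']$, so that $\vec z = \vec A'\vec x' + \vec A''\vec x'' + \vec e$. There are two sub-cases.
\begin{itemize}
\item If $\vec x'' = \vec 0$, which happens with probability $2^{-k'}$, then $\vec z = \vec A'\vec x' + \vec e$ with $\vec x' \sim \Z_2^{k-k'}$ uniform and independent of $\vec A'$ and $\vec e$. This is exactly a structured $\lpn(k-k', n, p)$ sample. Let $q$ denote the probability that $\CO$ outputs \texttt{structured} here; by assumption $|q - \bar q| \ge \delta$.
\item If $\vec x'' \neq \vec 0$, then $\vec A''\vec x''$ is a uniformly random vector in $\Z_2^n$. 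This holds because $\vec A''$ has i.i.d.\ uniform entries independent of $\vec A'$, $\vec x'$ and $\vec e$, and some coordinate of $\vec x''$ is nonzero, so a uniform column is added. Hence $\vec z$ is uniform and independent of $\vec A'$, and $(\vec A', \vec z)$ is exactly an unstructured sample. Here $\CO$ outputs \texttt{structured} with probability $\bar q$.
\end{itemize}

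\textbf{Advantage calculation.} On structured inputs, $\CA$ outputs \texttt{structured} with probability $2^{-k'} q + (1 - 2^{-k'})\bar q$. On unstructured inputs it does so with probability $\bar q$. The difference is $2^{-k'}|q - \bar q| \ge \delta / 2^{k'}$, as claimed. The running time is polynomial, since $\CA$ only drops columns and makes a single oracle call.

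The main point needing care is the independence argument in the $\vec x'' \neq \vec 0$ case: $\vec z$ must be uniform \emph{jointly} with $\vec A'$, not merely marginally. This follows by conditioning on $\vec A'$, $\vec x'$, $\vec e$ and $\vec x''$, and using that $\vec A''$ is independent of all of them. A side remark: the advantage is stated with respect to the distinguishing gap, so if $\CO$ has a sign (i.e.\ $q < \bar q$), $\CA$ inherits the same sign, and this causes no issue.
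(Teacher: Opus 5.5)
Your proof is correct and matches the paper's argument essentially step for step. You drop the last $k'$ columns, note that a structured input becomes an exact structured $\lpn(k-k',n,p)$ sample when $\vec x''=\vec 0$ (probability $2^{-k'}$) and an exact unstructured sample otherwise, and read off the gap $2^{-k'}|q-\bar q|$. Your remark that $\vec z$ must be uniform jointly with $\vec A'$ is also the right way to state the key step; the paper's phrasing, that $\vec z$ is independent of all of $\vec A$, is looser.
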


\begin{proof}
Let the Decision $\lpn(k, n, p)$ instance be $(\vec A \sim \Z_2^{n \times k}, \vec u)$.
%the code given as part of the input to the $\lpn(k, n, p)$ instance be $\vec A \sim \Z_2^{n \times k}$, and $\vec u$ be the other part of the input whose distribution depends on whether the instance is structured.
Define $\vec A' \in \Z_2^{n \times (k-k')}$ by discarding the last $k'$ columns of $\vec A$, and define $\vec{\bar{A}} \in \Z_2^{n \times k'}$ to be the last $k'$ columns of $\vec A$.
The algorithm will run $\CO$ on $(\vec A', \vec u)$ and output its answer.
Suppose that the $\lpn$ instance is structured, so $\vec u = \vec{Ax} + \vec e$ for $\vec x \sim \Z_2^k$ and $\vec e \sim \Ber(p)^{\otimes n}$.
Let $\vec x = (\vec x', \vec{\bar{x}})$ with dimensions $k-k'$ and $k'$.
With probability $1/2^{k'}$, $\vec{\bar{x}} = \vec 0$. 
Conditioned on this event, $(\vec A', \vec u)$ is precisely a structured instance of $\lpn(k-k', n, p)$ and $\mathcal{O}$ outputs the correct answer.
Otherwise, $\vec{\bar{x}} \neq 0$ and $\vec u = \vec A' \vec x' + \vec{\bar{A} \bar{x}} + \vec e$.
But $\vec{\bar{A} \bar{x}}$ is a uniformly random bitstring since the columns of $\vec{\bar{A}}$ are uniformly random and $\vec{\bar{x}} \neq \vec 0$.
Hence, $\vec u$ is independent from $\vec{A}$ and marginally $\vec u \sim \Z_2^{n}$, i.e. $(\vec A, \vec u)$ is an unstructured instance.
Let $p_0 := \Pr[\CA = \text{\texttt{structured}} \,|\, \text{\texttt{structured}}]$ and $p_1 := \Pr[\CA = \text{\texttt{structured}} \,|\, \text{\texttt{unstructured}}]$.
Define $q_0, q_1$ similarly for $\CO$ as an oracle solving $\lpn(k-k', n, p)$.
Then \begin{align}
    p_0 = \frac{1}{2^{k'}} q_0 + \left( 1 - \frac{1}{2^{k'}} \right) q_1 , \quad p_1 = q_1 .
\end{align}
By definition, the advantage of $\CA$ is $|p_0 - p_1|$, and by assumption, $|q_0 - q_1| = \delta$.
Consequently,
\begin{align}
    |p_0 - p_1| = \frac{1}{2^{k'}} |q_0 - q_1| = \frac{\delta}{2^{k'}} 
\end{align}
as claimed.
\end{proof}

This proof implies a reduction from $\lpn[k, n, p]$ to $\lpn[k-k', n, p]$ for any $k' = O(\log n)$ logical bits.

\section{One-way function from high-noise $\slsn$}
\label{app:owff_tight_eqiuvalence}

We here construct a one-way function family ($\owff$) such that any algorithm inverting the $\owff$ implies an algorithm solving $\lsn$.
In fact, this $\owff$ can be inverted efficiently if and only if there are efficient solvers for both $\lsn$ and a Search version of $\symplpn$ (recover the noisy codeword instead of distinguish two distributions) and is therefore potentially more secure than Search $\lsn$ itself because no reduction is known in either direction between Search $\symplpn$ and $\lsn$.

\begin{definition}[$\owff$] \label{def:owff}
    Let $\CF_n := \set{f_I : \CD_I \to \CR_I}_{I \in \CI_n}$ be a collection of functions $f_I$ from a domain set $\CD_I$ to a range set $\CR_I$, where $I$ is an index drawn from an index set $\CI_n$ which depends on the security parameter $n$.
    Then $\CF_n$ is a \emph{one-way function family} ($\owff$) if the following conditions hold. \begin{enumerate}
        \item[(1) ] There exists a classical algorithm $\gen(1^n)$, running in time $O(\poly(n))$, which produces an index $I \in \CI_n$.
        
        \item[(2) ] There exists a classical algorithm $\sample(I)$ such that for any $I \in \CI_n$, $\sample(I)$ samples a uniformly random element of $\CD_I$ in time $O(\poly(n))$.
        
        \item[(3) ] There exists a classical algorithm $\eval(I, x)$ such that for any $I \in \CI_n$ and any $x \in \CD_I$, $\eval(I, x)$ outputs $f_I(x) \in \CR_I$ in time $O(\poly(n))$.
        
        \item[(4) ] For any (quantum or classical) algorithm $\CA$ running in time $\poly(n)$, \begin{align}
            \Pr_{\CA, I, x}[f_I(\CA(I, f_I(x))) = f_I(x)] = \negl(n) 
        \end{align}
        where in the randomness, $I \sim \gen(1^n)$ and $x \sim \sample(I)$.
    \end{enumerate}
\end{definition}
The existence of $\owff$ is equivalent to the existence of a standard one-way function, which in turn implies pseudorandom generators, pseudorandom functions, and all other primitives of private-key cryptography~\cite{katz2007introduction}.

\begin{construction}[$\owff$ candidate from Search $\lsn$] \label{def:owff_from_lsn}
We instantiate a $\owff$ candidate based on Search $\lsn(k, n, p)$ as follows.
\begin{itemize}
    \item $\gen(1^n)$ samples the $\lsn$ matrices $I = (\vec A, \vec B)$.
    
    \item Define $\CD_I = \Z_2^{n} \times \Z_2^{k} \times \CW_{2.01np}$, where $\CW_d$ is the set of length-$n$ bitstrings with weight at most $d$. 
    $\sample(I)$ ignores $I$ and outputs $(\vec r, \vec y, \vec e)$ where $\vec r \sim \Z_2^n$, $\vec y \sim \Z_2^k$, and $\vec e \sim \CD_p^{\otimes n}$.
    If $|\vec e| > 2.01 np$ then $\sample$ instead outputs $\vec 0$ for the error so as to remain in $\CW_{np}$; however, the Chernoff bound implies that this event occurs with probability $\negl(n)$ so long as $np = \omega(\log n)$.

    \item $f_I(x) = f_{\vec A, \vec B}(\vec y, \vec r, \vec e) = (\vec A, \vec B, \vec{Ar} + \vec{By} + \vec e)$.
    Since this function is given by matrix multiplication, $f_I$ can be efficiently computed as required. 
\end{itemize}
\end{construction}

We now show that inverting $f_I$ with non-negligible probability is at least as hard as solving Search $\lsn(k, n, p)$, for any reasonable choice of $k$ and $p$ as functions of $n$.

\begin{theorem}[$\owff$ secured by $\lsn$] \label{claim:owff_security_LSN}
Let $\CF$ be the $\owff$ candidate in \Cref{def:owff_from_lsn}, with parameters $k, p$ such that $k = \omega(\log n)$ and $p = \omega(\frac{\log n}{n})$.
Further, with $\delta := 4.03p$ and $R := k / n$, assume that $(\delta, R)$ satisfies the quantum Gilbert-Varshamov bound \begin{align}
    H_2(\delta) + \delta \log 3 < 1 - R ,
\end{align}
where $H_2(x) = -x \log x - (1-x)\log(1-x)$ is the binary entropy.
Suppose that there exists an algorithm $\CA$, running in time $T$, such that \begin{align}
    \Pr_{\CA, I, x}[f_I(\CA(I, f_I(x))) = f_I(x)] = \frac{1}{\poly(n)} .
\end{align}
Then there exists an algorithm $\CB$, running in time $T + \poly(n)$, which solves Search $\lsn$ with probability $\frac{1}{2^k} + \frac{1}{\poly(n)} = \frac{1}{\poly(n)}$.
\end{theorem}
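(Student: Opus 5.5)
The reduction $\CB$ is direct. Given a Search $\lsn(k,n,p)$ instance $(\vec A, \vec B, \vec z)$ with $\vec z = \vec{Ar} + \vec{By} + \vec e$, we run $\CA$ on $(I, f_I(x)) = ((\vec A,\vec B),(\vec A,\vec B,\vec z))$. It returns some preimage $(\hat{\vec r}, \hat{\vec y}, \hat{\vec e}) \in \CD_I$, and $\CB$ outputs $\hat{\vec y}$.

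The first point is that the input has the right distribution. The triple $(\vec r,\vec y,\vec e)$ in the $\lsn$ instance is distributed exactly as $\sample(I)$, except that $\sample$ replaces $\vec e$ by $\vec 0$ when $|\vec e| > 2.01np$. By Chernoff this happens with probability $\negl(n)$ when $np=\omega(\log n)$. So $\CA$ still inverts with probability $1/\poly(n)$, and $\CB$ runs in time $T+\poly(n)$.

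The main step is showing that any successful inversion returns the correct $\vec y$ with high probability. Suppose $\CA$ outputs $(\hat{\vec r},\hat{\vec y},\hat{\vec e})$ with $\vec{A}\hat{\vec r} + \vec B\hat{\vec y} + \hat{\vec e} = \vec z$, and we are in the typical event $|\vec e| \le 2.01np$. Then
\[
\vec A(\vec r+\hat{\vec r}) + \vec B(\vec y+\hat{\vec y}) = \vec e+\hat{\vec e},
\]
whose symplectic weight (number of nonidentity qubits) is at most $4.02np < \delta n$. If $\hat{\vec y}\neq\vec y$, this exhibits a vector in $\operatorname{im}[\vec A\,|\,\vec B]\setminus \operatorname{im}\vec A$ of weight below $\delta n$. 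Here $\operatorname{im}\vec A$ is the stabilizer group, and $\operatorname{im}[\vec A\,|\,\vec B]$ lies inside the normalizer, since it is spanned by $\operatorname{im}\vec A$ together with a set of logical operators. Such a vector is therefore a nontrivial logical-type operator of low weight.

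The task is then to show that for the random code such a vector exists only with negligible probability. We bound by a union bound over nonzero $\vec y'$ and all $\vec r'$, working with each fixed pair $(\vec y',\vec r')$ with $\vec y' \neq \vec 0$. The vector $\vec{Ar}'+\vec B\vec y'$ is a nonzero element of a random Clifford frame lying outside the stabilizer span, and it should be close to uniform over nonzero vectors of $\Z_2^{2n}$. This is the same structure as the uniformity of $\vec b_1$ outside $V$ in the proof of \Cref{thm:lsn_to_symplpn}, and it follows from the Clifford invariance of the code ensemble.

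I expect this counting step to be the main obstacle. It requires the uniformity claim to be made precise and the exponents to be tracked exactly. Granting the near-uniformity, each fixed pair lands in the set of weight-$<\delta n$ Paulis with probability about $3^{\delta n}\binom{n}{\delta n}/4^n$. The number of pairs is $2^{n+k}$. The total is therefore
\[
2^{n(1+R)}\cdot 2^{n(H_2(\delta)+\delta\log 3)}\cdot 4^{-n} = 2^{n(H_2(\delta)+\delta\log3 -(1-R))},
\]
which is exponentially small by the quantum Gilbert--Varshamov hypothesis.

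Consequently, whenever $\CA$ succeeds, $\hat{\vec y}=\vec y$ except with negligible probability. Hence $\CB$ outputs $\vec y$ with probability $1/\poly(n)-\negl(n)$. Since $k=\omega(\log n)$ makes $2^{-k}$ negligible, this is $\frac{1}{2^k}+\frac{1}{\poly(n)}$, as the statement requires.
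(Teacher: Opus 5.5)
Your argument is correct and shares the paper's skeleton. You run $\CA$ on the $\lsn$ instance and output $\hat{\vec y}$. A valid preimage with $\hat{\vec y}\neq\vec y$ would make $\vec A(\vec r+\hat{\vec r})+\vec B(\vec y+\hat{\vec y})=\vec e+\hat{\vec e}$ a nonzero vector of weight at most $4.02np<\delta n$.

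The difference is in the uniqueness step. The paper simply cites the quantum Gilbert--Varshamov bound: a random $\llbracket n,k\rrbracket$ stabilizer code has distance at least $\delta n$ with probability $1-\negl(n)$. You instead prove what is needed directly, by a union bound over the $2^n(2^k-1)$ coefficient vectors with $\vec y'\neq\vec 0$.

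Your route only needs the absence of low-weight vectors in $\operatorname{im}[\vec A\,|\,\vec B]\setminus\operatorname{im}\vec A$. This is weaker than a distance bound, and it makes the proof self-contained. You also handle the truncation in $\sample$, which the paper skips.

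The step you expected to be the main obstacle is immediate. For fixed $(\vec r',\vec y')$ with $\vec y'\neq\vec 0$, the vector $\vec A\vec r'+\vec B\vec y'$ is nonzero by full rank. Its law is invariant under every symplectic $\vec S$, because $(\vec S\vec A,\vec S\vec B)\sim(\vec A,\vec B)$. The symplectic group acts transitively on $\Z_2^{2n}\setminus\{\vec 0\}$, so the vector is exactly uniform on nonzero vectors.

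One correction to your interpretation: $\operatorname{im}\vec A$ is not the stabilizer group. It is $n$-dimensional and Lagrangian, hence its own symplectic complement, so $\operatorname{im}[\vec A\,|\,\vec B]$ could not sit inside its normalizer. The correct picture is:
\begin{itemize}
\item the code's stabilizer group is $S=\operatorname{im}\vec A\cap\operatorname{im}(\vec B)^\perp$, of dimension $n-k$;
\item $\operatorname{im}\vec A$ is $S$ together with the logical $Z$'s;
\item $\operatorname{im}[\vec A\,|\,\vec B]=S^\perp$.
\end{itemize}
Since $S\subseteq\operatorname{im}\vec A$, your ``nontrivial logical operator'' conclusion still holds. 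In any case the union bound never uses this picture.

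Finally, as with the paper's citation, the bound $2^{-n(1-R-H_2(\delta)-\delta\log 3)}$ is negligible only when the Gilbert--Varshamov slack $1-R-H_2(\delta)-\delta\log 3$ is $\omega(\log n/n)$, for example a constant.
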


\begin{proof}
Let $(\vec A, \vec B, \vec{Ar} + \vec{By} + \vec e)$ be the Search $\lsn$ instance given to $\CB$.
$\CB$ will set $I = (\vec A, \vec B)$ and $f_I(x) = \vec{Ar} + \vec{By} + \vec e$ before running $\CA$ on input $(I, f_I(x))$, yielding a guess \begin{align}
    (\hat{\vec r}, \hat{\vec y}, \hat{\vec e}) \in \Z_2^{n} \times \Z_2^{k} \times \CW_{n \d} .
\end{align}
By assumption, with probability at least $1/\poly(n)$, \begin{align}
    \vec{A} \hat{\vec r} + \vec{B} \hat{\vec y}  + \hat{\vec e}= \vec{Ar} + \vec{By} + \vec e .
\end{align}
For $(\delta, R)$ satisfying the quantum Gilbert-Varshamov bound, a random $n$-qubit stabilizer code with rate $R = k/n$ has distance at least $\delta n$ with probability $1 - \negl(n)$~\cite{khesin2025universal,poremba2024learning,nielsen2010quantum}.
Note that $|\hat{\vec e}|$ and $|\vec e|$ are both at most $2.01np$.
Since the distance of the corresponding quantum stabilizer code is at least $4.03np$ (more than twice the error weight) with probability $1 - \negl(n)$, no choice of $\hat{\vec e}$ can make the noisy code states the same for $\hat{\vec y} \neq \vec y$; that is, the logical state is uniquely recoverable.
Hence, with probability $(1 - \negl(n))(1 - \negl(n)) \frac{1}{\poly(n)} = \frac{1}{\poly(n)}$, $\hat{\vec y} = \vec y$.
Since we say that $\CB$ successfully solves Search $\lsn(k, n, p)$ if it gives $\vec y$ with probability at least $\frac{1}{2^k} + \frac{1}{\poly(n)}$ and here $k = \omega(\log n)$, the success probability required is precisely $\frac{1}{\poly(n)}$, as we have achieved.
\end{proof}

We next strengthen our security proof by showing that the $\owff$ candidate construction in \Cref{def:owff_from_lsn} can be broken if and only if there exists efficient solvers for both Search $\lsn(k, n, p)$ and Search $\symplpn(n, p)$, so long as $k, p$ satisfy very mild Gilbert-Varshamov-type conditions.
Search $\symplpn[k, n, p]$ is defined identically to Decision $\symplpn[k, n, p]$ in \Cref{def:sympLPN}, except that the sample is always of the form $(\vec A, \vec{Ax} + \vec e)$ and the task is to recover $\vec x$ with probability at least $1/2^k + 1/\poly(n)$.
We will here be concerned with $k = n$, so the success probability simply need be $1/\poly(n)$.

First, we prove that the two solvers are sufficient to break the $\owff$.
For most of this section, we will not be careful with declaring whether the algorithms in question are classical or quantum.
Rather, as all statements are of the form ``Algorithm $\CA$ implies the existence of Algorithm $\CB$'', we imply that $\CB$ is quantum if $\CA$ is, and otherwise $\CB$ is a standard probabilistic classical algorithm.

\begin{theorem}[Search $\symplpn$ and $\lsn$ solvers break the $\owff$] \label{claim:owff_break_sufficient}
    Let parameters $k, p$ satisfy $k = \omega(\log n)$ and $p = \omega(\frac{\log n}{n})$.
    Suppose that there exist solvers $\CB_1$ for Search $\symplpn(n, p)$ (running in time $T_1$) and $\CB_2$ for Search $\lsn(k, n, p)$ (running in time $T_2$), which both succeed with probability $\frac{1}{\poly(n)}$.
    Then there exists an algorithm $\CA$, running in time $T_1 + T_2 + \poly(n)$ which inverts the $\owff$ $\CF$ in \Cref{def:owff_from_lsn}.
    That is, \begin{align}
        \Pr_{\CA, I, x}[f_I(\CA(I, f_I(x))) = f_I(x)] = \frac{1}{\poly(n)} .
    \end{align}
\end{theorem}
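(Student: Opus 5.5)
The plan is to compose the two solvers sequentially: $\CB_2$ peels off the logical part $\vec B\vec y$, and $\CB_1$ then decodes the residual $\symplpn$ instance. Given $(I, \vec z) = ((\vec A, \vec B), \vec{Ar} + \vec{By} + \vec e)$, algorithm $\CA$ proceeds in four steps.

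\textbf{Step 1: recover $\vec y$.} Run $\CB_2$ on $([\vec A \,|\, \vec B], \vec z)$. This is exactly a Search $\lsn(k, n, p)$ instance with the correct distribution, since $\sample$ truncates $\vec e$ only with negligible probability. Because $k = \omega(\log n)$, the success probability $\frac{1}{2^k} + \frac{1}{\poly(n)}$ of $\CB_2$ is simply $\frac{1}{\poly(n)}$. Let $\hat{\vec y}$ denote its output.

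\textbf{Step 2: strip off $\vec B\hat{\vec y}$.} Compute $\vec z' = \vec z + \vec B\hat{\vec y}$. On the event $\hat{\vec y} = \vec y$, we have $\vec z' = \vec{Ar} + \vec e$.

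\textbf{Step 3: recover $\vec r$.} Run $\CB_1$ on $(\vec A, \vec z')$, obtaining $\hat{\vec r}$. Then set $\hat{\vec e} = \vec z' + \vec A\hat{\vec r}$ and output $(\hat{\vec r}, \hat{\vec y}, \hat{\vec e})$. The function $f_I$ is then inverted whenever $\hat{\vec e} \in \CW_{2.01np}$ and both guesses are correct, because in that case $\hat{\vec e} = \vec e$.

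\textbf{Main obstacle: correlation between the two successes.} The success events of $\CB_1$ and $\CB_2$ are correlated, and conditioned on $\CB_2$ succeeding, the pair $(\vec A, \vec{Ar} + \vec e)$ need not be distributed as a fresh Search $\symplpn(n,p)$ instance. I would avoid conditioning altogether and handle Step 3 by marginals.

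First, the marginal distribution of $(\vec A, \vec{Ar} + \vec e)$ is exactly a Search $\symplpn(n,p)$ instance, since the marginal of $\vec A$ in $\lsn$ is a uniformly random full-rank isotropic matrix by the symmetry used in \Cref{thm:lsn_to_symplpn}. Second, I would make the two stages independent by having $\CA$ re-randomize before calling $\CB_1$. Concretely, it samples a fresh isotropic $\vec B'$ compatible with $\vec A$ (i.e.\ $[\vec A \,|\, \vec B']$ full rank, sampled column by column as in the proof of \Cref{thm:lsn_to_symplpn}) together with a fresh $\vec y'$. The difficulty is that $\CB_1$'s success then depends only on $(\vec A, \vec r, \vec e)$, while $\CB_2$ sees those same variables.

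A cleaner route is to let $\CA$ output a random choice between two strategies, but both solvers are genuinely needed. So the fallback I would take is an averaging argument. Let $G$ be the set of triples $(\vec A, \vec r, \vec e)$ on which $\CB_1$ succeeds with probability at least $\epsilon_1/2$; this set has mass at least $\epsilon_1/2$. I would then argue that $\CB_2$'s success probability restricted to $G$ stays $\frac{1}{\poly(n)}$. This can be forced by running $\CB_2$ on a re-randomized instance in which $\vec r$ is shifted and $\vec A$'s basis is re-chosen, so that its success probability is averaged over the $\symplpn$ part.

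If this conditioning cannot be made rigorous, I would fall back on the stabilizer-code distance. For $p$ in the Gilbert–Varshamov regime, $\vec y$ is information-theoretically determined by $\vec z$, so correctness of $\hat{\vec y}$ can be checked a posteriori: a candidate $\hat{\vec y}$ is accepted exactly when the residual $\vec z'$ decodes via $\CB_1$ to an error of weight at most $2.01np$. This lets $\CA$ repeat both solvers $\poly(n)$ times with fresh re-randomizations and boost each stage's success independently. The overall probability is then at least $\frac{1}{\poly(n)}$, up to $\negl(n)$ losses from truncation and rank events.
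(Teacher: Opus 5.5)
Your Steps 1--3 are exactly the paper's algorithm $\CA$. The paper then finishes in one line using the very step you distrust. It says that with probability $1/\poly(n)$ one has $\hat{\vec y} = \vec y$, that ``conditioned on this event'' the input $(\vec A, \vec{Ar} + \vec e)$ to $\CB_1$ is a Search $\symplpn(n,p)$ instance, and it multiplies the two success probabilities. Your objection to this is sound. Write $\epsilon_i$ for the success probability of $\CB_i$. The guarantee on $\CB_1$ is an average over the unconditioned distribution of $(\vec A, \vec r, \vec e)$, while conditioning on $\CB_2$'s success can reweight that distribution arbitrarily on a set of mass $\epsilon_2$. Your correct marginal observation gives only a coupling bound. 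Let $E_1$ be the event that $\CB_1$, run with $\CA$'s coins on $(\vec A, \vec{Ar} + \vec e)$, returns $\vec r$; then $\Pr[E_1] = \epsilon_1$ up to the negligible truncation of $\vec e$. Let $E_2$ be the event $\hat{\vec y} = \vec y$. On $E_2$ the actual call to $\CB_1$ coincides with this one, so $\CA$ inverts with probability at least $\Pr[E_1 \cap E_2] \geq \epsilon_1 + \epsilon_2 - 1$. That bound is vacuous for $1/\poly(n)$ success probabilities.

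The gap in your proposal is that none of your three repairs improves on this bound. Every re-randomization available to $\CA$ moves code and error jointly and never re-samples $\vec e$ relative to $\operatorname{im}(\vec A)$. This covers shifting $\vec r$ or $\vec y$, re-choosing bases, a fresh $\vec B'$, and even the symplectic maps preserving $\CD_p^{\otimes n}$ (qubit permutations composed with local Cliffords). For example, the number of qubits on which $\vec e$ acts nontrivially is invariant under all of them. Nothing in the hypotheses prevents $\CB_1$ from succeeding only when this number is at least its median, while $\CB_2$ succeeds only when it is below. Then $E_1 \cap E_2 = \emptyset$, and your set $G$ misses $\CB_2$'s success region entirely. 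So the claim that $\CB_2$'s success ``restricted to $G$ stays $1/\poly(n)$'' can fail, and repetition with fresh coins cannot help, because the obstruction is a property of the instance, not of the coins. The verify-and-repeat fallback fails for the same reason. The only available test is whether $(\hat{\vec r}, \hat{\vec y}, \hat{\vec e})$ is a preimage, which you can check by recomputing $f_I$ without any Gilbert--Varshamov assumption (and this theorem has none anyway). That test passes only when both solvers succeed on the same instance, so $\hat{\vec y}$ can be neither certified nor boosted on its own. A rigorous version needs different hypotheses, not a different algorithm. If $\epsilon_1 + \epsilon_2 \geq 1 + 1/\poly(n)$ (e.g.\ $\CB_1$ succeeds with probability at least $1 - \epsilon_2/2$), the coupling bound above finishes the proof. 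Otherwise you need an explicit assumption that $E_1$ and $E_2$ are not anticorrelated, which is exactly what the paper's conditioning step tacitly uses.
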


\begin{proof}
We construct $\CA$ as follows.
As input, we receive $I = (\vec A, \vec B)$ and $f_I(x) = \vec{Ar} + \vec{By} + \vec e$, with notation given in \Cref{def:classical-search-LSN}. 
We run $\CB_2$ on $(I, f_I(x)) = (\vec A, \vec B, \vec{Ar} + \vec{By} + \vec e)$ and receive a guess $\hat{\vec y}$.
We then subtract, computing $\vec{Ar} + \vec{B} (\vec y - \hat{\vec y}) + \vec e$.
Next, we run $\CB_1$ on $(\vec A, \vec{Ar} + \vec{B} (\vec y - \hat{\vec y}) + \vec e)$ and receive a guess $\hat{\vec r}$.
Finally, we compute $\hat{\vec e} := \vec{A}(\vec r - \hat{\vec r}) + \vec{B} (\vec y - \hat{\vec y}) + \vec e$, and we output $(\hat{\vec r}, \hat{\vec y}, \hat{\vec e})$.
This construction involves efficient computation alongside a single call each to $\CB_1$ and $\CB_2$, and so the runtime is $T_1 + T_2 + \poly(n)$.

With probability at least $\frac{1}{\poly(n)}$, $\hat{\vec y} = \vec y$ by the assumed correctness guarantee on $\CB_2$.
Conditioned on this event, the input into $\CB_1$ is $(\vec A, \vec{Ar} + \vec e)$ which is precisely a Search $\symplpn(n, p)$ instance.
Hence, with probability at least $\frac{1}{\poly(n)}$, $\hat{\vec r} = \vec r$ and thus $\hat{\vec e} = \vec e$, by the assumed correctness guarantee on $\CB_1$.
In sum, therefore, $\CA$ correctly inverts $f_I$ with probability at least $\frac{1}{\poly(n)} \cdot \frac{1}{\poly(n)} = \frac{1}{\poly(n)}$.
\end{proof}

Before we prove the converse, we first prove a lemma which can be interpreted as a Gilbert-Varshamov type bound for $\symplpn$.
In what follows, we recall that $H_2(x) := -x \log x - (1-x) \log (1-x)$ is the binary entropy function.

\begin{lemma}[$\symplpn$ codes almost certainly have high distance] \label{lemma:GV_sympLPN}
Let $\vec A \in \Z_2^{2n \times n}$ be a random full-rank isotropic matrix.
For any constant $\delta \in (0, 1/2)$, the distance of the code $C := \operatorname{im}(\vec A)$ is at least $\delta n$ with probability $1 - \exp{-\W(n)}$.
\end{lemma}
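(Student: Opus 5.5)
The plan is a first-moment (union bound) argument over low-weight vectors, exploiting the transitivity of the symplectic group on nonzero vectors. I take the distance of $C=\operatorname{im}(\vec A)$ to be the minimum Hamming weight of a nonzero codeword in $\Z_2^{2n}$.

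\textbf{Step 1 (membership probability).} Since $\vec A$ is a uniformly random full-rank isotropic matrix, $C$ is a uniformly random Lagrangian ($n$-dimensional isotropic) subspace. The symplectic group acts transitively on nonzero vectors and on Lagrangian subspaces, and every vector is self-orthogonal over $\Z_2$. Hence for each fixed nonzero $\vec v$,
\begin{align}
\Pr[\vec v\in C]=\frac{2^n-1}{2^{2n}-1}\le 2^{-n},
\end{align}
by double counting pairs $(\vec v, C)$ with $\vec v \in C$.

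\textbf{Step 2 (union bound).} Writing $N_w$ for the number of vectors of Hamming weight $w$ in $\Z_2^{2n}$, we get
\begin{align}
\Pr[\operatorname{dist}(C)<\delta n]\le \sum_{w=1}^{\delta n-1} N_w\, 2^{-n}\le 2^{2nH_2(\delta/2)-n+o(n)} .
\end{align}
This is $\exp{-\W(n)}$ exactly when $2H_2(\delta/2)<1$.

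\textbf{Step 3 (main obstacle).} The difficulty is covering \emph{every} constant $\delta\in(0,1/2)$. Step 2 only closes when $2H_2(\delta/2)<1$, i.e.\ roughly $\delta<0.22$, and the count for symplectic (qubit) weight is worse still, giving roughly $\delta<0.19$.
\begin{itemize}
\item I would first try to sharpen the first moment, for instance by exploiting the isotropy constraint between codewords. However, Step 1 is exact, so the expected number of low-weight codewords, which is the quantity in Step 2, is already tight.
\item For $\delta$ above these thresholds, that expected number is exponentially large, which makes me suspect the statement as worded cannot hold in that range.
\item My plan is therefore to write the proof in full for the range where the exponent is negative, and then check whether the subsequent application (the converse $\owff$ theorem) only needs $\delta$ in that range, in which case the hypothesis should be weakened accordingly.
\end{itemize}
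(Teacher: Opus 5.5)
Your Steps 1--2 follow the paper's argument: a membership probability of about $2^{-n}$ for each fixed nonzero vector, then a union bound over low-weight vectors. The two proofs differ only in how they count the low-weight vectors, and on that point you are right and the paper is not. The paper bounds the number of nonzero $\vec z$ with $|\vec z|\le\delta n$ by $\sum_{s\le\delta n}\binom{n}{s}\le 2^{nH_2(\delta)}$, as if $\vec z$ were a length-$n$ string. This gives the exponent $-(1-H_2(\delta))n$, which is negative for every $\delta<1/2$. That undercount is the only reason the statement covers all of $(0,1/2)$. Since $\vec z\in\Z_2^{2n}$, the correct count is $\sum_{s\le\delta n}\binom{2n}{s}\approx 2^{2nH_2(\delta/2)}$ for Hamming weight, or $\sum_{s\le\delta n}\binom{n}{s}3^s\approx 2^{n(H_2(\delta)+\delta\log_2 3)}$ for qubit weight. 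These give exactly your thresholds, $\delta\approx 0.22$ and $\delta\approx 0.19$. So the missing range is a defect in the statement and in the paper's proof, not in your reasoning.

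Your suspicion that the statement fails for larger $\delta$ is correct. Note, though, that an exponentially large first moment does not prove this by itself; for $\delta$ just above the threshold you would need a second-moment argument. For $\delta$ near $1/2$ no randomness is needed at all. Any $n$-dimensional $C\subseteq\Z_2^{2n}$ with Hamming distance $d$ satisfies the sphere-packing bound $2^n\binom{2n}{t}\le 2^{2n}$ with $t=\lfloor (d-1)/2\rfloor$. This forces $H_2(t/(2n))\le\frac12+o(1)$, hence $t\lesssim 0.22n$ and $d\lesssim 0.44n$. For qubit weight, the quaternary bound $2^n\sum_{s\le t}\binom{n}{s}3^s\le 4^n$ gives $d\lesssim 0.38n$. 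So for every constant $\delta\in[0.45,\frac12)$ the lemma fails with probability $1$ for all large $n$, under either notion of weight.

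Your fallback plan works. In \Cref{claim:owff_break_necessary} the lemma is applied with $\delta=4.03p$, under the quantum Gilbert--Varshamov hypothesis $H_2(\delta)+\delta\log_2 3<1-R$ with $R=k/n$. (The factor $n$ in the displayed inequality there is a typo, as comparison with \Cref{claim:owff_security_LSN} shows.) This hypothesis already gives $H_2(\delta)+\delta\log_2 3<1$, which is exactly the range where your union bound closes for qubit weight. It then also holds for Hamming weight, since Hamming weight dominates qubit weight, so the Hamming distance is at least the qubit distance. The failure probability is at most $2^{-n(1-H_2(\delta)-\delta\log_2 3)}<2^{-k}=\negl(n)$ because $k=\omega(\log n)$, and that is all the theorem uses. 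The right repair is to state the lemma only for $\delta$ with $H_2(\delta)+\delta\log_2 3<1$ (or $2H_2(\delta/2)<1$ for Hamming weight), with the explicit failure bound above. The separate assumption $\frac12-\delta=\Omega(1)$ in that theorem is then redundant.
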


\begin{proof}
We proceed by two union bounds.
First, fixing nonzero $\vec x \in \Z_2^n$ and $\vec z \in \Z_2^{2n}$, \begin{align}
    \Pr_{\vec A}[\vec z = \vec{Ax}] = \frac{1}{2^{2n}} ,
\end{align}
since $\vec{Ax} \in \Z_2^{2n}$ is a uniformly random vector since marginally a single vector in the image of a uniformly random isotropic matrix is uniformly random.
Next, we union bound over $\vec x \in \Z_2^n$, so that \begin{align}
    \Pr_{\vec A}[\vec z \in C] \leq \frac{2^n - 1}{2^{2n}} .
\end{align}
Finally, we union bound over sparse $\vec z$.
Using the identity $\sum_{s=0}^{t} \binom{n}{s} \leq 2^{n H_2(t/n)}$ for $t < n/2$~\cite{richardson2008modern}, \begin{align}
    \Pr_{\vec A}[\exists \vec z \neq 0 : |\vec z| \leq \delta n, \vec z \in C] \leq 2^{n H_2(\d)} \frac{2^n - 1}{2^{2n}} \leq 2^{-n(1 - H_2(\d))} .
\end{align}
Note that the distance of $C$ is at least $\d n$ if and only if there are no nonzero vectors of weight $< \delta n$ in $C$.
Hence, if $\delta < 1/2$ is a constant, $H_2(\delta) < 1$, so $2^{-n(1 - H_2(\d))} = \exp{-\W(n)}$.
\end{proof}

\begin{theorem}[$\owff$ secured by Search $\symplpn$ and $\lsn$] \label{claim:owff_break_necessary}
    Let $k, p$ be parameters such that $k = \omega(\log n), p = \omega(\frac{\log n}{n})$.
    Define $\delta = 4.03p$.
    Assume that $\frac{1}{2} - \delta = \Omega(1)$ (i.e. $\delta$ is bounded away from $1/2$) and that $(R = k / n, \delta)$ satisfies the quantum Gilbert-Varshamov bound \begin{align}
        n H_2(\d) + \d \log 3 < 1 - R .
    \end{align}
    Suppose that there exists an algorithm $\CA$, running in time $T$, such that \begin{align}
        \Pr_{\CA, I, x}[f_I(\CA(I, f_I(x))) = f_I(x)] = \frac{1}{\poly(n)} .
    \end{align}
    Then there exists algorithms $\CB_1$ and $\CB_2$, running each in time $T + \poly(n)$, which respectively solve Search $\symplpn(n, p)$ and Search $\lsn(k, n, p)$ with success probability $1/\poly(n)$.
\end{theorem}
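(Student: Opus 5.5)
The plan is to obtain both solvers from the single inverter $\CA$ by embedding the given instance into an input of $f_I$ whose distribution matches $(I, f_I(x))$ for honestly sampled $x$. Then, wherever $\CA$ succeeds, a distance argument forces its preimage to contain the hidden secret.

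\textbf{The $\lsn$ solver.} For $\CB_2$, nothing new is needed: it is exactly \Cref{claim:owff_security_LSN}. The hypotheses match: $k=\omega(\log n)$, $p=\omega(\log n/n)$, $\delta=4.03p$, and $(R,\delta)$ satisfies the quantum Gilbert--Varshamov bound. That theorem gives $\CB_2$ with running time $T+\poly(n)$ and success probability $1/\poly(n)$. (The hypothesis is stated as $nH_2(\delta)+\delta\log 3<1-R$; since $n\ge1$, this implies the standard bound used there.)

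\textbf{The $\symplpn$ solver.} For $\CB_1$, on input a Search $\symplpn(n,p)$ instance $(\vec A,\vec A\vec x+\vec e)$:
\begin{enumerate}
\item Sample $\vec B\in\Z_2^{2n\times k}$ so that $[\vec A\,|\,\vec B]$ is distributed exactly as an $\lsn$ index. This is done column by column, choosing each new column uniformly among vectors outside the current span and symplectically orthogonal to the previous columns of $\vec B$, with the constraints exactly as in \Cref{def:classical-search-LSN}. Marginally $\vec A$ is a uniform full-rank isotropic matrix, and the conditional law of $\vec B$ given $\vec A$ is what we sample, so the joint distribution is correct. The key step here is that this conditional sampling is efficient, which it is: it is just linear algebra over $\Z_2$ computing symplectic complements.
\item Sample $\vec y\sim\Z_2^k$ and form $\vec z=\vec A\vec x+\vec B\vec y+\vec e$. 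The tuple $(I,\vec z)$ is now distributed as $(I,f_I(x))$ with $x=(\vec x,\vec y,\vec e)$, up to the negligible truncation event $|\vec e|>2.01np$.
\item Run $\CA$ on $(I,\vec z)$. With probability $1/\poly(n)$ it returns $(\hat{\vec r},\hat{\vec y},\hat{\vec e})$ with
\begin{align}
\vec A\hat{\vec r}+\vec B\hat{\vec y}+\hat{\vec e}=\vec A\vec x+\vec B\vec y+\vec e.
\end{align}
\item Output $\hat{\vec r}$.
\end{enumerate}

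\textbf{Correctness of $\CB_1$.} First, $\hat{\vec y}=\vec y$ except with negligible probability, by the same quantum-code distance argument as in \Cref{claim:owff_security_LSN}. The differences $\vec e+\hat{\vec e}$ have weight at most $4.02np<\delta n$, so distinct logicals cannot collide. Once $\hat{\vec y}=\vec y$, we get
\begin{align}
\vec A(\vec x-\hat{\vec r})=\vec e+\hat{\vec e}.
\end{align}
The right-hand side has Hamming weight at most $4.02np<\delta n$. Apply \Cref{lemma:GV_sympLPN} with the constant $\delta'=\delta$, which lies in $(0,1/2)$ and is bounded away from $1/2$. With probability $1-\exp{-\W(n)}$, the code $\operatorname{im}(\vec A)$ has no nonzero word of weight below $\delta n$. (The lemma's weight is over $2n$ coordinates, while $\delta$ is defined relative to $n$; the symplectic weight is at most the Hamming weight, so the comparison holds.) Since $\vec A$ is full rank, this gives $\hat{\vec r}=\vec x$. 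A union bound over the negligible failure events leaves success probability $1/\poly(n)$.

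\textbf{Main obstacle.} The delicate points are two. First, $\CB_1$ must produce the joint law of $(\vec A,\vec B)$ exactly, not merely approximately; this is the sampling step in item 1. Second, the distance statements must be applied to the right object. For the $\vec y$-uniqueness step we need the stabilizer distance (the quantum GV bound). For the $\vec r$-uniqueness step we need the classical Hamming distance of $\operatorname{im}(\vec A)$, which is supplied by \Cref{lemma:GV_sympLPN}. Since $\delta$ may depend on $n$, I would treat it as bounded above by a constant below $1/2$ when invoking that lemma.
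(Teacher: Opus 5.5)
Your proof is correct and follows the paper's route: $\CB_2$ is inherited from \Cref{claim:owff_security_LSN}, and for $\CB_1$ the quantum distance bound first forces $\hat{\vec y}=\vec y$, after which \Cref{lemma:GV_sympLPN} forces $\hat{\vec r}=\vec x$. Your explicit embedding step makes precise what the paper leaves implicit, since the paper's $\CB_1$ is described as running $\CA$ directly on an $\lsn$-type input rather than on the given Search $\symplpn$ instance; the step is to sample $\vec B$ column by column given $\vec A$ and add $\vec B\vec y$ for a fresh $\vec y$, and the sampling is exact because the $i$th column always has $2^{2n-i+1}-2^n$ admissible choices.
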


\begin{proof}
We constructed the Search $\lsn(k, n, p)$ solver $\CB_2$ in \Cref{claim:owff_security_LSN}, so in this proof we only construct $\CB_1$.
Previously in that claim, we showed that the quantum Gilbert-Varshamov bound implied that if the guess $(\hat{\vec r}, \hat{\vec y}, \hat{\vec e})$ satisfies \begin{align}
    \vec A \hat{\vec r} + \vec B \hat{\vec y} + \hat{\vec e} = \vec{Ar} + \vec{By} + \vec e ,
\end{align}
then with probability $1 - \negl(n)$, $\hat{\vec y} = \vec y$.
Assume that both such events hold, i.e. the inversion is successful and that $\hat{\vec y} = \vec y$; this occurs with probability $\frac{1}{\poly(n)} (1 -\negl(n)) = \frac{1}{\poly(n)}$.
Then we compute $\vec{Ar} + \vec e$ by subtracting $\vec{B} \hat{\vec y}$ from $\vec{Ar} + \vec{By} + \vec e$.
Note that $(\vec A, \vec{Ar} + \vec e)$ is precisely a Search $\symplpn(n, p)$ instance, and we are guaranteed by assumption of successful inversion that $\vec{A} \hat{\vec r} + \hat{\vec e} = \vec{Ar} + \vec e$.
By \Cref{lemma:GV_sympLPN}, with probability $1 - \negl(n)$ there is no choice of $\hat{\vec e} \in \CW_{2.01 np}$ and $\hat{\vec r} \neq \vec r$ such that $\vec{A} \hat{\vec r} + \hat{\vec e} = \vec{Ar} + \vec e$ (i.e. the error is uniquely correctable) since with probability $1 - \negl(n)$, $|\vec e|$ is below half the distance of the code.
Thus, $\hat{\vec r} = \vec r$, so $\CB_1$ correctly solves Search $\symplpn(n, p)$, assuming that the above events all hold. 
This occurs with probability at least $\frac{1}{\poly(n)} (1 - \negl(n)) = \frac{1}{\poly(n)}$.

In sum, both algorithms $\CB_1$ and $\CB_2$ simply run $\CA$ and receive a guess $(\hat{\vec r}, \hat{\vec y}, \hat{\vec e})$.
$\CB_1$ outputs $\hat{\vec r}$, while $\CB_2$ outputs $\hat{\vec y}$.
Therefore, both algorithms run in time $T + \poly(n)$.
\end{proof}

\begin{corollary}[$\owff$ break is equivalent to solvers for Search $\symplpn$ and $\lsn$]
    Let $k, p$ satisfy the assumptions given in \Cref{claim:owff_break_necessary}.
    Then the following two statements are equivalent.
    \begin{enumerate}
        \item[(1) ] There exists algorithms, both running in time $\poly(n)$, which solve Search $\symplpn(n, p)$ and Search $\lsn(k, n, p)$ with success probability $\frac{1}{\poly(n)}$.
        
        \item[(2) ] There exists an algorithm $\CA$, running in time $O(\poly(n))$, such that \begin{align}
            \Pr_{\CA, I, x}[f_I(\CA(I, f_I(x))) = f_I(x)] = \frac{1}{\poly(n)} .
        \end{align}
        Here, $f_I \in \CF$, where $\CF$ is defined in \Cref{def:owff_from_lsn}.
    \end{enumerate}
\end{corollary}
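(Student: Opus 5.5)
The corollary follows by chaining \Cref{claim:owff_break_sufficient} and \Cref{claim:owff_break_necessary}, after checking that the parameter assumptions of \Cref{claim:owff_break_necessary} imply those of \Cref{claim:owff_break_sufficient}.

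For (1) $\Rightarrow$ (2), we start from the two solvers of statement (1). Call the Search $\symplpn(n,p)$ solver $\CB_1$ and the Search $\lsn(k,n,p)$ solver $\CB_2$; each runs in time $\poly(n)$ and succeeds with probability $1/\poly(n)$. The only hypotheses of \Cref{claim:owff_break_sufficient} are $k=\omega(\log n)$ and $p=\omega(\frac{\log n}{n})$, and both are among the assumptions of \Cref{claim:owff_break_necessary}. That theorem therefore gives an inverter $\CA$ running in time $T_1+T_2+\poly(n)=\poly(n)$. Its success probability is a product of two inverse polynomials, so it is again $1/\poly(n)$. This is exactly statement (2).

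For (2) $\Rightarrow$ (1), we take the inverter $\CA$ of statement (2), which runs in time $T=\poly(n)$. Under the stated assumptions ($\delta=4.03p$ bounded away from $1/2$, and $(R,\delta)$ satisfying the quantum Gilbert--Varshamov bound), \Cref{claim:owff_break_necessary} produces algorithms $\CB_1$ and $\CB_2$. Each runs in time $T+\poly(n)=\poly(n)$ and solves Search $\symplpn(n,p)$ and Search $\lsn(k,n,p)$, respectively, with probability $1/\poly(n)$. This is statement (1).

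There is no real obstacle here; the only care needed is bookkeeping. First, the success thresholds must match the problem definitions. Since $k=\omega(\log n)$, the Search $\lsn$ threshold $2^{-k}+1/\poly(n)$ reduces to $1/\poly(n)$. Since the symplectic code has dimension $n$, the Search $\symplpn(n,p)$ threshold likewise reduces to $1/\poly(n)$. Second, the quantum-versus-classical convention stated before \Cref{claim:owff_break_sufficient} carries over: if the given algorithm is quantum, the constructed one may be quantum, and otherwise it is classical.
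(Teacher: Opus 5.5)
Your proposal is correct and takes the same route as the paper: \Cref{claim:owff_break_sufficient} gives (1)$\Rightarrow$(2) and \Cref{claim:owff_break_necessary} gives (2)$\Rightarrow$(1). The paper also cites \Cref{claim:owff_security_LSN}, but that result is already used inside \Cref{claim:owff_break_necessary} to build the $\lsn$ solver, and your check that the hypotheses of \Cref{claim:owff_break_necessary} contain those of \Cref{claim:owff_break_sufficient} is the bookkeeping the paper's one-line proof leaves implicit.
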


\begin{proof}
    Follows immediately from Claims \ref{claim:owff_security_LSN}, \ref{claim:owff_break_sufficient}, and \ref{claim:owff_break_necessary}.
\end{proof}

\section{Additional proofs}
\label{app:additional_proofs}

We here provide proofs which were deferred in the main text.

\begin{proof}[Proof of \Cref{lemma:randomizing_symp_hyerplanes}]
We first check directly that the above construction preserves the symplectic inner product, and then show that this Clifford matrix produces a uniformly random basis of a uniformly random symplectic subspace of dimension $n-1$.
In the edge case $\vec{C} = \vec I$, it is clearly symplectic.
Otherwise, the sampling procedure constructed $\vec{C}$ as $\vec{\Pi} \vec{C}_0$, so it suffices to show separately that $\vec C_0$ and $\vec \Pi$ are symplectic matrices.
First, for $\vec \Pi$, we need to show that $(\vec{\Pi} \vec e_i) \odot (\vec {\Pi} \vec e_j) = 0$, $(\vec{\Pi} \vec f_i) \odot (\vec {\Pi} \vec f_j) = 0$, and $(\vec{\Pi} \vec e_i) \odot (\vec {\Pi} \vec f_j) = \delta_{ij}$. 
Once these relations are shown, by bilinearity of the inner product $\vec{\Pi}$ must be symplectic.
However, since $\vec{\Pi}$ applies the \emph{same} permutation on each pair $(\vec e_i, \vec f_i)$, these relations are immediate. 

Meanwhile, we observe that $\vec{C}_0$ translates $\vec e_j$ and $\vec f_j$ by a multiple of $\vec e_1$ for any $j = 2, \dots, n$. 
Their orthogonality relations with each other are therefore preserved, because $\vec e_1$ is orthogonal to each $\vec e_j$, $\vec f_j$, and to itself. Moreover, for any $j=2, \dots, n$, $\vec{C}_0\vec{e}_j$ and $\vec{C}_0 \vec{f}_j$ do not include any $\vec{f}_1$, so they are also orthogonal to $\vec{C}_0\vec{e}_1 = \vec{e}_1$. 
Since $\vec C_0 \vec{e}_1 = \vec{e}_1$, $(\vec C_0 \vec{e}_1) \odot (\vec C_0 \vec{e}_1) = 0$ as desired. 
It only remains to check the orthogonality relations for $\vec C_0\vec f_1$:
\begin{align}
    \vec{C}_0\vec e_1 \odot \vec{C}_0\vec f_1 & = \vec e_1 \odot \vec{r}' =1\\
    \vec{C}_0\vec f_i \odot \vec{C}_0\vec f_1 & = (\vec f_i + (\vec r' \odot \vec f_i)\vec e_1) \odot \vec{r}'  = 0 \quad \text{for $2 \leq i\leq n$} ,\\
    \vec{C}_0\vec e_i \odot \vec{C}_0\vec f_1 & = (\vec e_i + (\vec r' \odot \vec e_i)\vec e_1) \odot \vec{r}' =\delta_{i1} \quad \text{for $2 \leq i\leq n$}
\end{align}
where we use in each equality that $\vec{e}_1 \odot \vec{r}' = 1$, since $r'_{n+1} = 1$.  
We conclude that $\vec C$ is symplectic, since both $\vec C_0$ and $\vec \Pi$ are. 

Next, we argue that $\CR_n$ maps a uniformly random basis of a random $(n-1)$-dimensional symplectic subspace orthogonal to $\vec f_{1}$ to (up to negligible total variation distance) a uniformly random basis of a random $(n-1)$-dimensional symplectic subspace.
For any $\vec{w} \neq 0$, let 
\begin{equation}
S_\vec{w} = \{(\vec{u}_1, \dots, \vec{u}_{n-1}) \;|\; (\vec{u}_i) \text{ basis for isotropic } V \subseteq \Z_2^{2n}, \vec{w} \in V^\perp \}. 
\end{equation}
Then, $\vec{C}$ maps $S_{\vec{f}_1}$ bijectively to $S_{\vec{r}}$, with inverse $\vec{C}^{-1}$ (in particular, all $S_\vec{w}$ have the same cardinality). 
Hence, since $(\vec{v}_1, \dots, \vec{v}_{n-1})$ is a uniformly random element of $S_{\vec{f}_1}$, $(\vec{w}_1, \dots, \vec{w}_{n-1})$ is a uniformly random element of $S_{\vec{r}}$. 
However, $\vec{r}$ has negligible total variation distance from a uniformly random nonzero vector---indeed, $\vec{r}$ is uniformly random over vectors such that $r'_{i+n} = 1$ for some $1 < i \leq n$, and a random nonzero vector $\vec{u}$ satisfies this condition with probability at least $1 - 2^{-n} = 1-\negl(n)$. 
Hence, $(\vec{w}_1, \dots, \vec{w}_{n-1})$ is statistically indistinguishable from a random element $(\vec{u}_1, \dots, \vec{u}_{n-1})$ of a uniformly randomly $S_{\vec{w}}$. 

But notice that any basis $(\vec{u}_1, \dots, \vec{u}_{n-1})$ of any isotropic $V$ is included in exactly $2^{n+1}-1$ sets $S_{\vec{w}}$---one for each nonzero $\vec{w} \in V^\perp$.
Each $S_{\vec{w}}$ has exactly the same cardinality, and every $(\vec{u}_1, \dots, \vec{u}_{n-1})$ is present in exactly $2^{n+1}-1$ of these sets.
It follows that a random element of a randomly selected $S_{\vec{w}}$ is identical in distribution to a uniformly random element of 
\begin{equation}
S = \{(\vec{u}_1, \dots, \vec{u}_{n-1}) \;|\; (\vec{u}_i) \text{ basis for isotropic } V \subseteq \Z_2^{2n}\}. 
\end{equation}
Every $(n-1)$-dimensional isotropic subspace has exactly the same number of bases, so we conclude that $(\vec{w}_1, \dots, \vec{w}_{n-1})$ has negligible total variation distance from a uniformly random basis of a uniformly random $(n-1)$-dimensional isotropic subspace $V$. 
\end{proof}

\begin{proof}[Proof of \Cref{lemma:noise_symmetrization}]
Sample $T \sim \Bin(n, \frac{4}{3} q)$, so that $\mathbb E[T] = \frac{4}{3} m$.
We use the Chernoff bound on the binomial distribution, i.e. for $X \sim \Bin(n, p)$, $\Pr[X \leq (1-\d)\mathbb{E}[X]] \leq \exp{\d^2 \mathbb{E}[X] / 2}$.
Thus, \begin{align}
    \Pr\left[T \leq m \right] = \Pr\left[ T \leq \left( 1 - \frac{1}{4} \right) \frac{4}{3} m \right] \leq \exp{- \frac{1}{32} \cdot \frac{4}{3} m} = \negl(n) 
\end{align}
because $m = \w(\log n)$.
Hence with probability $1 - \negl(n)$, $T > m$; we condition on this event henceforth.
Sample $\vec e'$ by choosing $T$ indices, including the $m$ indices in $M$ and $T-m$ arbitrary others (declare failure if $T < m$); on each index $j$, add independent $\CD_{3/4}$ noise to the pair $(e'_{j}, e'_{n+j})$.
Now, $\vec e + \vec e' \sim \vec e'$ since the noise in $\vec e'$ manifestly subsumes the noise in $\vec e$.
Next, the random permutation $\pi$ scrambles any index asymmetry.
That is, the distribution $\pi(\vec e + \vec e') \sim \pi(\vec e')$ is equivalent to that of the following sampling process for a vector $\vec e''$: for each index $j$ from $1$ to $n$, with probability $\frac{4}{3}q$, sample $(e''_{j}, e''_{n+j})$ from $\CD_{3/4}$ and with probability $1 - \frac{4}{3} q$, set the pair to $(0, 0)$.
Then by direct computation $\vec e'' \sim \CD_{q}^{\otimes n}$, and consequently $\pi(\vec e + \vec e') \sim \CD_{q}^{\otimes n}$ conditioned on an event which occurs with probability $1 - \negl(n)$.
\end{proof}

\begin{proof}[Proof of \Cref{lem:pke_correctness}]
Let $\mu \in \bit$ be any message bit. 
Suppose that $(\pk,\sk) \leftarrow \gen(1^n)$ and $\mathsf{ct} \leftarrow \mathsf{Enc}(\pk,\mu)$. 
We can parse the two keys as $\pk = (\vec A,\vec b = \vec A \vec x + \vec e)$ and $\mathsf{sk}=\vec x$, and we can parse the ciphertext as 
 $\mathsf{ct}=(\vec f \odot \vec A, \vec f \odot \vec b + \mu)$, for some $\vec e,\vec f \sim \CD_p^{\otimes n}$.
Then, the output of $\mathsf{Dec}(\sk,\mathsf{ct})$ on input $\mathsf{ct}=(\vec u,c)$ is
\begin{align}
    c + \vec u \cdot \vec x &=    \vec f \odot \vec b + \mu + (\vec f \odot \vec A)\cdot\vec x = \vec f \odot \vec A \vec x + \vec f \odot \vec e + \mu+ \vec f \odot \vec A \vec x \\
    & = \mu + \vec f \odot \vec e \pmod{2} \, ,
\end{align}
where we used that $(\vec f \odot \vec A) \cdot \vec x = \vec f \odot (\vec A \cdot \vec x)$.
Therefore, it suffices to show that for appropriate choices of $p \in (0,1)$, $\vec f \odot \vec e =0$ with high probability.

Recall that $\vec e, \vec f$ are two independent depolarizing errors, so we may express $\vec e = (\vec a, \vec b)$ where $(a_i, b_i)$ are i.i.d. (over $i$) distributed such that $(0, 0)$ occurs with probability $1 - p$ and the remaining 3 possibilities $(1, 0), (0, 1), (1, 1)$ occur each with probability $p/3$. 
Similarly, express $\vec f = (\vec a', \vec b')$.
Thus, \begin{align}
    \vec e \odot \vec f = \sum_{i=1}^n a_i b_i' + a_i' b_i .
\end{align}
Note that $a_i b_i' + a_i' b_i = 1$ occurs with probability $r := \frac{2}{3} p^2$, since there are six possible ways to satisfy $a_i b_i' + a_i' b_i = 1$, each occurring with probability $(p/3)^2 = p^2/9$.
Hence, $a_i b_i' + a_i' b_i \sim \Ber(r)$, and $\Pr[\vec e \odot \vec f = 0]$ is precisely the probability that a $\Bin(n, r)$ random variable is even.
The latter probability is well-known to be $\frac{1}{2} + \frac{1}{2} (1 - 2r)^n$, so \begin{align}
    \label{eq:sympLPN_correctness_probability}
    \Pr[\vec e \odot \vec f = 0] = \frac{1}{2} + \frac{1}{2} \left(1 - \frac{4}{3} p^2 \right)^n \geq \frac{1}{2} + \frac{1}{2} \exp{- \frac{n \frac{4}{3} p^2}{1 - \frac{4}{3} p^2}}
\end{align}
by the standard inequality $\ln(1-x) \geq - \frac{x}{1-x}$ for $x \in (0, 1)$. 
Thus, for any $\delta >0$, we can choose $p = \Theta\left(1/\sqrt{n} \right)$ such that
the scheme is $1-\delta$ correct.
\end{proof}

\section{Barriers to a reduction from $\symplpn$ to $\lpn$}
\label{app:reduction_barrier}

It is known that there exists a reduction from $\lpn[k, n, p]$ to $\symplpn[n, n, p]$, for $k = \lfloor pn/6 \rfloor$~\cite{khesin2025average}. 
This reduction is only meaningful for certain regimes of $p$.
$\lpn(k, n, p)$ is easy for any $p = O(k^{-1})$, as a $k \times k$ sub-block of the encoding matrix has substantial probability of experiencing no error at all, and thus being directly invertible. 
Meanwhile, when $p = \Omega(k^{-c})$ for $c < 1$, there is no known algorithm that runs in polynomial time in $k$. 
It follows that the reduction from $\lpn$ to $\symplpn$ gives a strong lower bound on the hardness of $\symplpn$ when $p = \Omega\left(n^{-c}\right)$ for $c < 1/2$. 
However, in the regime for our proposed \textsf{PKE} and \textsf{OT} schemes, $p = \Theta\left(n^{-1/2}\right)$, so the reduction becomes vacuous. 
It is therefore open as to whether $\symplpn$-based schemes inherit security from the hardness of $\lpn$. 

A more pressing concern one might raise with low-noise $\symplpn$-based proposals is, on the other hand, the possibility of a \emph{converse} reduction from $\symplpn$ to $\lpn$ with $p = \Theta(1/\sqrt{n})$.
If such a reduction exists, then any low-noise $\symplpn$-based scheme is no more secure than one based off of low-noise $\lpn$.
We here give show that there are significant barriers to any converse reduction, from $\symplpn[n, p]$ to \emph{any} parameter-regime of $\lpn$.
Thus, despite their similarities, low-noise $\lpn$ and $\symplpn$ currently stand as incomparable post-quantum cryptographic assumptions.
It is unknown as to how the security of the two cryptographic hardness assumptions rigorously compare in theory and in practice.

Given a sample of $\symplpn$, $(\vec A, \vec r)$, where $\vec r$ is either of the form $\vec A\vec x+\vec e$ or $\vec u$, a natural class of reductions would proceed by preparing
\begin{equation}
(\vec B\vec A, \vec B\vec r),
\end{equation}
where $\vec B$ is any random variable taking values in $\mathbb{Z}_2^{m \times 2n}$. 
If $\vec r = \vec u$, then $\vec B \vec r$ is uniformly random (certainly for $m \leq 2n$, depending on the distribution of $\vec B$ for $m > 2n$); if $\vec r = \vec A \vec x + \vec e$ then $\vec B \vec r = \vec B \vec A \vec x + \vec B \vec e$. 
If one could show that $\vec B \vec A$ were statistically close to a uniform distribution, while $\vec B \vec e$ remained a low-weight error, then this approach would constitute a reduction from $\symplpn$ to $\lpn$.
In order to reduce to an information-theoretically solvable $\lpn$ instance, we further require that $m \geq cn$ for $c > 1$. 
(If we allowed $m \leq n$, then one could simply set $\vec B$ to remove the last $(1+\delta)n$ rows of $\vec A$ for any $\delta>0$.
Then $\vec{BA} \in \Z_2^{n \times n}$ is indeed close to uniformly random, but produces an information-theoretically unsolvable instance of $\lpn$.)

It is not clear at all, a priori, that such a reduction would not exist. A random isotropic code $\vec A \in \mathbb{Z}_2^{2n \times n}$ has roughly $\frac{3}{2}n^2$ bits of entropy, since there are ${n \choose 2}$ symplectic orthogonality conditions between the pairs of columns. 
Hence, choosing $m = cn$ for $1<c < \frac{3}{2}$, it is a priori possible that even for some fixed $\vec B$, $\vec B \vec A$ could be statistically indistinguishable from a uniform distribution. 
However, we demonstrate that the random variable $\vec A$ cannot be randomized in this manner---in fact, $\vec{BA}$ for any fixed $\vec{B}$ is severely deficient in entropy. 
In what follows, let $H(X)$ denote the entropy of a random variable $X$.

\begin{theorem}\label{thm:no_randomized_symplpn}
Let $m \geq cn$ for $c > 1$. Then there exists a constant $d > 0$, such that for sufficiently large $n$ and any fixed choice of $\vec B \in \mathbb{Z}_2^{m \times 2n}$, $H(\vec B \vec A) \leq (1-d)mn$.
\end{theorem}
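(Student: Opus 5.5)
The plan is to bound $H(\vec B\vec A)$ column by column, using the chain rule together with the isotropy constraint. Write $\vec A = (\vec a_1 \,|\, \dots \,|\, \vec a_n)$ and let $r := \operatorname{rank}(\vec B) \le \min(m, 2n)$. By the chain rule, and since conditioning on more information cannot increase entropy,
\begin{align}
H(\vec B\vec A) = \sum_{i=1}^n H(\vec B\vec a_i \mid \vec B\vec a_1,\dots,\vec B\vec a_{i-1}) \le \sum_{i=1}^n H(\vec B\vec a_i \mid \vec a_1,\dots,\vec a_{i-1}).
\end{align}
Condition on $\vec a_1,\dots,\vec a_{i-1}$. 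Because $\vec A$ is full rank and isotropic, these vectors are linearly independent, so $\vec a_i$ lies in $W_i := \operatorname{span}(\vec a_1,\dots,\vec a_{i-1})^\perp$, which has dimension $2n-i+1$. Hence $\vec B\vec a_i$ lies in $\vec B(W_i)$, a subspace of dimension at most $\min(r, 2n-i+1)$. Bounding the conditional entropy by the logarithm of the support size then gives the deterministic bound
\begin{align}
H(\vec B\vec A) \le \sum_{i=1}^n \min(r,\, 2n-i+1).
\end{align}
Note that this holds for every fixed $\vec B$.

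The remaining work is arithmetic in three cases. Fix a small constant $\epsilon>0$ with $(1+\epsilon)/c < 1$.
\begin{itemize}
\item[] \emph{Case $m \ge 3n$.} Since $r \le 2n$, we get $H(\vec B\vec A) \le rn \le 2n^2 \le \tfrac{2}{3}mn$.
\item[] \emph{Case $r \le (1+\epsilon)n$.} Here $H(\vec B\vec A) \le rn \le (1+\epsilon)n^2 \le \tfrac{1+\epsilon}{c}\, mn$, using $m \ge cn$.
\item[] \emph{Case $r = n+s$ with $s > \epsilon n$ and $m < 3n$.} For $i > 2n+1-r = n+1-s$ the minimum equals $2n-i+1$ rather than $r$. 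Summing the shortfall $r-(2n-i+1)$ over these $s-1$ indices gives a total deficit of about $s^2/2 \ge \epsilon^2 n^2/2 - O(n)$. So $H(\vec B\vec A) \le rn - \epsilon^2 n^2/2 + O(n) \le mn\bigl(1 - \tfrac{\epsilon^2}{6}\bigr) + O(n)$, using $r \le m < 3n$.
\end{itemize}
Taking $d$ slightly below $\min\{\tfrac13,\, 1-\tfrac{1+\epsilon}{c},\, \tfrac{\epsilon^2}{6}\}$ absorbs the $O(n)$ term for sufficiently large $n$, since $mn \ge cn^2$.

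The main obstacle is the first step: justifying the passage to $H(\vec B\vec a_i \mid \vec a_{<i})$ and the support bound $\dim \vec B(W_i) \le \min(r, 2n-i+1)$. In particular, one must use that isotropy forces each new column into a subspace whose dimension shrinks by one at every step. This shrinking is exactly where the symplectic structure costs entropy: once $i$ is large, the available dimension $2n-i+1$ falls below $r$, and those late columns can no longer fill the $r$-dimensional range of $\vec B$. Everything after that step is routine bookkeeping, and it does not depend on the specific $\vec B$ beyond its rank.
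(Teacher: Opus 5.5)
Your proof is correct, and it takes a genuinely different and considerably more elementary route than the paper.

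The paper works with the image set $F_{\vec B}=\{\vec B\vec N : \vec N \text{ isotropic}\}$. It uses invariance of $|F_{\vec B}|$ under $\vec B\mapsto\vec T\vec B\vec S$ (with $\vec T$ invertible and $\vec S$ symplectic), together with a symplectic basis of $\ker\vec B$, to reduce $\vec B$ to a coordinate projector. It then shows that a uniformly random $\vec M\in\Z_2^{m\times n}$ lies in the image with probability at most $2^{-d'n^2}$, by comparing ranks of symmetrized products $\widetilde{\vec C}^\intercal\widetilde{\vec B}+\widetilde{\vec B}^\intercal\widetilde{\vec C}$. This is where the lemmas on radicals of random subspaces and on ranks of random symmetrized products are needed. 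The case $m\ge 3n$ is handled separately via $H(\vec A)\le 2n^2$.

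You replace all of this by the chain rule plus one structural fact: the $i$th column of a full-rank isotropic matrix lies in the $(2n-i+1)$-dimensional space $\operatorname{span}(\vec a_{<i})^\perp$. This buys several things:
\begin{itemize}
\item an explicit deficit, $mn-H(\vec B\vec A)\ge (m-r)n+\binom{r-n}{2}$ when $r>n$, which is essentially tight for $\vec B=\vec I$ since it recovers $H(\vec A)\approx\tfrac32 n^2+\tfrac n2$;
\item validity for every $\vec B$ regardless of its rank, whereas the paper's normal form tacitly identifies $m$ with $\operatorname{rank}\vec B$ (it sets $m=2n-2l-k$).
\end{itemize}

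Your bound holds for every distribution on full-rank isotropic matrices. So by putting uniform mass on one preimage of each attainable value of $\vec B\vec A$, it also yields a support bound of the paper's type. The only thing the paper's counting covers beyond yours is rank-deficient isotropic $\vec A$. That is irrelevant here, because the $\symplpn$ matrix is full rank and \Cref{thm:symplpn_error_blowup} only uses the entropy bound.
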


This is a strong bound because we require $\vec{BA}$ to be close in total variation distance to a uniformly random $m \times n$ matrix in the reduction to $\lpn[n, m, p']$.
To satisfy this closeness, $\vec{BA}$ must have entropy negligibly close to $mn$.
But by \Cref{thm:no_randomized_symplpn}, any fixed $\vec B$ is off in entropy by at least a constant factor, and thus $\vec B$ itself must be chosen from a distribution random enough to supplement $\W(n^2)$ bits of entropy.
Intuitively, when $\vec B$ has that much entropy, it cannot possibly be very sparse, and therefore it should be forced to blow up the error $\vec{Be}$ in the transformed codeword $(\vec{BA})\vec x + \vec{Be}$, beyond even the information-theoretic decoding limit perhaps.
However, placing this intuition on rigorous grounds to prove the implication requires additional technical subtlety.
Using the entropy deficiency with some additional arguments, we will establish that for any random variable $\vec B$ for which $\vec{BA}$ is statistically indistinguishable from uniformly random, $\vec {Be}$ is far from a $\Ber(p)^{\otimes n}$ for any $p \leq \frac{1}{2} - \frac{1}{\poly(n)}$. 
That is, if $\vec{B}$ can scramble the code enough to match a $\lpn$ code instance, then it also scrambles the error past what is even information theoretically decodable.
Therefore, there can be no reduction of this form from $\symplpn$ to \emph{any} $\lpn$ instance.
This result holds for $\symplpn$ with error rate $p = \w(1/n)$.
If $p = O(1/n)$, then there is a polynomial-time brute-force error enumeration algorithm to solve $\symplpn[n, n, p]$, so this ultra-low-noise regime is of no cryptographic interest.

\begin{theorem}\label{thm:symplpn_error_blowup}
Let $m=\poly(n)$ with $m > cn$ for some $c > 1$, and let $p = \w(1/n)$.
Suppose that $\vec B \in \Z_2^{m \times 2n}$ is a random variable such that $\vec{BA} \in \Z_2^{m \times n}$ is statistically indistinguishable from uniformly random.
Fix any $\delta>0$ and let $r := n/m$ be the rate of the transformed code $\operatorname{im} (\vec{BA})$. 
Then $|\vec{Be}|$, the weight of the distorted error, satisfies
\begin{equation}
    \Pr_{\vec B} \left[ \underset{\vec{e} \sim \CD_p^{\otimes n}}{\mathbb{E}} [|\vec{Be}|] \geq \left(\frac{1 - r - \delta}{2}\right)m \right] \geq 1 - \negl(n) .
\end{equation}
\end{theorem}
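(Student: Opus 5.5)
We use \Cref{thm:no_randomized_symplpn} to show that any admissible $\vec B$ must, with high probability, have many dense rows. Dense rows then make $\vec{Be}$ heavy.

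First, compute the expected weight of a single row's contribution. For a fixed row $\vec b \in \Z_2^{2n}$, write $\vec b = (\vec b^{(1)}, \vec b^{(2)})$ and let $w(\vec b)$ be the number of qubit positions $j$ with $(b_j, b_{n+j}) \neq (0,0)$. For each such $j$, the bit $b_j e_j + b_{n+j} e_{n+j}$ equals $1$ with probability exactly $2p/3$: two of the three nonidentity Paulis anticommute with the fixed nonzero pair. These contributions are independent across $j$, so
\begin{align}
\Pr[\vec b \cdot \vec e = 1] = \tfrac12\left(1 - (1 - \tfrac43 p)^{w(\vec b)}\right).
\end{align}
Hence $\mathbb E_{\vec e}|\vec{Be}| = \sum_i \tfrac12(1-(1-\tfrac43p)^{w(\vec b_i)})$. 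Since $p = \omega(1/n)$, any row with $w(\vec b_i) \geq \epsilon n$ (constant $\epsilon>0$) contributes $\tfrac12 - \negl(n)$. The theorem therefore reduces to showing that, with probability $1-\negl(n)$ over $\vec B$, at most $(r+\delta)m$ rows have support below $\epsilon n$ for a suitable small $\epsilon = \epsilon(\delta)$.

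Second, argue by entropy. Suppose that with non-negligible probability more than $(r+\delta)m$ rows are $\epsilon$-sparse. Condition on that event and, by averaging, on a fixed set $I$ of such row positions; this costs only a polynomially bounded factor, and by the indistinguishability assumption the rows of $\vec{BA}$ indexed by $I$ must still be close to uniform on this event. The aim is to bound the entropy of $\vec{B}_I\vec A$ when every row of $\vec B_I$ is supported on at most $\epsilon n$ qubit pairs. A sparse row only sees $\vec A$ through $O(\epsilon n)$ coordinate pairs of each column, so a collection of $|I| > (r+\delta)m \geq n + \delta m$ such rows has image depending on at most $2n$ rows of $\vec A$. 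The point is that more than $n$ rows forces the isotropy constraints to bite. Here a refined form of \Cref{thm:no_randomized_symplpn} is needed: for a fixed $\vec B'$ with $m' \geq (1+\delta')n$ rows, $H(\vec B'\vec A) \leq (1-d)m'n$. The extra randomness in sparse $\vec B_I$ can add at most about $|I|\,\epsilon n \log(2n)$ bits. That is too much as stated, so one must instead enumerate the support pattern, which costs $|I| \cdot \log\binom{n}{\epsilon n} \approx |I| H_2(\epsilon) n$ bits. One then applies the chain rule $H(\vec B_I\vec A) \leq H(\text{supports}) + \mathbb E_{\text{supp}} H(\vec B_I \vec A \mid \text{supp})$. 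Given the support, the rows of $\vec B_I\vec A$ lie in the row space of the restricted $\vec A$, which is determined by a matrix with entropy deficient by $dmn$. Choosing $\epsilon$ with $H_2(\epsilon) < d$ gives $H(\vec B_I\vec A) \leq (1-d+H_2(\epsilon))|I| n < |I| n - \Omega(n^2)$. This contradicts closeness to uniform via the standard fact that statistical closeness to uniform on $|I|n$ bits forces entropy $|I|n - o(n^2)$.

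Finally, combine the two steps. With probability $1-\negl(n)$, at least $(1-r-\delta)m$ rows are $\epsilon$-dense, each contributing $\geq \tfrac12 - \negl(n)$ to the expectation. Absorbing the small loss into $\delta$ yields $\mathbb E_{\vec e}|\vec{Be}| \geq \frac{1-r-\delta}{2}m$. The main obstacle is the second step. Making the entropy deficiency of \Cref{thm:no_randomized_symplpn} survive conditioning on a random sparse support pattern needs a version of that theorem uniform over all fixed $\vec B'$ with slightly more than $n$ rows. It also needs the deficiency constant $d$ to beat $H_2(\epsilon)$. I would first try to extract this from the proof of \Cref{thm:no_randomized_symplpn}, where the deficiency comes from the $\binom{n}{2}$ isotropy constraints on the $\geq n$-dimensional projected image.
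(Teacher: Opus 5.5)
Your architecture is the paper's. It has three ingredients: the per-row parity formula (cf.\ \Cref{lem:depolarizing_weight}); the fact that the bad event forces more than about $(r+\delta)m = n+\delta m$ rows of small support, which the paper gets by sorting rows by weight and cutting at $k=\lceil m(1-r-\delta/2)\rceil$; and an entropy bound on the sparse block via \Cref{thm:no_randomized_symplpn}. However, the conditioning in your second step fails as written. First, fixing $I$ does not cost a polynomial factor: there are $2^{\Theta(m)}$ candidate sets, so every $I_0$ can have exponentially small probability. Second, and more seriously, statistical closeness of $\vec{BA}$ to uniform is not inherited by its law conditioned on an event $E$ of probability $q=1/\poly(n)$. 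Since $E$ is a function of $\vec B$, it can be correlated with $\vec{BA}$ and push the conditional law a constant distance from uniform. Only entropy survives conditioning, and this transfer is the step you are missing. Write the law of $\vec{BA}$ as $h=qf+(1-q)g$, with $f$ the law conditioned on $E$. The paper then uses $H(h)\le qH(f)+(1-q)mn+H_2(q)$, so an $\Omega(mn)$ deficiency in $H(f)$ gives $H(h)\le mn-\Omega(qmn)$. This contradicts $H(h)\ge mn-\negl(n)$, which follows from continuity of entropy under negligible TV distance. Inside this entropy bookkeeping, choosing $I$ costs only $H(I)\le m$ bits (the paper pays $m\log m$ for its sorting permutation), which is $o(mn)$. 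You should therefore compare against an $\Omega(mn)$ deficiency rather than $\Omega(n^2)$, since $m$ may exceed $n^2$. You do have such a deficiency, because $|I|\ge n+\delta m$.

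Your chain rule also conditions on too little. Given only the qubit-support pattern, $\vec B_I$ is still random: there are $3^{w}$ Pauli patterns on a support of size $w$. But \Cref{thm:no_randomized_symplpn} concerns a fixed matrix, and your remark about the row space of a restricted $\vec A$ does not produce a $(1-d)|I|n$ bound. Condition on $\vec B_I$ itself instead. Since $\vec A$ is independent of $\vec B$ and $E$ depends only on $\vec B$, you get $H(\vec B_I\vec A\mid E,I)\le H(\vec B_I\mid E,I)+(1-d)|I|n$, and $H(\vec B_I\mid E,I)\le |I|\,n\,(H_2(\epsilon)+\epsilon\log 3)$.

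Finally, your ``main obstacle'' is not one. \Cref{thm:no_randomized_symplpn} already gives a single constant $d=d(c)$ valid for every fixed matrix with at least $cn$ rows, and here $|I|\ge n+\delta m\ge(1+\delta)n$. So $d=d(1+\delta)$ is fixed first, and $\epsilon$ is then chosen with $H_2(\epsilon)+\epsilon\log 3<d/2$, exactly as the paper chooses $d_0$ after $d$. This order of quantifiers is legitimate because the dense-row bound holds for every constant $\epsilon$. Note, though, that each dense row contributes $\frac12-o(1)$, not $\frac12-\negl(n)$: $p=\omega(1/n)$ only yields $(1-\frac{4}{3}p)^{\epsilon n}=o(1)$. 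That loss is absorbed by running the argument with $\delta/2$ in place of $\delta$.
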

In other words, this theorem states that if $\vec{B}$ can randomize the code distribution well, $\vec{Be}$ has weight lower bounded by $(\frac{1-r-\delta}{2})m$ for any $\delta > 0$. 
For $m = O(n)$, we can interpret this as a strong barrier on decoding. 
Indeed, $\frac{1-r}{2} > H_2^{-1}(1-r)$ for any $r \in (0, 1)$, where $H_2(x) := -x \log x - (1-x)\log(1-x)$ is the binary entropy. 
But by Shannon's noisy coding converse theorem, any error probability $p \in (0, 1)$ satisfying $H_2(p) > 1-r$, is with probability exponentially close to 1 not even information-theoretically decodable.
Consequently, the above theorem shows that $\vec{Be}$ already has weight large enough that further manipulating its noise distribution can only yield Bernoulli noise with probability $p \geq \frac{1-r}{2} > H_2^{-1}(1-r)$, at which point the code is no longer decodable. 

When $m = \omega(n)$, the result shows that the output error has weight larger than $\frac{1}{2} - \delta$ for any constant $\delta$, since $r = o(1)$. This bound is not sufficient to fully rule out decodability. 
We believe that it should be possible to do so for any $m = \poly(n)$---however, the primary case of interest for the result was to rule out a reduction where $m = (1+\epsilon)n$, i.e. when the number of rows decreased rather than increasing. 
It seems unlikely that reductions that resort to adding many more rows will do a better job of randomizing the code instance without amplifying the error. 

In order to prove these theorems, we will begin by introducing new notions in symplectic linear algebra and establishing basic lemmas that will be necessary for the proofs. 
First, we define the \emph{radical} of a vector space $V \subseteq \Z_2^{2n}$. 
\begin{definition}
    For a vector space $V \subseteq \Z_2^{2n}$, define the \emph{radical} $\operatorname{rad}(V) := V \cap V^\perp$, i.e. it is the set of all vectors $\vec v \in V$ such that $\vec v \odot \vec w = 0$ for any $\vec w \in V$.  
\end{definition}
For example, $\operatorname{rad}(\Z_2^{2n}) = \{\vec 0\}$, and $\operatorname{rad}(\operatorname{span}(\set{\vec v})) = \operatorname{span}(\set{\vec v})$, since $\vec v$ is always symplectically orthogonal to itself.
% As mentioned, in $\Z_2^{2n}$, $\text{dim } V + \text{dim } V^{\perp} = 2n$. 
% It follows that for any $\vec w \neq \vec 0$, $\text{dim } \text{span}(\vec w)^\perp  = 2n - 1$, and in particular $\vec w$ is \emph{not} orthogonal to some nonzero $\vec v \in \Z_2^{2n}$. 

We will now prove an upper bound on the dimension of $\text{rad}(V)$ for a uniformly random $V \subseteq \Z_2^{2n}$ of a given dimension.
In order to do so, we first establish that a random vector space $V \subseteq \Z_2^{2n}$ is very unlikely to be isotropic. 
\begin{lemma}[Random subspaces are not isotropic]
\label{lem:isotropic_count}
Let $\epsilon > 0$ be a constant. 
If $V \subseteq \Z_2^{2n}$ is a random subspace of dimension $m \geq \epsilon n$, then there exists a constant $\delta>0$ so that \begin{equation}
\Pr[V \subseteq V^\perp] = 2^{-\delta n^2}.
\end{equation}
\end{lemma}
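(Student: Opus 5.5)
The plan is to bound the probability by counting: compare the number of isotropic $m$-dimensional subspaces of $\Z_2^{2n}$ with the number of all $m$-dimensional subspaces. (The statement should be read as an upper bound, $\Pr \le 2^{-\delta n^2}$; in particular the probability is $0$ when $m>n$, since no isotropic subspace has dimension above $n$. So assume $\epsilon n \le m \le n$.) Equivalently, we count ordered bases, because both kinds of subspace have the same number of ordered bases, namely $\prod_{i=0}^{m-1}(2^m-2^i)$. The probability therefore equals the ratio of the number of linearly independent isotropic $m$-tuples to the number of all linearly independent $m$-tuples.

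To count independent isotropic tuples, build them column by column, as in the construction of $\CA$ in \Cref{lem:uniform_public_key}. Given independent isotropic $\vec v_1,\dots,\vec v_{i-1}$, the next vector must lie in $\operatorname{span}(\vec v_1,\dots,\vec v_{i-1})^\perp$, which has dimension $2n-(i-1)$, and must avoid the span itself. This gives exactly $2^{2n-i+1}-2^{i-1}$ choices. For comparison, the number of choices in the unconstrained count is $2^{2n}-2^{i-1}$.

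The ratio is then
\begin{align}
\prod_{i=1}^{m}\frac{2^{2n-i+1}-2^{i-1}}{2^{2n}-2^{i-1}} \le \prod_{i=1}^m \frac{2^{2n-i+1}}{2^{2n}-2^{n}} \le \prod_{i=1}^m 2^{-(i-1)}\cdot (1-2^{-n})^{-m} = 2^{-\binom{m}{2}}\,(1+o(1)).
\end{align}
Since $m\ge \epsilon n$, we have $\binom{m}{2} \ge \epsilon^2 n^2/3$ for large $n$, so the probability is at most $2^{-\delta n^2}$ with, say, $\delta=\epsilon^2/4$.

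The only points needing care are the dimension formula $\dim S^\perp = 2n-\dim S$ (already noted in the preliminaries) and the observation that the per-step factor $(1-2^{-n})^{-1}$ accumulates to only $1+o(1)$ over $m\le n$ steps. I do not expect a real obstacle: the main subtlety is justifying that "random subspace" means uniform over subspaces, which the ordered-basis counting handles, since each subspace has the same number of ordered bases.
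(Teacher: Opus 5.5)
Your proposal is correct and is essentially the paper's argument: the paper also builds the subspace from uniformly random linearly independent columns. It bounds the chance that the $(k+1)$st column lands in the symplectic dual of the first $k$ by $\frac{2^{2n-k}-2^k}{2^{2n}-2^k}\le 2^{-(k-1)}$, and multiplies these factors to get $2^{-(m-1)(m-2)/2}$. Your per-step bound $2^{-(i-1)}(1-2^{-n})^{-1}$ is slightly sharper. Your remarks that the statement is really an upper bound, that it is vacuous for $m>n$, and that it holds only for sufficiently large $n$ make explicit points the paper leaves implicit.
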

\begin{proof}
Instead of $V$, we consider a random matrix $\vec{M}$ formed by the following random process: new uniformly random (linearly independent) columns are added to $\vec{M}$ until it has dimension $m$. 
The span of such a random matrix is a uniformly random subspace, since it is obtained by repeatedly adding new random vectors to a given subspace. 
However, the chance that the result is isotropic is at most $2^{-\delta_0 m^2}$ for some $\delta_0>0$. 
Indeed, say that $k$ columns have been added to $\vec{M}$. Then, the chance that the subsequent column lies in the symplectic dual of the previous ones is $\frac{2^{2n - k} - 2^k}{2^{2n} - 2^k}$, because the current column span has dimension $k$ (none of the vectors in this span can be added) and the symplectic dual has dimension $2n - k$. 
Since $m \leq n$ for any isotropic vector space, the value of $k$ must satisfy $k < n$, and therefore \begin{align}
    \frac{2^{2n - k} - 2^k}{2^{2n} - 2^k} \leq \frac{2^{2n - k}}{2^{2n} - 2^k} \leq 2 \frac{2^{2n - k}}{2^{2n}} = \frac{1}{2^{k-1}} .
\end{align}
Taking the product of these values from $k = 1$ to $k = m-1$ (the probability is exactly $1$ for $k = 0$) yields $2^{-\frac{(m-1)(m-2)}{2}}$. 
In particular, since $m \geq  \epsilon n$, there exists some $\delta$ for which $\delta n^2 \leq \frac{(m-1)(m-2)}{2}$ for any $n \geq 1$, and therefore $2^{-\delta n^2} \geq 2^{-\frac{(m-1)(m-2)}{2}}$. 
\end{proof}
We also record a standard result on the number of subspaces $V$ satisfying $U \subseteq V \subseteq W$ of fixed dimension.

\begin{lemma}[Number of sandwiched subspaces]
\label{lem:subspace_count}
There exists constants $C_1, C_2>0$ such that for any vector spaces $U \subseteq W$ over $\Z_2$ of respective dimensions $k$ and $m$, the number $N$ of vector spaces $V$ of dimension $l$ for which $U \subseteq V \subseteq W$ satisfies 
\begin{equation}
C_12^{(l - k)(m-l)} \leq N \leq C_2 2^{(l-k)(m-l)}. 
\end{equation}
\end{lemma}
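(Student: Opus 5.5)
The plan is to pass to the quotient $W/U$, which reduces the count to the number of $(l-k)$-dimensional subspaces of an $(m-k)$-dimensional space over $\Z_2$. The map $V \mapsto V/U$ is a bijection between subspaces $V$ with $U \subseteq V \subseteq W$ and $\dim V = l$, and subspaces of $W/U \cong \Z_2^{m-k}$ of dimension $l-k$; this is the correspondence theorem for vector spaces. Hence $N = \binom{m-k}{l-k}_2$, the Gaussian binomial coefficient at $q=2$. Write $a := l-k$ and $b := m-l$, so that $N = \binom{a+b}{a}_2$ and the target exponent is $ab$.

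Next I would use the standard product formula
\begin{align}
    \binom{a+b}{a}_2 = \prod_{i=1}^{a} \frac{2^{b+i}-1}{2^{i}-1}.
\end{align}
This comes from counting ordered bases of an $a$-dimensional subspace, $\prod_{i=0}^{a-1}(2^{a+b}-2^i)$, and dividing by the number of ordered bases of a fixed such subspace, $\prod_{i=0}^{a-1}(2^{a}-2^i)$. I would factor each term as
\begin{align}
    \frac{2^{b+i}-1}{2^i-1} = 2^{b}\cdot\frac{1-2^{-(b+i)}}{1-2^{-i}}.
\end{align}
The product of the prefactors $2^b$ gives exactly $2^{ab}$, so it remains to bound the product of the correction ratios above and below by absolute constants.

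For the upper bound, each numerator $1-2^{-(b+i)}$ is at most $1$. The denominators satisfy $\prod_{i\ge1}(1-2^{-i}) \ge \prod_{i \geq 1}(1-2^{-i}) =: c_\infty \approx 0.2888 > 0$, a convergent infinite product since $\sum 2^{-i}<\infty$. This gives $N \le c_\infty^{-1} 2^{ab}$, so I would take $C_2 = 1/c_\infty$. For the lower bound, the denominators are at most $1$, and each numerator satisfies $1-2^{-(b+i)} \ge 1-2^{-i}$. The product of the numerators is therefore at least $c_\infty$, giving $N \ge c_\infty 2^{ab}$ with $C_1 = c_\infty$.

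Edge cases ($a=0$ gives $N=1=2^0$, and $l<k$ or $l>m$ gives $N=0$ and is excluded) are trivial. The only step needing care is confirming that the infinite product $c_\infty$ is positive and independent of all parameters, which follows from $\log(1-x)\ge -2x$ for $x\le 1/2$. I do not expect any real obstacle.
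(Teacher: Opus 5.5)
Your proof is correct and complete. The paper gives no proof of this lemma; it only records it as a standard fact. Your argument is the standard justification: use the correspondence theorem to reduce to $(l-k)$-dimensional subspaces of $W/U \cong \Z_2^{m-k}$, so that $N=\binom{m-k}{l-k}_2$, then bound the Gaussian binomial through the product formula and the positive constant $c_\infty=\prod_{i\ge 1}(1-2^{-i})$.

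Three small remarks:
\begin{enumerate}
\item In the upper-bound step, the left-hand side of your displayed product inequality should be the finite product $\prod_{i=1}^{a}(1-2^{-i})$. As written it compares the infinite product with itself.
\item Your lower bound is looser than necessary. Since $b\ge 0$, each factor satisfies $\frac{1-2^{-(b+i)}}{1-2^{-i}}\ge 1$, so in fact $N\ge 2^{ab}$ and one may take $C_1=1$.
\item You are right that the statement implicitly needs $k\le l\le m$. Otherwise $N=0$ and no positive $C_1$ works. The paper's uses of the lemma all satisfy this condition.
\end{enumerate}
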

Using the previous two lemmas, we may bound the size of $\operatorname{rad}(V)$ for a random subspace $V$ of fixed dimension.
Recall that if $W \subseteq \Z_2^{2n}$ has dimension $t$, then $\dim(W^\perp) = 2n -t$.

\begin{lemma}[Dimension of a random radical]
\label{lem:radical_count}
Let $\epsilon > 0$ and $\a > 0$ be constants. If $V \subseteq \Z_2^{2n}$ is a random subspace of dimension $m \geq \a n$, then there exists a constant $\delta > 0$ such that 
\begin{equation}
\Pr[\operatorname{dim}(\operatorname{rad}(V)) \geq \epsilon n] = 2^{-\delta n^2}.
\end{equation}
\end{lemma}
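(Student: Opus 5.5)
We will prove the bound by a union bound over possible radicals: if $\dim\operatorname{rad}(V)\ge \epsilon n$, then $V$ contains an isotropic subspace $U$ of dimension exactly $k:=\lceil \epsilon n\rceil$ with $U \subseteq V \subseteq U^\perp$. Indeed, any $k$-dimensional subspace $U$ of $\operatorname{rad}(V)$ is isotropic (every vector in the radical is orthogonal to all of $V$, hence to the rest of $U$), and $V\subseteq U^\perp$ since $U\subseteq V^\perp$. We may assume $m<2n-k$, since otherwise no $V$ of dimension $m$ fits inside a $(2n-k)$-dimensional space $U^\perp$ and the probability is $0$. Then
\begin{align}
\Pr[\dim\operatorname{rad}(V)\ge \epsilon n] \le \sum_{U \text{ isotropic},\ \dim U = k} \Pr[U\subseteq V\subseteq U^\perp].
\end{align}

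Next we bound both factors using \Cref{lem:subspace_count}. For a fixed $U$, the number of $m$-dimensional $V$ with $U\subseteq V\subseteq U^\perp$ is at most $C_2 2^{(m-k)(2n-k-m)}$. The total number of $m$-dimensional subspaces of $\Z_2^{2n}$ (take the lemma with $U=\{0\}$, $W=\Z_2^{2n}$) is at least $C_1 2^{m(2n-m)}$. Hence
\begin{align}
\Pr[U\subseteq V\subseteq U^\perp] \le \frac{C_2}{C_1} 2^{(m-k)(2n-k-m) - m(2n-m)} = \frac{C_2}{C_1} 2^{-k(2n-k)}.
\end{align}
The exponent simplifies exactly because $(m-k)(2n-k-m) - m(2n-m) = -k(2n-m) - k(m-k) = -k(2n-k)$. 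For the number of isotropic $U$ of dimension $k$, we use \Cref{lem:isotropic_count}. The number of all $k$-dimensional subspaces is at most $C_2 2^{k(2n-k)}$, and a $\ge 1-2^{-\delta_0 n^2}$ fraction of them are non-isotropic, with $\delta_0$ depending on $\epsilon$. So the count of isotropic $U$ is at most $C_2 2^{k(2n-k)-\delta_0 n^2}$.

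Multiplying the two bounds, the $2^{\pm k(2n-k)}$ factors cancel and we get
\begin{align}
\Pr[\dim\operatorname{rad}(V)\ge \epsilon n] \le \frac{C_2^2}{C_1} 2^{-\delta_0 n^2} \le 2^{-\delta n^2}
\end{align}
for any $\delta<\delta_0$ and large $n$; small $n$ are absorbed by shrinking $\delta$. The lemma's "$=$" is read as an upper bound, as in \Cref{lem:isotropic_count}.

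The main obstacle is making the cancellation exact. The naive count of candidate $U$ and the inclusion probability are both of order $2^{\pm k(2n-k)}$, so any slack there would swamp the gain. The saving must come entirely from the isotropic constraint in \Cref{lem:isotropic_count}, and I need to check that its proof indeed gives a $2^{-\Omega(n^2)}$ fraction for dimension $k=\Theta(n)$. The product $\prod_j 2^{-(j-1)}$ appearing there does give this. The hypothesis $m\ge \alpha n$ is not needed beyond guaranteeing that the event is nontrivial.
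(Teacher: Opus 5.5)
Your proposal is correct and follows the paper's proof essentially verbatim: a union bound over $k$-dimensional isotropic $W$ with $W\subseteq V\subseteq W^\perp$, where \Cref{lem:subspace_count} bounds both the sandwich count and the total count of $m$-dimensional subspaces, and \Cref{lem:isotropic_count} supplies the $2^{-\delta' n^2}$ saving once the $2^{\pm k(2n-k)}$ factors cancel. One tiny nit: the case $m=2n-k$ is not vacuous, since $V=U^\perp$ is possible there, but your computation already covers it, so you should only set aside $m>2n-k$.
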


\begin{proof}
Let $k = \lceil \epsilon n \rceil$.
A vector space $V$ has radical of dimension at least $\epsilon n$ if and only if there is some isotropic subspace $W$ with dimension $k$ such that $W \subseteq V \subseteq W^\perp$. 
Indeed, a subspace $W \subseteq V$ is contained in the radical of $V$ if and only if $V \subseteq W^\perp$, i.e. every vector in $W$ is orthogonal to every vector in $V$. 

The number $N_k^{\text{iso}}$ of isotropic vector spaces of dimension $k$, by \cref{lem:isotropic_count} and \cref{lem:subspace_count}, satisfies $N_k^{\text{iso}} \leq C_22^{k(2n - k) - \delta' n^2} $ for some $\delta' > 0$. 
Meanwhile, for some fixed isotropic $W$ of dimension $k$, let $M$ be the number of vector spaces $V$ for which $W \subseteq V \subseteq W^\perp$. 
By \cref{lem:subspace_count}, since $\dim(W^\perp) = 2n - k$, $ M \leq C_22^{(m - k)(2n - m-k) }$.
It follows that the number of vector spaces $V$ of dimension $m$ with radical of dimension at least $k$ has upper bound 
\begin{equation}
M N_k^{\text{iso}} \leq C_2^2 2^{k(2n - k) + (m - k)(2n - m-k) - \delta' n^2}.
\end{equation}
Meanwhile, the total number of vector spaces $N_m$ of dimension $m$ satisfies the lower bound $C_1 2^{m(2n-m)}\leq N_m$. 
Hence, the fraction of vector spaces $V$ of dimension $m$ with radical of dimension at least $k$ is at most 
\begin{align}
\frac{M N_k^{\text{iso}}}{N_m} &\leq \frac{C_2^2}{C_1} 2^{k(2n - k) + (m - k)(2n - m-k) - m(2n-m) - \delta 'n^2}\\
&= D2^{ - \delta' n^2},
\end{align}
where $D := \frac{C_2^2}{C_1}$. 
This shows that a randomly sampled $V$ of dimension $m$ has a radical of dimension at least $\epsilon n$ with probability at most $D2^{-\delta' n^2}$. Replacing $\delta'$ with an appropriate $\d \geq \d'$, we may remove this constant $D$ in the inequality to obtain the desired result. 
\end{proof}

With these lemmas, we can now prove a technical result that will be instrumental for \Cref{thm:no_randomized_symplpn}. 
\begin{lemma}[Symmetrized product of random matrices is nearly full rank]
\label{lemma:random_product_rank}
Let $\epsilon > 0$ be a constant. 
Say that $\vec C, \vec B$ are independent, uniformly random $m \times n$ matrices, such that $\a n \leq m \leq n$ for some constant $\a \in (1/2, 1)$. 
Then for sufficiently large $n$, there exists a constant $\d > 0$ such that
\begin{align}
    \Pr[\operatorname{rank}(\vec C^\intercal\vec B + \vec B^\intercal\vec C) \leq (1-\epsilon)n] \leq 2^{-\d n^2} .
\end{align}
\end{lemma}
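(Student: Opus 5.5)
The plan is to read $\vec C^\intercal\vec B+\vec B^\intercal\vec C$ as a symplectic Gram matrix and then apply \Cref{lem:radical_count}. Stack the two matrices into $\vec M := \begin{bmatrix}\vec C\\ \vec B\end{bmatrix}\in\Z_2^{2m\times n}$, which is a uniformly random matrix. Give $\Z_2^{2m}$ its symplectic form, with $\vec C$ playing the role of the first $m$ coordinates and $\vec B$ the last $m$. Then the $(i,j)$ entry of $\vec G:=\vec C^\intercal\vec B+\vec B^\intercal\vec C$ is exactly $\vec m_i\odot\vec m_j$, where $\vec m_i$ are the columns of $\vec M$. So $\vec G$ is the Gram matrix of the symplectic form restricted to $V:=\operatorname{im}(\vec M)$. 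A standard linear-algebra fact then gives
\begin{align}
\operatorname{rank}(\vec G)=\dim V-\dim\operatorname{rad}(V).
\end{align}
This holds because $\vec G=\vec M^\intercal\vec J\vec M$, whose kernel is the preimage under $\vec M$ of $\operatorname{rad}(V)$. Consequently, if $\operatorname{rank}(\vec G)\le(1-\epsilon)n$, then either $\dim V\le(1-\epsilon/2)n$ or $\dim\operatorname{rad}(V)\ge \epsilon n/2$. I would bound these two events separately.

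\emph{Rank deficiency of $\vec M$.} The matrix $\vec M$ is uniform of size $2m\times n$ with $2m\ge 2\alpha n>n$, since $\alpha>1/2$. The probability that its column space has dimension at most $n-t$ is bounded by the number of codimension-$t$ subspaces of $\Z_2^n$ (the possible kernels, viewing rows as vectors) times the chance that all $2m$ rows land in one such subspace. This gives roughly $2^{t(n-t)}\cdot 2^{-2mt}=2^{-t(2m-n+t)}$. With $t=\epsilon n/2$ and $2m-n\ge(2\alpha-1)n$, the bound is $2^{-\Omega(n^2)}$. This is exactly where the hypothesis $\alpha>1/2$ is used.

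\emph{Large radical.} Condition on $\dim V=d$ for a fixed $d\ge(1-\epsilon/2)n$. By the invariance of the uniform distribution of $\vec M$ under $\mathrm{GL}_{2m}(\Z_2)$, $V$ is then a uniformly random $d$-dimensional subspace of $\Z_2^{2m}$. Apply \Cref{lem:radical_count} with ambient space $\Z_2^{2m}$, that is, with $m$ in place of $n$. Since $m\le n$, we have $d\ge(1-\epsilon/2)m$, which is a constant fraction of $m$. The threshold $\epsilon n/2\ge(\epsilon/2)m$ is also a constant fraction of $m$. The lemma therefore yields probability at most $2^{-\delta' m^2}\le 2^{-\delta'\alpha^2 n^2}$. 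Summing over the at most $n$ values of $d$ and adding the first bound gives $2^{-\delta n^2}$ for some $\delta>0$ and all sufficiently large $n$.

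The main obstacle I expect is verifying the Gram-rank identity cleanly when $\vec M$ is not of full column rank, together with the conditional uniformity of $V$. For the identity, I would argue directly that $\vec G\vec x=0$ if and only if $\vec M\vec x\in\operatorname{rad}(V)$, and then count dimensions using $\ker \vec M\subseteq\ker\vec G$. For the uniformity, it suffices that the law of $\vec M$ is invariant under left multiplication by invertible matrices. A minor side point is that \Cref{lem:radical_count} is stated with an equality where an upper bound is what is actually proved and what we need.
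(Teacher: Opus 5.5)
Your proposal is correct and follows essentially the same route as the paper. Both arguments stack $\vec C$ and $\vec B$ into a uniform $2m\times n$ matrix, read $\vec C^\intercal\vec B+\vec B^\intercal\vec C$ as the symplectic Gram matrix of its image $V$ (whose kernel is the preimage of $\operatorname{rad}(V)$), and then bound separately the rank deficiency of the stacked matrix and the event $\dim\operatorname{rad}(V)\ge\epsilon n/2$ via \Cref{lem:radical_count}. Your sub-steps are in fact slightly more careful than the paper's: the subspace-counting bound $2^{-t(2m-n+t)}$ avoids the missing union bound over dependent columns, and you explicitly rescale \Cref{lem:radical_count} to the ambient space $\Z_2^{2m}$ and justify conditional uniformity by $\mathrm{GL}_{2m}$-invariance, both of which the paper glosses over.
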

\begin{proof}
Express
\begin{equation} \label{eq:description_of_anticommutant}
\vec C^\intercal\vec B + \vec B^\intercal\vec C = 
\begin{bmatrix} \vec C^\intercal & \vec B^\intercal \end{bmatrix}\begin{bmatrix} 0 & \vec I \\ \vec I & 0 \end{bmatrix}\begin{bmatrix} \vec C \\ \vec B \end{bmatrix}
\end{equation}
We first may note that $\vec T := \begin{bmatrix} \vec C \\ \vec B \end{bmatrix} \in \mathbb{Z}_{2}^{2m \times n}$ is uniformly random, and therefore since $2m \geq n$ it has rank at least $\left(1 - \frac{\epsilon}{2}\right) n$ with probability $1 - 2^{-\delta_0 n^2}$ for some $\delta_0 > 0$.
To see this, note that in order for $\vec{T}$ to have rank less than $(1-\frac{\epsilon}{2})n$, then there must be at least $\frac{\epsilon n}{2}$ columns that are contained in the span of the previous columns (whose rank is bounded by $(1-\frac{\epsilon}{2})n$). 
Each of these $\frac{\e n}{2}$ occurrences happens with probability at most $2^{-\epsilon n/2}$, implying that the rank can be less than $(1-\frac{\epsilon}{2})n$ with probability at most $(2^{-\e n /2})^{\e n /2} = 2^{-\epsilon^2 n^2 /4}$. 
Hence, setting $\delta_0 = \frac{\epsilon^2}{4}$ yields the desired bound. 

Therefore, with this probability, $W := \operatorname{im}(\vec T) \subseteq \mathbb{Z}_2^{2m_0}$ has dimension at least $\left(1 - \frac{\epsilon}{2}\right)n$. 
Let $t = \dim(W)$, and note that $W$ is a uniformly random subspace with dimension $t$.
We must now bound the dimension of $\operatorname{rad} (W)$, because this dimension constrains the rank of $\vec C^\intercal \vec B + \vec B^\intercal \vec C$. 
Indeed, say that some vector $\vec u$ satisfies $(\vec C^\intercal \vec B + \vec B^\intercal \vec C) \vec u = 0$. 
Then, by Eqn.~\eqref{eq:description_of_anticommutant} this is equivalent to $(\vec T \vec u') \odot (\vec T \vec u) = 0$ for any $\vec u'$, i.e. that $\vec T \vec u \in \operatorname{rad}(W)$. 
It follows that with probability at least $1 - 2^{-\delta_0n^2}$,
\begin{align}
\text{dim}(\text{ker}(\vec{C}^\intercal \vec B + \vec B^\intercal \vec C)) &= \text{dim}(\text{rad}(W)) + \text{dim}(\text{ker}(\vec T))\\ & \leq \text{dim}(\text{rad}(W)) + \frac{\epsilon}{2} n ,
\end{align}
by the rank-nullity theorem.
By \Cref{lem:radical_count}, there exists some $\delta_1$ for which the probability that $\dim(\operatorname{rad}(V)) \geq \frac{\epsilon}{2}n$ is at most $2^{-\delta_1 n^2}$. 
Choosing $\delta < \min(\delta_1, \delta_2)$, we have that $2^{-\delta_0 n^2} + 2^{-\delta_1n^2} \leq 2^{-\delta n^2}$ for sufficiently large $n$ as claimed.
\end{proof}

With the tools in place, we may now prove \Cref{thm:no_randomized_symplpn}. 
Before we give the proof, we recall the existence of a standard, useful basis for any subspace $V \subseteq \Z_2^{2n}$. 
The standard symplectic basis $\vec e_1, \dots, \vec e_n, \vec f_1, \dots, \vec f_n$ of $\Z_2^{2n}$ has convenient orthogonality relations: $\vec e_i \odot \vec e_j = 0$, $\vec f_i \odot \vec f_j = 0$, and $\vec e_i \odot \vec f_j = \delta_{ij}$. 
For a general $V$, there is a similar convenient basis. 
\begin{lemma}[Symplectic basis of a subspace] \label{lem:symplectic_subspace_basis}
Let $V \subseteq \Z_2^{2n}$ be a subspace. 
There exists a basis $\vec u_1, \dots, \vec u_k, \vec v_1, \dots, \vec v_l, \vec w_1, \dots, \vec w_l$ such that
\begin{enumerate}
\item[(1) ] $\vec{u}_i \in V^\perp$ for $i = 1, \dots, k$.
\item [(2) ]$\vec{v}_i \odot \vec{v}_j = 0$ for $i, j = 1, \dots, l$.
\item[(3) ] $\vec{w}_i \odot \vec{w}_j = 0$ for $i, j = 1, \dots, l$.
\item[(4) ] $\vec{v}_i \odot \vec{w}_j = \delta_{ij}$ for $i, j = 1, \dots, l$. 
\end{enumerate}
\end{lemma}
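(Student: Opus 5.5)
We will prove \Cref{lem:symplectic_subspace_basis} by a symplectic Gram--Schmidt procedure, inducting on $\dim V$. The plan is to first split off the radical, and then peel off hyperbolic pairs from a complement of the radical, where the restricted form is nondegenerate.

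\textbf{Step 1 (radical).} Let $\vec u_1, \dots, \vec u_k$ be any basis of $\operatorname{rad}(V) = V \cap V^\perp$. These vectors lie in $V^\perp$, so they satisfy condition (1). Choose any complement $V'$ with $V = \operatorname{rad}(V) \oplus V'$.

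\textbf{Step 2 (nondegeneracy of $V'$).} We claim the symplectic form restricted to $V'$ is nondegenerate. Suppose $\vec x \in V'$ is orthogonal to all of $V'$. Since $\vec x$ is also orthogonal to $\operatorname{rad}(V)$, it is orthogonal to all of $V$. Hence $\vec x \in \operatorname{rad}(V) \cap V' = \{\vec 0\}$.

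\textbf{Step 3 (extracting hyperbolic pairs).} Induct on $\dim V'$. If $V' \neq 0$, pick any nonzero $\vec v_1 \in V'$. By nondegeneracy there is $\vec w_1 \in V'$ with $\vec v_1 \odot \vec w_1 = 1$. Over $\Z_2$ every vector is self-orthogonal, since $\vec a\cdot\vec b + \vec a \cdot \vec b = 0$, so $\vec v_1 \odot \vec v_1 = \vec w_1 \odot \vec w_1 = 0$ holds automatically. Set $P = \operatorname{span}(\vec v_1, \vec w_1)$ and $V'' = V' \cap P^\perp$.

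We must check that $V' = P \oplus V''$. For any $\vec x \in V'$, the projection
\begin{align}
\vec x - (\vec x \odot \vec w_1)\vec v_1 - (\vec x \odot \vec v_1)\vec w_1
\end{align}
lies in $V''$. Also $P \cap P^\perp = 0$, because the form restricted to $P$ has Gram matrix $\begin{bmatrix}0&1\\1&0\end{bmatrix}$, which is invertible. Finally, the form on $V''$ is again nondegenerate: an element of $V''$ orthogonal to $V''$ is also orthogonal to $P$, hence to all of $V'$, and so it is zero. Applying the induction hypothesis to $V''$ yields the remaining pairs $\vec v_i, \vec w_i$. These pairs are mutually orthogonal to $\vec v_1, \vec w_1$ by construction, which gives conditions (2)--(4).

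\textbf{Step 4 (assembly).} The vectors $\vec u_i, \vec v_i, \vec w_i$ together form a basis of $V$, because $V = \operatorname{rad}(V) \oplus P_1 \oplus \cdots \oplus P_l$. A side consequence is that $\dim V' = 2l$ is even.

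The only subtle point is characteristic $2$. Every vector is automatically isotropic, so conditions (2) and (3) on the diagonal hold for free. The argument therefore only needs nondegeneracy to produce $\vec w_1$; it never needs to avoid isotropic vectors. Apart from this, the proof is routine; the main thing to get right is the direct-sum bookkeeping in Step 3.
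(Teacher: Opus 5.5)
Your proof is correct and complete. The paper gives no proof to compare against: it only ``recalls'' the lemma as a standard fact before the proof of \Cref{thm:no_randomized_symplpn}. Your argument is the standard symplectic Gram--Schmidt one, and it fills that gap. The steps check out:
\begin{itemize}
\item You split off $\operatorname{rad}(V)$ and show the form is nondegenerate on a complement $V'$.
\item You peel off hyperbolic pairs $P=\operatorname{span}(\vec v_1,\vec w_1)$ and show $V' = P \oplus (V'\cap P^\perp)$. Your projection formula is right, as is the check $P\cap P^\perp=\{\vec 0\}$.
\item Nondegeneracy passes to $V'\cap P^\perp$, as you argue.
\item Your remark that the form is alternating over $\Z_2$ is what makes the diagonal cases of (2)--(3) automatic.
\end{itemize}

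Your construction also shows that the $\vec u_i$ span exactly $\operatorname{rad}(V)$. This is the property the paper invokes right after the lemma. It in fact holds for any basis satisfying (1)--(4): pairing an element of $\operatorname{rad}(V)$ against each $\vec w_j$ and $\vec v_j$ kills its $\vec v$- and $\vec w$-coefficients.

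One step lies outside the lemma as stated but is needed for the paper's later use. That use requires a symplectic matrix $\vec S$ sending standard basis vectors onto $\vec v_1,\dots,\vec v_l,\vec u_1,\dots,\vec u_k,\vec w_1,\dots,\vec w_l$. For $\vec S$ to exist, this basis must be extended to a symplectic basis of all of $\Z_2^{2n}$, by choosing partners for the $\vec u_i$ (Witt-type extension). Neither the lemma nor your proof provides this, and you were not asked to prove it.
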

This basis is almost precisely analogous to $(\vec e_i, \vec f_i)$ in the case of $\Z_2^{2n}$---the $\vec v_i$ correspond to the $\vec e_i$, while the $\vec w_i$ correspond to the $\vec f_i$. 
The additional vectors $\vec u_i$ give a basis for $\text{rad}(V)$.

\begin{proof}[Proof of \Cref{thm:no_randomized_symplpn}]
Let $\mathcal{S}$ denote the set of $2n \times n$ isotropic matrices, and let $F_\vec{B} = \{\vec{BA} \,|\, \vec{A} \in \mathcal{S} \}$. 
We will start by showing that for some constant $d' > 0$, for any $\vec{B}$, $|F_{\vec{B}}| \leq 2^{mn - d'n^2}$.
This implies that $H(\vec{BA}) \leq mn - d'n^2$, and with a short additional argument we will conclude that $H(\vec{BA}) \leq (1-d)mn$ for some $d > 0$. 

First, note that $|F_{\vec{B}}| = |F_{\vec{TBS}}|$, where $\vec{T}$ is any invertible $m \times m$ matrix, and $\vec{S} \in \mathbb{Z}_2^{2n \times 2n}$ is any matrix that preserves the symplectic inner product. 
Indeed, consider the map that takes $\vec{M} \in F_{\vec{B}}$ to $\vec{TM}$. 
Writing $\vec{M} = \vec{BA}$, $\vec{TM} = (\vec{TBS})(\vec{S}^{-1}\vec{A})$, and hence $\vec{TM} \in F_{\vec{TBS}}$ because $\vec S^{-1} \vec A$ is an isotropic matrix. 
Likewise, the inverse map $\vec M \mapsto \vec T^{-1} \vec M$ for $\vec M = (\vec{TBS}) \vec A$ also places $\vec T^{-1} \vec M \in F_{\vec B}$.
Thus, $|F_{\vec{B}}| = |F_{\vec{TBS}}|$.

Let $W := \ker(\vec B) \subseteq \mathbb{Z}_2^{2n}$. 
By \Cref{lem:symplectic_subspace_basis}, $W$ has a basis of the form $\vec u_1, \dots, \vec u_k, \vec v_1, \dots, \vec v_{l}, \vec w_1, \dots, \vec w_l$, where $\vec v_i \odot \vec w_j = \delta_{ij}$, $\vec v_i \odot \vec v_j = 0$, $\vec w_i \odot \vec w_j = 0$, and $\vec u_i \in W^\perp$. 
Let $\vec{S}$ be a symplectic matrix which sends the standard symplectic basis vectors $\vec e_1, \dots, \vec e_{l+k}, \vec f_1, \dots, \vec f_l$ to $\vec v_1, \dots, \vec v_l, \vec u_1, \dots, \vec u_k, \vec w_1, \dots, \vec w_l$. 
Then, $\ker({\vec{BS}})$ is spanned by $\vec e_1, \dots, \vec e_{l+k}, \vec f_1, \dots, \vec f_l$, while $\operatorname{im}(\vec{BS})$ is spanned by the images of $\vec e_{l+k+1}, \dots, \vec e_n, \vec f_{l+1}, \dots, \vec f_n$. 
By choosing an appropriate invertible $\vec{T} \in \Z_2^m$, we can construct a $\vec P = \vec{TBS}$ which is a projector onto this latter set of coordinates. As a consequence, $|F_\vec{B}| = |F_{\vec{P}}|$. 

It therefore suffices to prove that for any projection $\vec{P}$, $|F_{\vec{P}}| \leq 2^{mn - d'n^2}$. 
Say that $\vec{P}$ is a projection onto the coordinates $\vec e_{l+k+1}, \dots, \vec e_n, \vec f_{l+1}, \dots, \vec f_n$. 
In particular, $m = (n-l) + (n-l-k) = 2n - 2l - k$. 
By our bound on $m$, it follows that $2n - 2l - k \geq c n$, i.e. \begin{align}
\label{eq:m_n_k_l_bound}
    2l + k \leq (2-c)n .
\end{align} 
Let $\vec{M}$ be a uniformly random element of $\mathbb{Z}_2^{m \times n}$.
We will consider the conditions under which there exists $\vec{N} \in \CS$ for which $\vec{M} = \vec{PN}$. 
We may decompose $\vec{N}$ as
\begin{equation}
\vec{N} = \begin{bmatrix}\vec{B}_1 \\ \vec{B}_2 \\ \vec{B}_3 \\ \vec{C}_1 \\ \vec{C}_2 \\ \vec{C}_3,\end{bmatrix}
\end{equation}
where $\vec{B}_1, \vec{C}_1 \in \mathbb{Z}_2^{l \times n}$, $\vec{B}_2, \vec{C}_2 \in \mathbb{Z}_2^{k \times n}$, and $\vec{B}_3, \vec{C}_3 \in \mathbb{Z}_2^{(n-l-k) \times n}$.
The only condition on $\vec{N}$ is that its columns are symplectically orthogonal, which translates to the requirement that $\vec D := \vec{C}_1^\intercal \vec{B}_1 + \vec{C}_2^\intercal \vec{B}_2 + \vec{C}_3^\intercal \vec{B}_3$ is symmetric. 
Then, 
\begin{equation}
\vec{M} = \vec{PN} = \begin{bmatrix}\vec{B}_3 \\ \vec{C}_2 \\ \vec{C}_3\end{bmatrix}. 
\end{equation}
Hence, $\vec{M} \in F_{\vec{P}}$ if and only if there is some choice of $\vec{B}_1, \vec{B}_2, \vec{C}_1$ for which $\vec{D}$ is symmetric. 
Let $\vec{R}^\intercal \in \mathbb{Z}_2^{n \times n}$ be an invertible matrix chosen so that $\operatorname{im}(\vec R^\intercal\vec C_2^\intercal)$ is contained in the last $k$ coordinate vectors. 
This is possible, since $\vec C_2^\intercal \in \mathbb{Z}_2^{n \times k}$ has rank at most $k$. 
Then, define $\vec D' := \vec{R^\intercal D R} = \vec C_1'^\intercal\vec B_1' + \vec C_2'^\intercal \vec B_2' + \vec C_3'^\intercal \vec B_3'$, where $\vec C_i' =\vec C_i \vec R$ and $ \vec B_i' = \vec B_i \vec R$. 
$\vec D'$ is symmetric if and only if $\vec D$ is symmetric, and so we need to select $\vec B_1', \vec B_2'$, and $\vec C_1'$ for which $\vec D'$ is symmetric. 

By construction of $\vec R^\intercal$, $\vec{C}_2'^\intercal \vec{B}_2' \in \Z_2^{n \times n}$ is only nonzero in the last $k$ rows, since $\operatorname{im}(\vec{C}_2'^\intercal \vec{B}_2') \subseteq \operatorname{im}(\vec R^{\intercal})$. 
It immediately follows that the top $(n-k) \times (n-k)$ block of $\vec C_1'^\intercal\vec B_1' + \vec C_3'^\intercal \vec B_3'$ must be symmetric.
Equivalently, defining $\widetilde{\vec C}_i$ to be the first $n-k$ columns of $\vec C_i'$ and $\widetilde{\vec B}_i$ as the first $n-k$ columns of $\vec B_i'$, 
\begin{equation}
(\widetilde{\vec C}_1^\intercal\widetilde{\vec B}_1 + \widetilde{\vec B}_1^\intercal\widetilde{\vec C}_1 )+(\widetilde{\vec C}_3^\intercal\widetilde{\vec B}_3 + \widetilde{\vec B}_3^\intercal\widetilde{\vec C}_3 ) = \vec 0.
\end{equation}
Notice that for arbitrary $\widetilde{\vec{C}}_1, \widetilde{\vec{B}}_1$, $\widetilde{\vec C}_1^\intercal\widetilde{\vec B}_1 + \widetilde{\vec B}_1^\intercal\widetilde{\vec C}_1$ is the sum of two matrices with rank $l$, and hence must have rank at most $2l$. 

Meanwhile, since $\vec{M}$ is uniformly random, $\widetilde{\vec B}_3$ and $\widetilde{\vec{C}}_3$ are uniformly random as well.
Then we claim that $\widetilde{\vec B}_3, \widetilde{\vec{C}}_3 \in \Z_2^{(n-l-k) \times (n-k)}$ satisfy the conditions of \Cref{lemma:random_product_rank}.
The relevant condition to establish is that 
\begin{equation}
\a(n-k) \leq n-l-k \leq (n-k),
\end{equation}
for some $\a \in (0, 1)$. 
The upper bound on $n-l-k$ is immediate. 
Meanwhile, by Eqn.~\eqref{eq:m_n_k_l_bound}, we write that $l \leq \frac{(2-c)n - k}{2} \leq \frac{(2-c)(n-k)}{2}$, since $2-c \leq 1$.
Thus,
\begin{align}
n - l - k &\geq  (n-k) - (n-k)\frac{(2-c)}{2} = \left(\frac{c}{2}\right) (n-k).
\end{align}
By assumption, $c > 1$, which gives the desired lower bound with $\a := \frac{c}{2}$.

Now, applying the lemma with $\epsilon := c-1$, $\widetilde{\vec C}_3^\intercal\widetilde{\vec B}_3 + \widetilde{\vec B}_3^\intercal\widetilde{\vec C}_3$ has rank strictly larger than $(1 - \epsilon)(n-k) = (2-c)(n-k)$ with probability $1 - 2^{-\delta (n-k)^2}$, for some $\delta>0$.
Using Eqn.~\eqref{eq:m_n_k_l_bound}, we see that $k \leq (2-c)n$ or $n-k \geq (c-1)n$, implying that $2^{-\delta (n-k)^2} \leq 2^{-\delta (c-1)^2n^2}$. 
Hence this rank bound holds with probability at least $1 - 2^{-\delta (c-1)^2n^2}$. 
Furthermore, 
\begin{align}
(2-c)(n-k)& > (2-c)n - k \geq 2l.
\end{align}
Consequently, the rank of
\begin{equation}
\widetilde{\vec C}_1^\intercal\widetilde{\vec B}_1 + \widetilde{\vec B}_1^\intercal\widetilde{\vec C}_1,
\end{equation}
for \emph{any} choice of $\widetilde{\vec{C}}_1, \widetilde{\vec{B}}_1$ (which is $\leq 2l$) is strictly less than that of 
\begin{equation}
\widetilde{\vec C}_3^\intercal\widetilde{\vec B}_3 + \widetilde{\vec B}_3^\intercal\widetilde{\vec C}_3
\end{equation}
with probability $1 - 2^{-\delta(c-1)^2 n^2}$. 
Conditioned on this event, \begin{align}
    (\widetilde{\vec C}_1^\intercal\widetilde{\vec B}_1 + \widetilde{\vec B}_1^\intercal\widetilde{\vec C}_1) + (\widetilde{\vec C}_3^\intercal\widetilde{\vec B}_3 + \widetilde{\vec B}_3^\intercal\widetilde{\vec C}_3) \neq \vec 0 ,
\end{align}
since the first term must have the same rank of the second term to cancel it out.
By the previous argument, therefore, $\vec M \notin F_{\vec P}$.
Now, set $d' = \delta(c-1)^2$.
Then at most a $2^{-d' n^2}$ fraction of matrices $\vec{M} \in \mathbb{Z}_2^{m \times n}$ lie in $F_\vec{P}$. 
Thus, $|F_\vec{P}| \leq 2^{mn - d'n^2}$.
It follows that $H(\vec{BA}) \leq mn - d'n^2$.

To complete the proof, we observe that $H(\vec A) \leq 2n^2$, so by the data processing inequality, $H(\vec{BA}) \leq 2n^2$.
Then, the bound \begin{align}
    H(\vec{BA}) \leq \left(1 - \frac{\min(d', 1)}{3} \right) m n
\end{align} 
always holds, since when $m \leq 3n$, $mn - d'n^2 \leq mn - \frac{d'}{3}mn$ and when $m \geq 3n$, $2n^2 \leq mn - \frac{1}{3}mn$. 
Setting $d = \frac{\min(d', 1)}{3}$ completes the proof. 
\end{proof}

Before proving \Cref{thm:symplpn_error_blowup}, we record a simple result about the distribution of $\vec{b} \cdot \vec{e}$, where $\vec{b}$ is any fixed vector and $\vec{e} \sim \mathcal{D}^{\otimes n}_p$. 
\begin{lemma}\label{lem:depolarizing_weight}
Let $\vec{b} \in \Z_2^{2n}$ be a fixed vector and $\vec{e} \sim \mathcal{D}^{\otimes n}_p$ for $p \leq \frac{3}{4}$. 
Then $\vec{b} \cdot \vec{e} \sim \Ber(q)$, where $\eta(|\vec{b}|, p) \leq q \leq \frac{1}{2}$ for
\begin{equation}
\eta(w, p) = \frac{1 - \left(1 - \frac{4}{3}p\right)^{\frac{w}{2}}}{2}.
\end{equation}
Note that $\eta(w, p)$ is non-negative, and is increasing in both $w$ and $p$.
\end{lemma}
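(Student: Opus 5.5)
Since $\vec b$ is fixed and the coordinate pairs $(e_j, e_{n+j})$ are independent across $j$, I will write $\vec b \cdot \vec e = \sum_{j=1}^n X_j \pmod 2$ with $X_j := b_j e_j + b_{n+j} e_{n+j}$. The $X_j$ are independent Bernoulli variables, so it suffices to compute each bias $\mathbb{E}[(-1)^{X_j}]$ and multiply. This is the piling-up lemma: if $\vec b\cdot\vec e \sim \Ber(q)$, then $1-2q = \prod_j \mathbb{E}[(-1)^{X_j}]$.

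For each pair I will split into cases according to $(b_j, b_{n+j})$. If $(b_j,b_{n+j}) = (0,0)$, then $X_j=0$ and the bias is $1$. If $(b_j,b_{n+j})$ is nonzero, then $X_j$ is a fixed nonzero linear functional of the pair $(e_j,e_{n+j})$. Under $\CD_p$ the pair is $(0,0)$ with probability $1-p$, and otherwise one of the three nonzero values each with probability $p/3$. Any nonzero functional on $\Z_2^2$ vanishes on $(0,0)$ and on exactly one of the three nonzero points. Hence $\Pr[X_j=1] = 2p/3$, and the bias is $1-\frac43 p$, which lies in $[0,1]$ because $p \le 3/4$.

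Let $s$ be the number of indices $j$ with $(b_j,b_{n+j})\neq(0,0)$, so that
\begin{align}
1-2q = \left(1-\tfrac43 p\right)^{s}.
\end{align}
Since the bias is nonnegative, $q \le 1/2$ follows immediately. For the lower bound I use $s \ge |\vec b|/2$, which holds because each nonzero pair contributes at most $2$ to the Hamming weight. Together with $0 \le 1-\frac43 p \le 1$ this gives $(1-\frac43p)^s \le (1-\frac43 p)^{|\vec b|/2}$, and therefore $q \ge \eta(|\vec b|,p)$.

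Finally, monotonicity and nonnegativity of $\eta$ follow because the base $1-\frac43 p$ lies in $[0,1]$ and decreases in $p$, while the exponent $w/2$ is nonnegative. The only point needing care is the edge case $p=3/4$, where the base is $0$; there one should use the convention $0^0=1$ for $w=0$. No step is a real obstacle: the main thing to check is the uniform count $2p/3$ across all three nonzero patterns of $(b_j,b_{n+j})$.
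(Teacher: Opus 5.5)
Your proof is correct and follows essentially the same route as the paper. You split $\vec b\cdot\vec e$ into independent per-pair terms $b_je_j+b_{n+j}e_{n+j}\sim\Ber(\tfrac23 p)$ on the $s\ge|\vec b|/2$ nonzero pairs, then apply the binomial-parity formula (your piling-up lemma) to get $q=\frac{1-(1-\frac43 p)^{s}}{2}$; your explicit check of the uniform $2p/3$ count and of the $p=3/4$ edge case just fills in what the paper calls ``direct calculation.''
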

\begin{proof}
Since $\vec e \sim \mathcal{D}_p^{\otimes n}$,
\begin{align}
    \vec{b} \cdot \vec e \;=\; \sum_{k\in T} e_k,
\end{align}
where $T$ is the set of indices where $\vec{b}$ is nonzero. For any $k\in[n]$, let
\begin{align}
    e'_k \;=\; b_ke_k \;+\; b_{n+k}e_{n+k}.
\end{align}
By direct calculation,
\begin{align}
    e_k' \sim \begin{cases}
        \Ber(0) & k, n+k \notin T , \\
        \Ber(\frac{2}{3} p) & \text{else} .
    \end{cases}
\end{align}
For $S=\{\,k \leq n \mid k\in T \text{ or } k+n\in T\,\}$, we may rewrite
\begin{align}
    \sum_{k\in T} e_k =\sum_{k\in S} e'_k.
\end{align}
These $e'_k$ are independent Bernoulli random variables with probability $q=\tfrac{2}{3}p$, and note that
\begin{align}
    |S| \;\ge\; \frac{|\vec b|}{2}.
\end{align}
Using the fact that the probability that a $\textsf{Binomial}(m, q)$ random variable is odd with probability $\frac{1}{2} - \frac{1}{2} (1 - 2q)^m$,
\begin{equation}
 \Pr\left [\sum_{k \in S} e_k' = 1 \right] = \frac{1 - \left(1 - \frac{4}{3}p\right)^{|S|}}{2},
\end{equation}
which satisfies
\begin{equation}
\frac{1 - \left(1 - \frac{4}{3}p\right)^{|\vec b|/2}}{2} \leq  \frac{1 - \left(1 - \frac{4}{3}p\right)^{|S|}}{2} \leq \frac{1}{2}
\end{equation}
as desired. 
\end{proof}

\begin{proof}[Proof of \Cref{thm:symplpn_error_blowup}]
Let \begin{align}
    \mu(\vec{B}_0) := \underset{{\vec{e}\sim \mathcal{D}_p^{\otimes n}}}{\mathbb{E}} [|\vec{B}_0\vec{e}|],
\end{align}
for any $\vec{B}_0 \in \Z_2^{m \times 2n}$. 
Let $\vec{B}$ be any random variable such that $\vec{BA}$ has negligible total variation distance from a uniformly random matrix. 
Recalling that $r := \frac{n}{m}$, let $A$ be the event that $\mu(\vec{B}) < (\frac{1-r-\delta}{2})m$. 
We will show that $\Pr[A] = \negl(n)$. 

We claim that it suffices to show that \begin{align}
    H(\vec{BA} | A) = mn - \Omega(mn) .
    \label{eq:fractionally_bounded_entropy}
\end{align}
Suppose for contradiction that $\Pr[A] = \frac{1}{\poly(n)}$ but Eqn.~\eqref{eq:fractionally_bounded_entropy} holds.
Then we may describe the probability distribution $h:\Z_2^{m \times n} \to [0, 1]$ of $\vec{BA} \in \Z_2^{m \times n}$, as 
\begin{equation}
h = p f + (1-p)g,
\end{equation}
where $p := \Pr[A] = \frac{1}{\poly(n)}$ and $H(f) = mn - \Omega(mn)$. 
From this decomposition, there is a resulting bound on entropy of 
\begin{align}
H(h) &\leq pH(f) + (1-p)H(g) + H_2(p)\\
&\leq p(mn - \Omega(mn)) + (1-p)mn + O(p\log(1/p))\\
&= mn - \Omega(pmn) + O(p\log(n))\\
&= mn - \frac{1}{\poly(n)},
\end{align}
where in the second inequality we use the fact that $H_2(p) = O(p\log(1/p))$ for $p \leq \frac{1}{2}$ (which is without loss of generality). 

However, $h$ is the probability distribution of $\vec{BA}$, and if two probability distributions have negligible total variation distance, the difference in entropy is negligible. 
Since $h$ has negligible distance from uniform by assumption, and the entropy of the uniform distribution is $mn$, this is a contradiction. 

We will now bound the entropy $H(\vec{BA} | A)$. Define a random variable $\vec \Pi \in \Z_2^{m \times m}$, which is the permutation matrix such that the rows of $\vec{B'} := \vec{\Pi B}$ are sorted in descending order by weight. 
Note that $H(\vec{\Pi B A|A, \Pi}) = H(\vec{B A|A, \Pi})$ since $\vec \Pi$ is a permutation, so that
\begin{align}
H(\vec{\Pi B A}|A) &\geq H(\vec{\Pi B A}|A, \vec{\Pi}) = H(\vec{B A}|A, \vec{\Pi})\\
&\geq H(\vec{B A}|A)-H(\vec{\Pi}|A).
\end{align}
Thus, 
\begin{align}
    H(\vec{BA} | A)&\leq H(\vec{\Pi BA} | A)+ H(\vec{\Pi}|A)\\
    &\leq H(\vec{B'A} | A)+ m\log(m).
\end{align}
Since $m = \poly(n)$, $m\log(m) = o(mn)$, it therefore suffices to show that $H(\vec{B'A} | A) = mn - \Omega(mn)$. 

Let $\vec{b}'_i$ denote the $i$th row of $\vec{B'}$. 
Then, by linearity of expectation and \Cref{lem:depolarizing_weight}, 
\begin{align}
\mu(\vec{B}') &= \sum_i \underset{\vec e \sim \CD_p^{\otimes n}}{\mathbb{E}} [\vec{b}'_i \cdot \vec{e}] \geq \sum_i \eta(|\vec{b}'_i|, p)
\end{align}
where we recall that $\eta(w, p) = \frac{1}{2} \left(1-\frac{4}{3}p\right)^{\frac{w}{2}}$. 
Because $\vec{B'e}$ and $\vec{Be}$ have exactly the same weight, 
\begin{align}
\mu(\vec{B}) &= \mu(\vec{B'}) . 
\end{align}
Hence, the event $A$ is also the event that $\mu(\vec{B'}) < (\frac{1-r- \delta}{2})m$. 
% Next, $\Pr[B^c] \leq \Pr[A^c]$ (equivalently, $\Pr[A] \leq \Pr[B]$), where $B^c$ is the event that 
% \begin{equation}
% \sum_i \eta(|\vec{b}'_i|, p) \geq \left(\frac{1-r- \delta}{2}\right)m. 
% \end{equation}
The use of $\vec \Pi$ to sort $\vec B$ into $\vec B'$ is helpful because it will allow us to split the entropy analysis into two cases: for the first few rows of $\vec B'$, the entropy is large and we will not be able to give a non-trivial bound, but for all remaining rows, we will show that their weight is relatively small, and thus have a much lower entropy.
Define this ``cutoff'' row to be $k = \lceil m(1-r - \frac{\delta}{2}) \rceil$.
We claim that for any $d_0 > 0$, for sufficiently large $n$, $\Pr_{\vec B}[|\vec{b}'_k| \geq d_0 n \,|\, A] = 0$. 
This is because if $|\vec{b}'_k| \geq d_0 n$, then 
\begin{align}
\sum_{i} \eta(|\vec{b}'_i|, p) &\geq \sum_{i = 1}^k \eta(\vec{b}'_k, p) \geq k\eta(d_0n, p)\\
&= k \left(\frac{1}{2}-\frac{\left(1 - \omega\left(\frac{1}{n}\right)\right)^{d_0 n}}{2}\right) = \frac{k}{2} - o(k) .
\end{align}
Note that $\frac{k}{2} > \left(\frac{1-r }{2}- \frac{\delta}{4}\right)m$, so this immediately implies that for sufficiently large $n$, $\mu(\vec B) \geq \frac{k}{2} - o(k)> \left(\frac{1-r - \delta}{2}\right)m$, contradicting the conditioned event $A$. 
Hence, for sufficiently large $n$, $|\vec{b}'_k| < d_0n$, and by construction $|\vec{b}'_j| < d_0n$ for all $j \geq k$.
Now, we consider the $m-k$ bottom rows, which all satisfy this weight bound. 
\begin{align}
m-k &\geq m\left(r+\frac{\delta}{2}\right) - 1 \gtrsim m\left(r+\frac{\delta}{4}\right) \geq n\left(1 + \frac{c\delta}{4}\right),
\end{align}
where $\gtrsim$ indicates that inequality holds for sufficiently large $n$.
If $\vec{B}'_1$ denotes the first $k$ rows of $\vec{B}'$ and $\vec{B}'_2$ denotes the remaining $m-k$ rows, then by \cref{thm:no_randomized_symplpn}, 
\begin{align}
H(\vec{B}'_2\vec A|A) & \leq H((\vec B'_2, \vec B'_2 \vec A) |A) = H(\vec B_2' \vec A | \vec B_2') + H(\vec B_2' | A) \\
& \leq (m-k)n - d(m-k)n + H(\vec{B}_2'|A)\\
&\leq (m-k)n - d\left(r + \frac{\delta}{4}\right)mn + H(\vec{B}_2'|A)
\end{align}
for some $d > 0$, that depends on the constant $\frac{c\delta}{4}$. 

Each row of $\vec{B}_2'$ has weight at most $d_0 n$.
Thus, the entropy of each row is at most the entropy of a uniformly random length-$2n$ bitstring of weight $d_0 n$, which is $\log \binom{2n}{d_0 n} \leq n H_2(d_0/2)$; the entropy overall of $\vec{B}_2'$ is at most the entropy of the sum of rows, so \begin{align}
    H(\vec{B}_2'|A) \leq (m-k) \log \binom{2n}{d_0 n} \leq m \log \binom{2n}{d_0 n} \leq m n H_2 \left( \frac{d_0}{2} \right) .
\end{align}
Choose $d_0$ sufficiently small so that $H_2(d_0/2) \leq d'/2$, where $d' := d(r + \delta/2)$. 
Then
\begin{equation}
H(\vec{B}_2'|A) \leq \frac{d'}{2}mn.
\end{equation}
As a consequence, we obtain that $H(\vec{B}'_2\vec A|A) \leq (m-k)n - \frac{d'}{2}mn$.
Meanwhile, by a trivial dimension bound, $H(\vec{B}'_1\vec{A} | A) \leq kn$.
Together, these bounds imply that
\begin{align}
H(\vec{B'A} | A) &\leq H(\vec{B}'_1\vec{A} | A) + H(\vec{B}'_2\vec{A} | A)\\
&\leq mn - \frac{d'}{2}mn. 
\end{align}
Hence, as desired, we have that $H(\vec{B'A} | A) = mn - \Omega(mn)$, and as a result $\Pr[A] = \negl(n)$. 
\end{proof}

\end{document}